\documentclass[11pt,letterpaper]{article}
\usepackage{jheppub}
\usepackage{comment,xcolor}
\usepackage{amsmath}
\usepackage{mathtools}
\usepackage{amssymb}
\usepackage{amsfonts}
\usepackage{amsthm}
\usepackage{mathrsfs}
\usepackage{footnote}
\usepackage{graphicx} 
\usepackage{subcaption}
\usepackage{stmaryrd}
\usepackage[colorlinks=true,linkcolor=blue,citecolor=blue]{hyperref}

\providecommand{\customgenericname}{}
\newcommand{\newcustomtheorem}[2]{%
  \newenvironment{#1}[1]
  {%
   \renewcommand\customgenericname{#2}%
   \renewcommand\theinnercustomgeneric{##1}%
   \innercustomgeneric
  }
  {\endinnercustomgeneric}
}

\newcustomtheorem{customthm}{Theorem}

\newtheorem{thm}{Theorem}
\newtheorem{lem}{Lemma}

\newtheorem{cl}{Claim}
\newtheorem{qn}{Question}
\newtheorem{pr}{Proposition}

\newcommand{\dd}{\mathrm{d}}

\newcommand{\ket}[1]{| #1 \rangle}
\newcommand{\bra}[1]{\langle #1 |}
\newcommand{\braket}[2]{{\langle #1 \mid #2 \rangle}}
\newcommand{\ketbra}[2]{{|#1\rangle\hspace{-0.28em}\langle #2|}}

\def\eps{\varepsilon}
\def\O{\mathcal{O}}
\def\h{\mathcal{H}}

\newcommand*{\tr}{\mathrm{Tr}}

\title{The black hole information puzzle and the quantum de Finetti theorem}

\author{Renato Renner and}
\author{Jinzhao Wang }
\affiliation{\small \it Institute for Theoretical Physics, ETH 8093 Z\"urich, Switzerland}
\emailAdd{renner@ethz.ch}
\emailAdd{jinzwang@ethz.ch}

\abstract{
The black hole information puzzle arises from a discrepancy between conclusions drawn from general relativity and quantum theory about the nature of the radiation emitted by a black hole.  According to Hawking's original argument, the radiation is thermal and its entropy thus increases monotonically as the black hole evaporates. Conversely, due to the reversibility of time evolution according to quantum theory, the radiation entropy should start to decrease after a certain time, as predicted by the Page curve. This decrease has been confirmed by new calculations based on the replica trick, which also exhibit its geometrical origin: spacetime wormholes that form between the replicas.  Here we analyse the discrepancy between these and Hawking's original conclusions from a  quantum information theory viewpoint, using in particular the quantum de Finetti theorem. The theorem implies the existence of extra information, $W$, which is neither part of the black hole nor the radiation, but plays the role of a reference. The entropy obtained via the replica trick can then be identified to be the entropy $S(R|W)$ of the radiation conditioned on the reference~$W$, whereas Hawking's original result corresponds to the non-conditional entropy $S(R)$. The entropy $S(R|W)$, which mathematically is an ensemble average, gains an operational meaning in an experiment with $N$ independently prepared black holes: For large $N$, it equals the normalised entropy of their joint radiation, $S(R_1 \cdots R_N)/N$. The discrepancy between this entropy and $S(R)$ implies that the black holes are correlated. The replica wormholes may thus be interpreted as the geometrical representation of this correlation. Our results also suggest a many-black-hole extension of the widely used random unitary model, which we support with non-trivial checks.}

\begin{document}
\maketitle

\section{Introduction} \label{sec:intro}

Black holes are an ideal (theoretical) testbed for exploring the interplay between gravity and quantum physics. Indeed, the phenomenon of Hawking radiation~\cite{hawking1975particle,hawking1976breakdown}, the ``thermal'' radiation emitted by them, can only be explained by combining elements from general relativity and from quantum field theory. While we are still far from a complete understanding of how these two theories fit together to a theory of quantum gravity, which appears to be necessary for a detailed description of black holes, it turned out that a more abstract quantum information-theoretic perspective can yield valuable insights into the nature of Hawking radiation (e.g., \cite{hayden2007black}). In this work we take such an information-theoretic viewpoint. 

The von Neumann entropy, $S(R) = -\tr(\rho_R \log \rho_R)$, of the state $\rho_R$ of the radiation field $R$ emitted by a black hole is a good measure for its ``thermality''. A large value of $S(R)$ indicates that the individual radiation quanta are only little correlated. According to Hawking's calculations, $R$ is indeed thermal in this sense, i.e., the radiation quanta are independent of each other.\footnote{However, the spectrum of the radiation, as measured by an asymptotic observer, deviates  from the Planck form of blackbody radiation by greybody factors~\cite{page1976particle, page2013time}.} This means that $S(R)$ grows with the number of radiation quanta in $R$ and reaches its maximum value when the black hole is evaporated completely. 

But this behavior of $S(R)$ is in conflict with what is sometimes called the \emph{central dogma} of black hole physics~\cite{almheiri2020entropy}. It asserts that a black hole, when regarded from the outside, is a finite-dimensional quantum system and must thus obey the laws of quantum theory, whose equations of motion are fundamentally reversible.  Consequently, if a black hole is formed by collapsing matter that is initially in a pure state, the state of the total system that includes the radiation $R$ must remain pure. While the entropy $S(R)$ may first increase during the radiation process, it must ultimately decrease again and reach zero as the black hole has disappeared completely, corresponding to a final radiation state that is pure. If one additionally assumes that the time evolution is \emph{typical}, in the sense that it is described by a unitary chosen at random according to the Haar measure, one finds what is known today as the \emph{Page curve}\footnote{We plot the Page curve with time $t$ measured in terms of the number of emitted Hawking quanta. A plot with respect to asymptotic time would slightly deviate from this picture, especially at late times, and the Page time, $t_{\mathrm{Page}}$, would lie shortly after the half-life of the black hole~\cite{page2013time}.} and plotted in Fig.~\ref{fig:page1}~\cite{page1993average,page1993information}.

\begin{figure}
    \centering
    \includegraphics[width=0.7\linewidth]{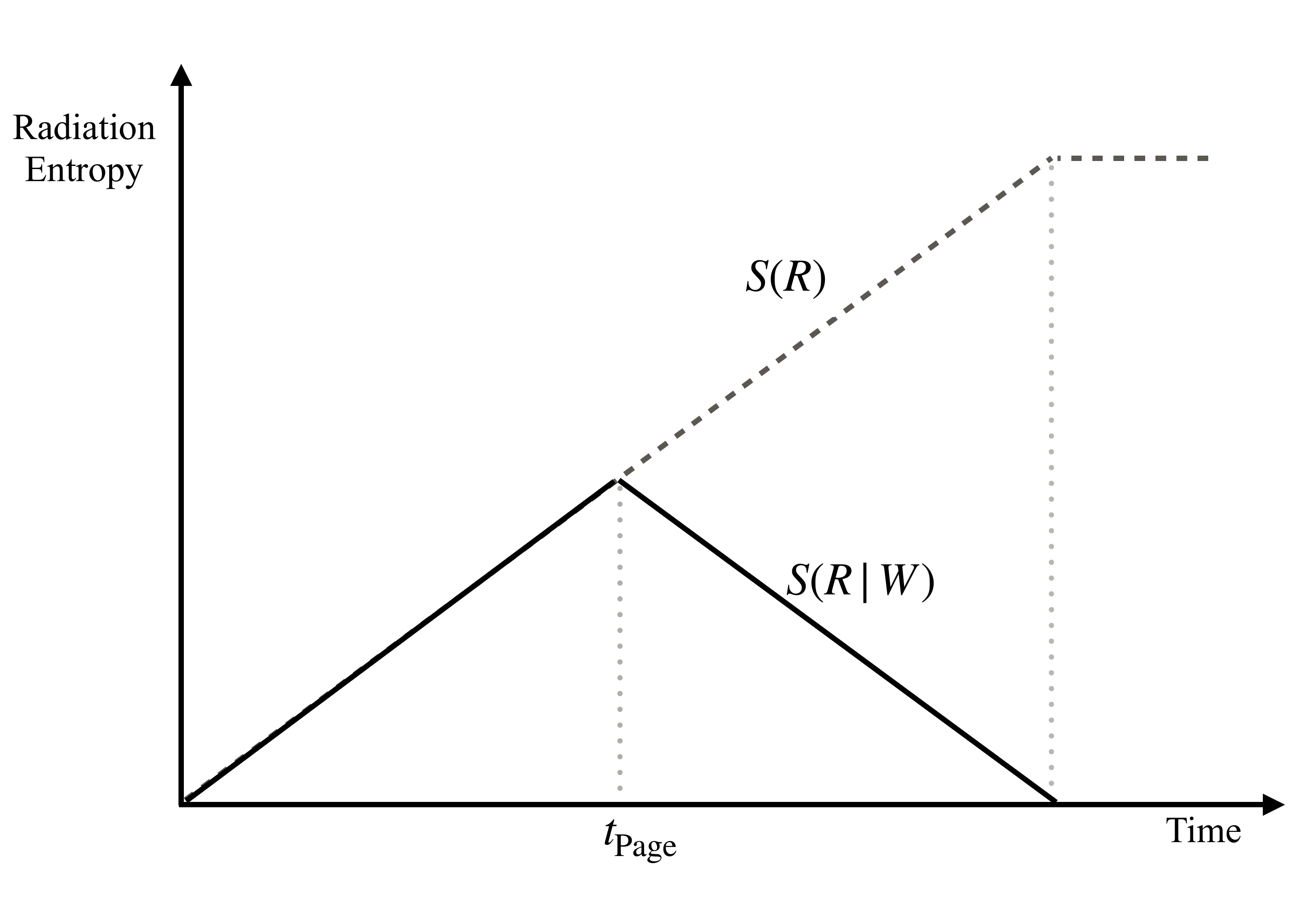}
    \caption{{\bf Black hole information puzzle.} The plot shows the entropy of the  radiation field $R$ (vertical axis) emitted up to a given time (horizontal axis) as predicted by Hawking (dashed line) and Page (solid line). The origin corresponds to the time when the black hole has formed and starts radiating. In a first phase the entropy grows almost linearly in time. According to Hawking's calculations, the growth continues until the black hole is evaporated completely. Conversely, Page concluded that there must be a turning point at a time $t_\mathrm{Page}$, when half of the black hole is evaporated, after which the entropy decreases again until it reaches zero.  The recent calculations based on gravitational path integrals reproduce the Page curve. In this work we argue that the quantum de Finetti theorem implies the existence of  ``reference information''~$W$. The discrepancy between the two curves can then be understood as a discrepancy between the type of entropy measures that were calculated. While Hawking computed the (unconditional) von Neumann entropy $S(R)$, the recent gravitational calculations yield the conditional von Neumann entropy~$S(R|W)$. }
    \label{fig:page1}
\end{figure}

The (rather drastic) disagreement between Hawking's result and the Page curve is known as the \emph{black hole information puzzle}. It is crucial to note that the two conclusions derive from different assumptions, though. Hawkings' calculations are based on a rather explicit description of the radiation as a quantum field on curved spacetime. It is \emph{semi-classical} in that it is assumed that the spacetime geometry obeys the laws of (classical) general relativity. Conversely, the argument leading to the Page curve is entirely quantum-theoretical but does not take gravity into account --- the only input it takes from outside quantum theory is the mere fact that black holes do radiate.  The black hole information puzzle thus exhibits an (apparent) tension between gravity and quantum theory. 

Significant progress towards a resolution of the black hole information puzzle has been made recently~\cite{penington2020entanglement,almheiri2019entropy,almheiri2020page,penington2019replica,almheiri2020replica,almheiri2020entropy}. It has been shown that the Page curve can be reproduced by  explicit calculations using path integrals, which apart from the radiation field also take care of the gravitational degrees of freedom. The latter are approximated by saddle points that correspond to classical geometries. The approximation ensures that the calculations remain in the semi-classical regime and hence do not depend on speculations of how  a full theory of quantum gravity may look like.  

A curious feature of these novel calculations is that they yield an entropic quantity associated to the radiation field $R$ without ever telling us what the radiation state $\rho_R$ is. This is a general characteristic of the \emph{replica trick}~\cite{calabrese2004entanglement,calabrese2006entanglement,lewkowycz2013generalized,faulkner2013quantum}, on which the calculations are based. The replica trick relies on two general facts about entropies. Firstly, the von Neumann entropy of a quantum system equals the $n \to 1$ limit of the R\'enyi entropy of order $n>1$ (see Section~\ref{sec:replica} for a definition). Secondly, the R\'enyi entropy of order $n$, for integers $n \geq 2$,\footnote{Once the R\'enyi entropies of order $n$ are known for positive integers $n \in \mathbb{N}$, the values for non-integer orders are obtained by analytic continuation.} corresponds to the expectation value of an observable~$\tau_n$ on $n$ copies of the system, the \emph{replicas}. This expectation value can be evaluated by path integrals without an explicit description of the quantum state. The ingenious insight that led to the recent progress is that, when calculating the gravitational path integral for the observable $\tau_n$, there can be a dominant contribution from spacetime geometries consisting of wormholes that connect the $n$ replica black holes.

The discovery that a semi-classical approximation of gravitational path integrals is able to reproduce the Page curve, which was initially obtained by imposing unitarity  ``by hand'', is remarkable. It is even more so if one takes into account that the semi-classical regime is basically the same as the one considered  by Hawking,  who arrived however at the opposite conclusion of an ever increasing radiation entropy. From a purely mathematical viewpoint, the replica wormholes make all of the difference --- if one ignores them in the new calculations, one retrieves Hawking's result rather than the Page curve.  Conversely, in Hawking's original calculation~\cite{hawking1975particle,hawking1976breakdown,hartle1976path,gibbons1977action}, replica wormholes have no place, simply because his calculation doesn't rely on replicas. Hence, if we didn't have independent evidence for the Page curve,  it could well be that we would  have discarded the novel wormhole solutions as unphysical.

These considerations motivate a first conceptual question that we would like to address.

\begin{qn}\label{qn:1}
What is the physical reason for the discrepancy between the results for the radiation entropy as obtained by Hawking and via the replica trick, respectively? And which of the two is operationally meaningful?
\end{qn}

By ``operationally meaningful'' we mean that the quantity can (at least in principle) be determined by an experiment that acts on the radiation. The experiment may consist of applying quantum state tomography to estimate the density operator $\rho_R$ of the radiation field and then compute its von Neumann entropy. Since state tomography requires many identical copies of the same system, we would need an experimental procedure to prepare many identical black holes. Such an operational perspective has recently been suggested by Marolf and Maxfield (MM)~\cite{marolf2020transcending,marolf2020observations,marolf2021page} for the R\'enyi entropy of order~$n$ which, as noted above, can be interpreted as the expectation value of the observable~$\tau_n$.\footnote{To experimentally determine the expectation value of~$\tau_n$, one would need many identical copies of $n$-tuples of black holes. As a side remark, we note that there exist quantum algorithms that use such an approach to determine the entropy of a quantum system~\cite{troyer2017}.}

The main idea behind our approach to answering Question~\ref{qn:1} is to invoke a method that has its origin in quantum information theory: the \emph{quantum de Finetti theorem} (see Section~\ref{sec:dF} for a description). As we shall argue in Section~\ref{sec:results}, this theorem, applied to a many-black-hole experiment, implies the existence of particular information, $W$, which we call \emph{reference}. $W$~is not part of any single black hole nor its radiation field, but instead has a role analogous to a reference frame, which is required to make sense of the state of these systems (see Section~\ref{sec:elusivereference} for a discussion of this aspect). One may now associate two different entropic quantities to the black hole's radiation field. The unconditional von Neumann entropy, $S(R)$, measures one's uncertainty about the radiation when ignoring the reference~$W$, whereas the conditional von Neumann entropy, $S(R|W)$, measures this uncertainty when~$W$ is taken into account.

The answer to Question~\ref{qn:1} that is suggested by the de Finetti theorem is then as follows. Hawking calculated $S(R)$, whereas the novel calculations based on the replica trick correspond to a computation of $S(R|W)$. Furthermore, the latter is the entropy that we would find when carrying out a tomography experiment. These statements are direct consequences of Claim~\ref{claim:main1}, which is our first main result (see Section~\ref{sec:swapcond}). They also support and refine the general idea that Hawking actually calculated the entropy of an ensemble average of possible radiation states, $S(\langle \rho_R \rangle)$, whereas the replica trick calculations yield the ensemble average of the entropy of these radiation states, $\langle S(\rho_R) \rangle$~\cite{bousso2020gravity}.

Due to the discrepancy between Hawking's and Page's result, the mutual information between the reference and the radiation field, $I(R:W) = S(R) - S(R|W)$, can be large. In fact, since $S(R|W)$ equals zero after complete evaporation of the black hole, whereas $S(R)$ is at least as large as the initial black hole measured in terms of its Bekenstein-Hawking entropy, $S_{\mathrm{BH}}$, the information content of $W$ must be rather substantial.

In view of this, it may appear even more remarkable that the novel calculations, which are based on a semi-classical approximation to gravitational path integrals | rather than a fully quantum-mechanical argument | are able to determine $S(R|W)$, the entropy conditioned on~$W$. After all, the calculations require no input from gravity other than (classical) saddle point geometries. Their uncanny efficacy is thus somewhat unsettling. This brings us to a second conceptual puzzle that we would like to address. 

\begin{qn}\label{qn:2}
  How is the reference~$W$ represented physically in spacetime, and why can a semi-classical gravitational path integral ``know'' enough about~$W$ to yield $S(R|W)$?
\end{qn}

To answer this question, we once again invoke the quantum de Finetti theorem. It yields a second main result, Claim~\ref{claim:main2}, which asserts that the conditional entropy $S(R|W)$ is equal to the normalised total entropy $S(R_1 \cdots R_N)/N$ of the radiation fields of a collection of $N$  black holes, prepared identically in the same spacetime, in the limit of large~$N$ (see Section~\ref{sec:condcorr}). Thus, clearly, if $S(R|W)$ is smaller than $S(R)$, then $S(R_1 \cdots R_N)$ is smaller than $S(R_1) + \cdots + S(R_N)$, which in turn means that the radiation fields of the different black holes are correlated. In other words, if $W$ is non-trivial in that $S(R|W) < S(R)$, then it manifests itself as correlation between the different black hole systems. This answers the first part of Question~\ref{qn:2}. 

In the semi-classical gravitational path integral calculations, geometries that feature wormholes between replicas become relevant precisely in the regime where the reference~$W$ becomes non-trivial in the sense described above. The wormholes may thus be interpreted as a geometrical manifestation of the correlation between the black hole systems that is implied by a non-trivial~$W$. This is in agreement with recent results by MM, who studied the baby universe that forms at the common interior of replicas of black holes connected by wormholes. Our reference~$W$ would in their model be  encoded into different possible states, the $\alpha$-states, of the baby universe Hilbert space~\cite{marolf2020transcending,marolf2020observations,marolf2021page} (see also~\cite{giddings2020wormhole}).\footnote{Our results thus indicate that MM's conclusions hold more generally and may not need to be based on the premise of baby universes.} These considerations suggest the following answer to the second part of Question~\ref{qn:2}. It is the replica wormholes that ``tell'' the path integral about $W$ and thus enable the computation of~$S(R|W)$.

The remainder of this article is structured as follows. In Section~\ref{sec:prelim} we provide introductory remarks on the gravitational path integral calculations using the replica trick, on the quantum de Finetti theorem, and on the notion of conditional entropy. Section~\ref{sec:results} contains our main claims, which address the two questions posed above, and their proofs, which are largely based on the quantum de Finetti theorem. In Section~\ref{sec:model} we propose an extension of the random unitary model for black holes, which is suggested by the quantum de Finetti theorem.  We conclude in  Section~\ref{sec:discussion}, where we also address criticism of the recent path integral calculations in the light of our results. 

\section{Preliminaries}\label{sec:prelim}

\subsection{The replica trick calculation in gravity}\label{sec:replica}

We start with a brief review of the calculation of the radiation entropy based on the replica trick~\cite{penington2019replica,almheiri2020replica}, intended for readers who are not familiar with the recent developments.\footnote{For more background and details we recommend the review~\cite{almheiri2020entropy}.} For simplicity, the calculations are often carried out under the assumption that spacetime has a Euclidean rather than a Lorentzian signature, and we thus also focus here on the Euclidean case.\footnote{We refer to~\cite{marolf2020observations} for a more comprehensive Lorentzian treatment.}

The following considerations are quite general and  apply to basically any quantum system. Nonetheless, for our purposes the system will typically be a spacetime containing one or several black holes. We thus assume that the degrees of freedom consist of the spacetime geometry $g$ (we take $g$ to be a description of the topology and the metric) as well as a matter field $\psi$ (in the following called the \emph{quantum field}) that lives on the spacetime. These degrees of freedom shall be determined by boundary conditions $\mathcal{B}$ on a boundary region that is situated at a large distance from any black holes.  Then, according to the Feynman path integral prescription, the expectation value of an observable $O = O[g, \psi]$ is given by the expression
\begin{align} \label{eq:observable}
    \langle O \rangle_{\mathcal{B}} = \frac{Z[\mathcal{B}, O]}{Z[\mathcal{B}]}
\end{align}
where
\begin{align} \label{eq:partitionobservable}
    Z[\mathcal{B}, O] = \int_{\mathcal{B}}\mathcal{D}g\mathcal{D}\psi O[g, \psi] e^{-I[g,\psi]}
\end{align}
is the partition function with the observable $O$ inserted, and $Z[\mathcal{B}] = Z[\mathcal{B}, 1]$ is the ``plain'' partition function, which serves as a normalisation in~\eqref{eq:observable}. The integral above runs over all possible configurations of~$g$ and $\psi$ that are compatible with the boundary conditions~$\mathcal{B}$, and $I$ is the (Euclidean) action. The latter is typically assumed to consist of the Einstein-Hilbert action (or a variant thereof; see Section~\ref{sec:pssy} for an example) that is proportional to the inverse of the gravitational constant~$G_N$ and a term for the quantum field $\psi$ with minimal coupling to gravity. 

The replica trick is a method that enables the computation of the entropy of a subsystem~$R$, in our case the field of Hawking radiation emitted by a black hole defined by boundary conditions~$\mathcal{B}$, using an expression of the form~\eqref{eq:observable}. As noted in the introduction, it relies on the fact that the von Neumann entropy of a quantum system $R$ in state $\rho_R$ can be written as $S(R)_{\rho} = \lim_{n \to 1} S_n(R)_{\rho}$, where $S_n(R)_{\rho} = \frac{1}{1-n} \log \tr(\rho_R^n)$ are the R\'enyi entropies of order~$n$.  For $n \in \mathbb{N}$, the trace occurring in the latter can be expressed as
\begin{align} \label{eq:replicabasic}
  \tr(\rho_R^n) =  \tr(\rho_R^{\otimes n} \tau_n) 
\end{align}
where $\tau_n = \tau_{R_1 \cdots R_n}$ is the cyclic shift operator, which moves the content of $R_i$ to $R_{i+1 \operatorname{mod} n}$ for any $i = 1, \ldots, n$. Since the expression on the right hand side corresponds to an expectation value of $\tau_n$, interpreted as an observable on $n$ identical copies of the original system of interest, one may invoke expression~\eqref{eq:observable}.

The recent calculations~\cite{penington2019replica,almheiri2020replica} of the radiation entropy are motivated by this idea. However, instead of $n$ identical copies of the black hole spacetime, the calculations refer to a system defined by $n$ copies of the boundary conditions $\mathcal{B}$. As we shall see, this makes an important difference. To keep track of this difference, we will call the resulting entropic quantities \emph{swap entropies} --- a term borrowed from MM~\cite{marolf2020observations}. Specifically, and in analogy to the R\'enyi entropy of order~$n$, we define the \emph{swap entropy} of order $n$ by\footnote{Following the notation used for entropies defined by states $\rho$, we put a subscript to indicate the boundary conditions $\mathcal{B}$ which define the state of the system.}
\begin{align}\label{eq:replica_trick_1}
  S_n^{\mathrm{swap}}(R)_{\mathcal{B}} = \frac{1}{1-n} \log \langle \tau_n \rangle_{\mathcal{B}^{\times n}} \ .
\end{align}
While the results derived in this work rely on definition~\eqref{eq:replica_trick_1} for the swap entropy, we note that in the literature this expression is sometimes replaced by  
\begin{align}\label{eq:replica_trick_2}
  S_n^{\mathrm{swap}}(R)_{\mathcal{B}} = \frac{1}{1-n} \log \frac{Z_n}{Z_1^n} \ ,
\end{align}
where $Z_n := Z[\mathcal{B}^{\times n}, \tau_n]$. This expression is identical to~\eqref{eq:replica_trick_1} under the assumption that $Z[\mathcal{B}^{\times n}] = Z[\mathcal{B}]^n$.\footnote{The assumption is equivalent to say that, when $R$ is trivial, the dominant contribution for the swap entropy comes only from a disconnected geometry. This can be reasonably justified in simple models~\cite{penington2019replica}. However, in the case where replica wormhole contributions become relevant in $Z[\mathcal{B}^{\times n}]$, the assumption may fail,  as pointed out in~\cite{engelhardt2021free}. We then have to resort to \eqref{eq:replica_trick_1} instead to avoid pathological results such as a positive entropy for a pure state.}

In Euclidean signature, the boundary conditions $\mathcal{B}$ are usually specified on the asymptotic region of the spacetime that has a flat geometry with topology $\mathcal{N} \times \mathcal{I}$, where $\mathcal{N}$ is a spatial region at an asymptotically large distance from the black hole,\footnote{Note that here the boundary conditions are prescribed on the boundary region rather than the topological boundary, so $\mathcal{N} \times \mathcal{I}$ has the same dimension as the spacetime that satisfies the boundary conditions. $\mathcal{N}$ is often taken to be compact, corresponding to an IR cutoff.} and where $\mathcal{I}$ is a circle in the imaginary time direction of length (inverse temperature) $\beta$.  The subsystem containing the radiation $R$ can be defined as a cut on $\mathcal{N}$.\footnote{In Lorentzian signature, one considers a Schwinger-Keldysh contour with $n$ identical and independent copies of the past boundary conditions, and the future boundaries at~$R$ are left open~\cite{marolf2020observations}. } For the computation of $Z_n = Z[\mathcal{B}^{\times n}, \tau_n]$, the operator $\tau_n$, which connects the radiation field of the $i$th black hole to the $(i+1)$st, can then be regarded as a part of the boundary conditions, as shown by Fig.~\ref{fig:bc}. 

\begin{figure}
     \centering
     \begin{subfigure}[b]{0.13\textwidth}
         \centering
         \includegraphics[width=\textwidth]{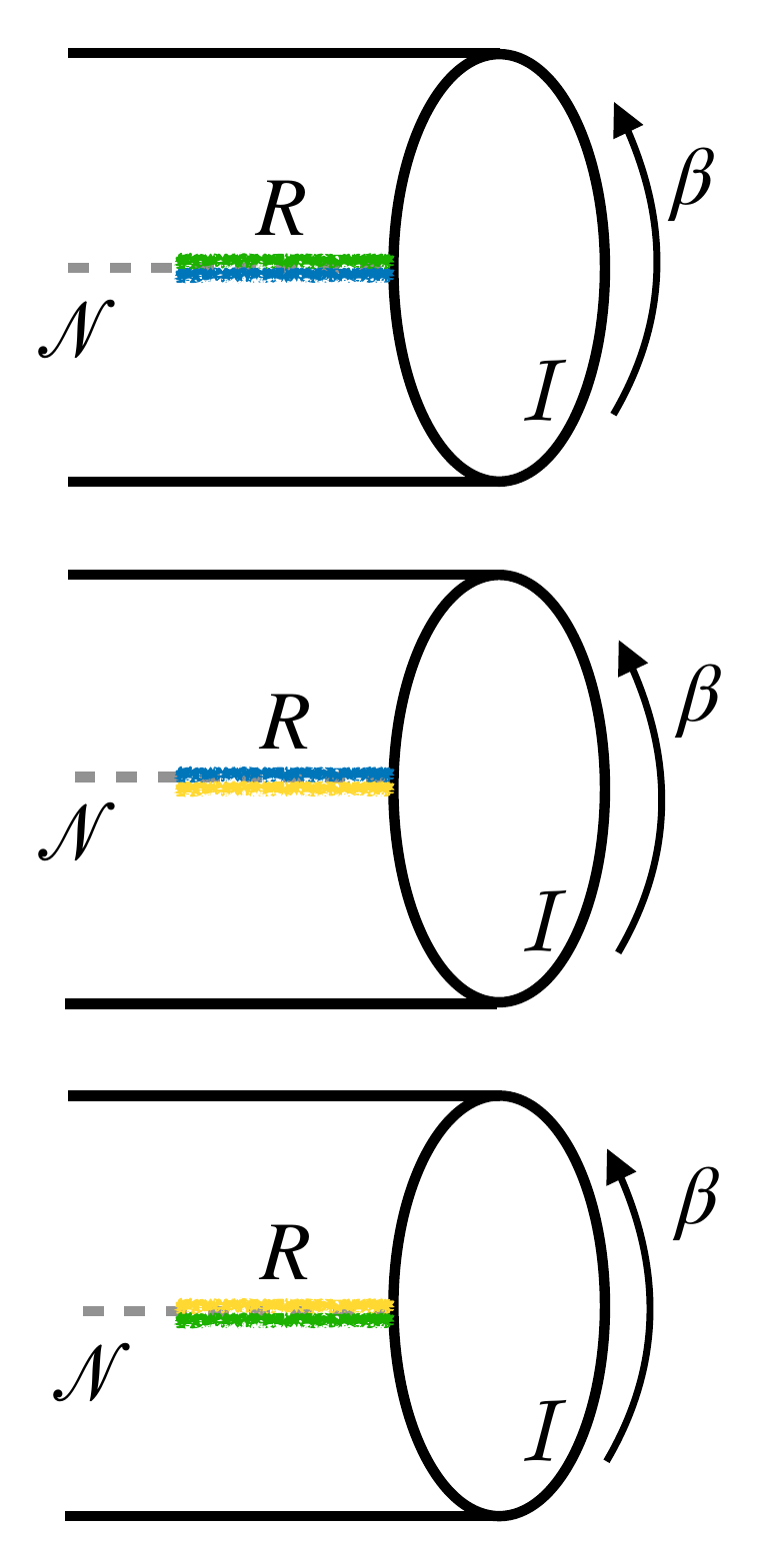}
         \caption{$\mathcal{B}^{\times n}, \tau_n$.}
         \label{fig:bc}
     \end{subfigure}
     \hfill
     \begin{subfigure}[b]{0.45\textwidth}
         \centering
         \includegraphics[width=\textwidth]{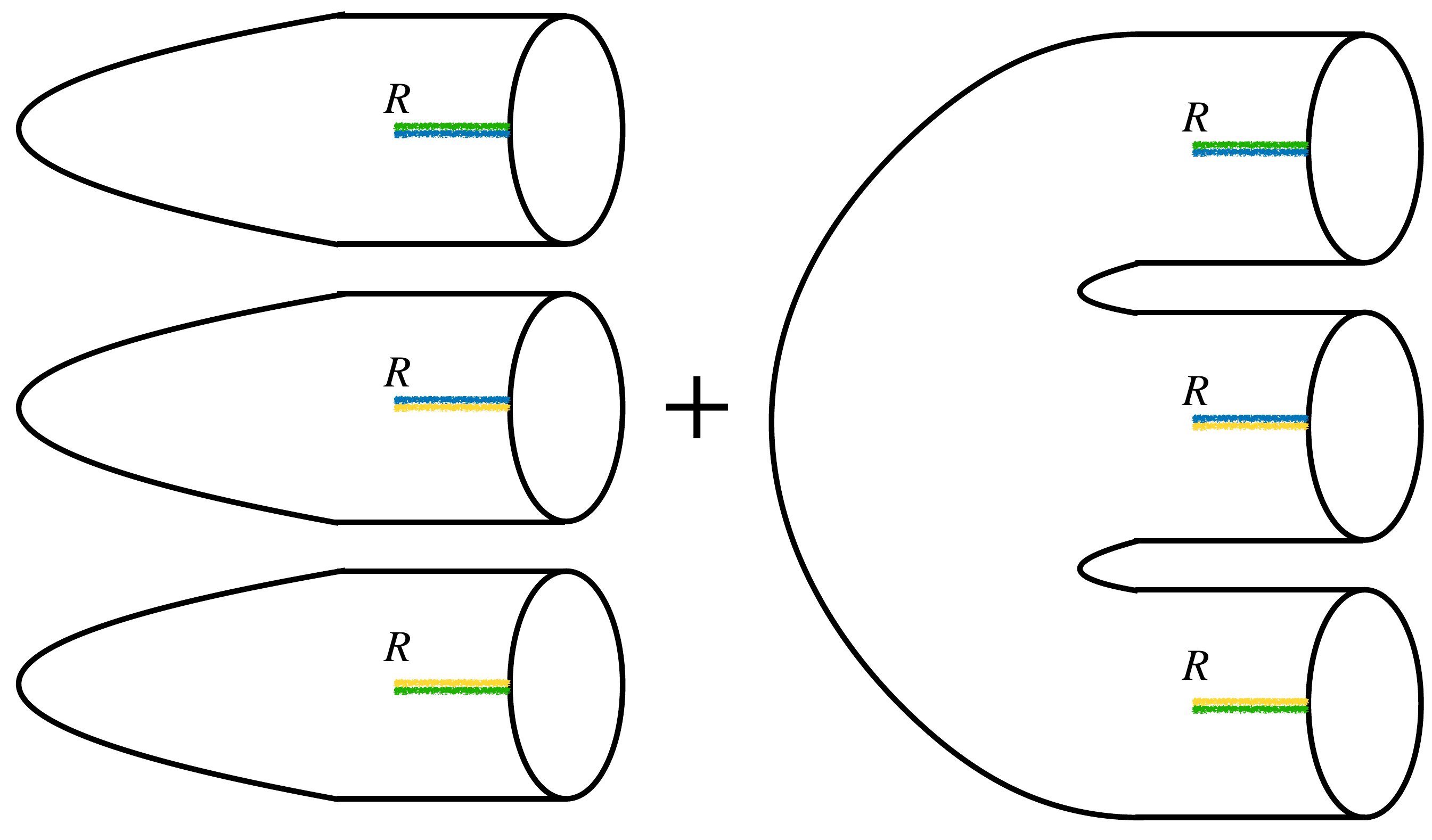}
         \caption{The two $\mathbb{Z}_n$-symmetric saddles.}
         \label{fig:saddles}
     \end{subfigure}
     \hfill
     \begin{subfigure}[b]{0.3\textwidth}
         \centering
         \includegraphics[width=0.65\textwidth]{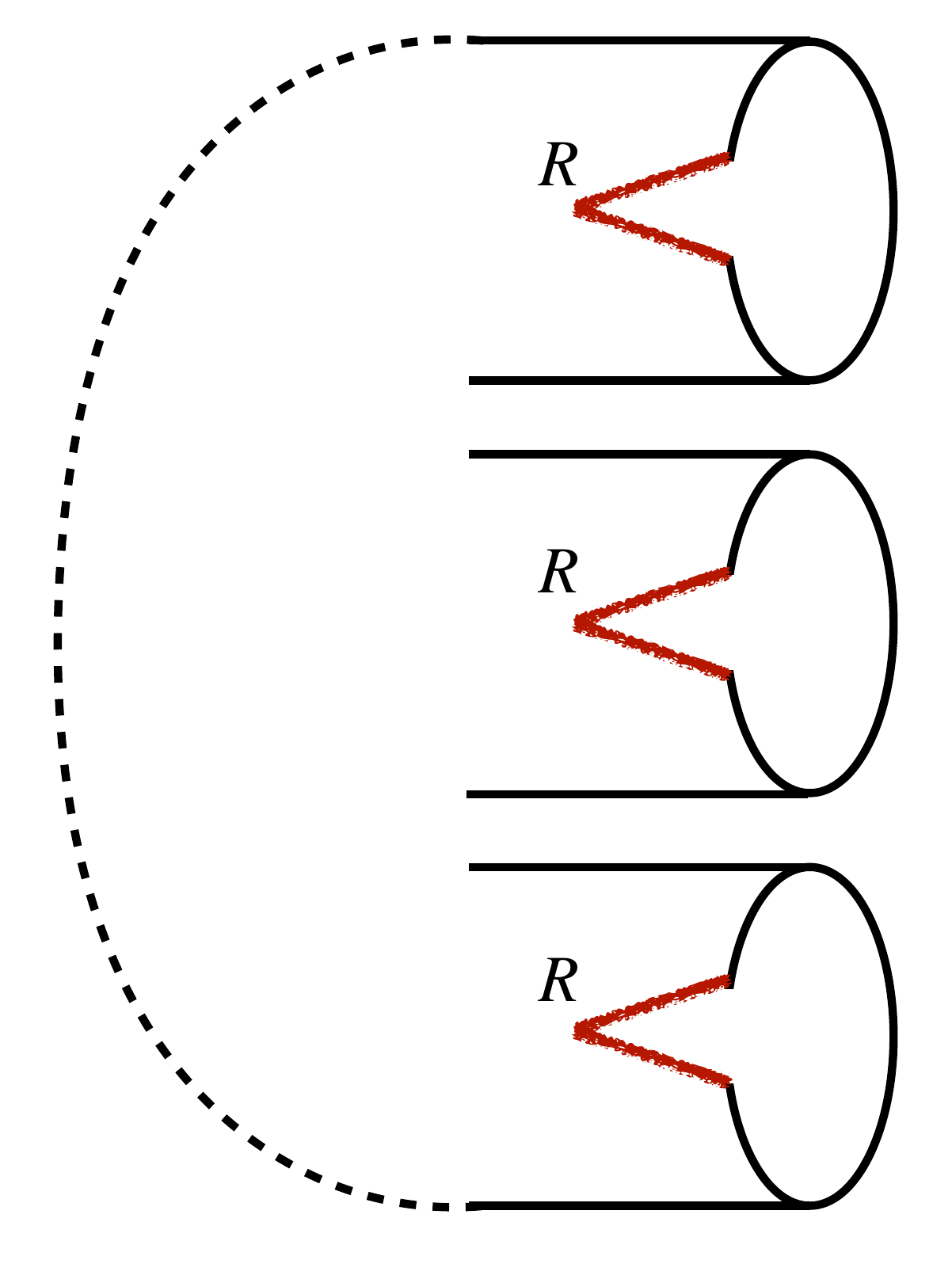}
         \caption{The density operator.}
         \label{fig:densityoperator}
     \end{subfigure}
        \caption{Fig.~\ref{fig:bc} illustrates the boundary conditions for the replica trick path integral. The manifold of the boundary region of each replica, where the conditions $\mathcal{B}$ are imposed, factorises into a spatial part $\mathcal{N}$ (grey dashed line) and a periodic Euclidean time part $\mathcal{I}$ (black circle). The radiation system~$R$ is a cut on $\mathcal{N}$. The cyclic shift operator $\tau_n$ may then be included in the boundary conditions, where it imposes a cyclic gluing of the cuts $R$ of different replica, as indicated by the colours. Fig.~\ref{fig:saddles}  shows the two dominant geometries that contribute to the path integral under the assumption of $\mathbb{Z}_n$ symmetry. These are the disconnected Hawking saddle (left) and the connected replica wormholes (right). Fig.~\ref{fig:densityoperator}  shows the density operator of the radiation $\rho_{R_1\cdots R_n}$ defined by the path integral cut open at the regions $R$, which are highlighted in red. The dashed line indicates the summation over all possible geometries.}
        \label{fig:replica}
\end{figure}

Up to this point we have implicitly assumed that the path integral over the geometries $g$ is well defined and can in principle be calculated. However, lacking a full theory of quantum gravity, we do not know whether this is the case. The standard approach to circumvent this problem is to resort to a semi-classical regime, which corresponds to the limit where the gravitational coupling $G_N$ is small. Specifically,  the integration over $g$ is replaced by a saddle-point approximation and it is assumed that the dominant contributions to the partition function $Z_n$ come from geometries $g$ that are classical solutions of the gravity action.

Under the additional assumption, which is usually made within the replica-trick based calculations~\cite{lewkowycz2013generalized}, that the dominant classical geometry $g$ respects the $\mathbb{Z}_n$ symmetry\footnote{\label{ftn:repsym}The $\mathbb{Z}_n$ replica symmetry may be broken when we have a non-trivial initial state that forms the black hole~\cite{akers2021leading,wang2021refined} (see also \cite{engelhardt2021free}). Then we generally cannot discard the contributions from other saddles, and including them changes the Page curve as plotted in Fig.~\ref{fig:page1}, particularly near the transition. One such example is shown in Fig.~\ref{fig:page2} where the Page curve for a black hole in superposition of two evaporation stages is given.} of the boundary conditions imposed by the cyclic shift operator $\tau_n$, there exist two candidates for potentially dominant saddles, one referred to as the \emph{Hawking saddle} and the other as the \emph{replica wormholes}. They are illustrated in Fig.~\ref{fig:saddles}. The Hawking saddle consists of disconnected geometries that correspond to identical copies of a (Euclidean Schwarzschild) black hole, whereas the replica wormholes geometry consists of connections between the different replicas. 

The argument so far provides a recipe for computing $S_n^{\mathrm{swap}}(R)_{\mathcal{B}}$ for $n \in \mathbb{N}$. Following the replica trick, we need to take the $n \to 1$ limit, i.e., determine\footnote{In the standard literature, this quantity is usually just denoted $S(R)$.  Here we keep the swap entropy terminology to remind ourselves that the quantity is obtained via the replica trick.}
\begin{align} \label{eq:swaplimit}
  S^{\mathrm{swap}}(R)_{\mathcal{B}} = \lim_{n \to 1} S_n^{\mathrm{swap}}(R)_{\mathcal{B}} \ . 
\end{align}
This is achieved by considering the analytic continuation of the expression for $S_n^{\mathrm{swap}}(R)_{\mathcal{B}}$. Since a geometry $g_n$ with boundary conditions $\mathcal{B}^{\times n}$ for non-integer $n$ is difficult to define, one first maps $g_n$ to a quotient geometry $\hat{g}_n$ that has boundary conditions $\mathcal{B}$ corresponding to a single black hole, but takes into account the replica wormholes by adding appropriate terms to the gravitational part of the action, $I[\hat{g}_n]$. The resulting path integral can then be continued analytically to arbitrary values~$n$, and the limit~\eqref{eq:swaplimit} can be computed.

The geometry that one obtains in this limit is basically that of a single black hole. However, the replica wormholes leave a trace on the black hole metric, which is geometrically manifested as an additional region $I$, called the \emph{island}. Eventually, taking into account the two saddles, one obtains an expression that is known as the \emph{island formula}~\cite{penington2019replica,almheiri2020page,almheiri2020replica},\footnote{The accuracy of the approximation depends on the degree to which the assumptions hold, e.g., that the contributions of solutions which break the $\mathbb{Z}_n$ replica symmetry can be neglected (see Footnote~\ref{ftn:repsym} above).} 
\begin{equation}\label{eq:island}
    S^{\mathrm{swap}}(R)_{\mathcal{B}}
    \approx \mathrm{min\,ext}_I\left[\frac{A[\partial I]}{4G_N}+S(R\cup I)_{\rho_{\psi}} \right]\,.
\end{equation}
The expression in the square brackets is a \emph{generalised entropy}~\cite{Bekenstein1973}, defined as the sum of the area $A[\partial I]$ of the boundary of the island $I$ and the von Neumann entropy $S(R\cup I)_{\rho_{\psi}}$ of the matter field $\psi$ (at inverse temperature $\beta$) in the joint region consisting of $R$ and $I$. One then varies over islands~$I$ and takes the minimum over those $I$ where the local variation of the generalised entropy vanishes.\footnote{The island formula is closely related to the Ryu-Takayanagi (RT) formula in AdS/CFT~\cite{ryu2006holographic,ryu2006aspects}, which computes the entanglement entropy of the reduced state on a boundary subregion using the area of a minimal surface in the bulk. The quantum RT formula was later proposed to account for the bulk entropy contribution~\cite{faulkner2013quantum}. It then led to the quantum extremal surface prescription~\cite{engelhardt2015quantum}, which was used to obtain the Page curve~\cite{penington2020entanglement,almheiri2019entropy} before the island formula was proposed shortly after. In fact, the island formula exactly follows from the quantum extremal surface prescription in doubly holographic models~\cite{almheiri2020page,chen2021evaporating,chen2020quantum1}.} Remarkably, $S^{\mathrm{swap}}(R)_{\mathcal{B}}$ reproduces the Page curve, which is shown in Fig.~\ref{fig:page1}, to good approximation. 

Hawking's original calculations~\cite{hawking1975particle,hawking1976breakdown,hartle1976path,gibbons1977action} use the same semi-classical approximation as the replica-trick based ones that we just described. In fact, the former can be retrieved from the latter if one ignores the contributions from the replica wormholes. More precisely, if one keeps only the Hawking saddle in the calculation described above, one obtains an expression that corresponds to~\eqref{eq:island} with an empty island~$I$, that is, just $S(R)_{\rho_{\psi}}$, where $\rho_{\psi}$ is the Hartle-Hawking state of the radiation field as in  Hawking's original calculation. It thus looks as if Hawking missed the replica wormhole solution, which is responsible for bringing the entropy down to the Page curve. Note however that Hawking's approach to compute the entropy does not require replicas in the first place, i.e., replica wormholes have no place there.  

To better understand the quantity $S^{\mathrm{swap}}(R)_{\mathcal{B}}$ obtained via the replica trick, it is worth noting that the very existence of the replica wormholes indicates that the state of the $n$ radiation subsystems $\rho_{R_1 \cdots R_n}$ defined implicitly by the boundary conditions $\mathcal{B}^{\times n}$ (c.f.\ Fig.~\ref{fig:densityoperator})  does not factorise into a product $\rho_{R_1} \otimes \cdots \otimes \rho_{R_n}$ --- despite the fact that the boundary conditions have such a product structure. This is a curious feature of  gravitational path integrals, known as the \emph{factorisation problem}~\cite{giddings1988loss,coleman1988black,polchinski1994possible,maldacena2004wormholes,arkani2007}.

The factorisation problem gives rise to the \emph{state paradox}  pointed out in~\cite{bousso2020gravity}. Recall that the original replica trick is based on~\eqref{eq:replicabasic}, i.e., the entropy is supposed to be derived from  observables $\tau_n$ evaluated on an $n$-fold product of the radiation state, $\rho_{R}^{\otimes n}$. Conversely, due to the factorisation problem, $S^{\mathrm{swap}}(R)_{\mathcal{B}}$ actually derives from evaluating $\tau_n$ on states $\rho_{R_1 \cdots R_n}$ that are not in general of product form and hence different from $\rho_{R}^{\otimes n}$. We shall resolve this tension in Section~\ref{sec:results}.

\subsection{The quantum de Finetti theorem} \label{sec:dF}

The quantum de Finetti theorem~\cite{hudson1976locally,caves2002unknown,renner2007symmetry,renner2008security,christandl2007one,konig2005finetti,koenig2009most} is a generalisation of a theorem from classical statistics, which is widely used in quantum information theory~\cite{fannes2006finite,brandao2011faithful,christandl2012reliable,brandao2017quantum,lancien2017flexible,harrow2013church,arnon2015finetti,arnon2016non}. Roughly speaking, it asserts that, if we choose $n$ subsystems $R_1, \ldots, R_n$ at random from an $N$-partite system, where $N \gg n$,  then the joint state of $R_1, \ldots, R_n$ is well approximated by a convex combination of product states of the form~$\sigma_R^{\otimes n}$. 

To phrase this more precisely, we need to introduce a few definitions. Let $\mathcal{S}(\h_R)$ denote the set of density operators on a Hilbert space $\h_R$. A density operator on $n$ identical systems $R_1, \ldots, R_n$ and an extra system $E$,  $\rho_{R_1 \cdots R_n E}\in\mathcal{S}(\h_R^{\otimes n} \otimes \h_E)$, is said to be \emph{permutation-invariant on $R_1, \ldots, R_n$ relative to $E$} if it transforms trivially under the action of the symmetric group~$\mathbf{S}_n$, i.e., 
\begin{equation}
    \rho_{R_1 \cdots R_n E}= (U_\pi \otimes I_E) \rho_{R_1 \cdots R_n E} (U_\pi^\dagger \otimes I_E) \quad\forall\pi\in \mathbf{S}_n \ ,
\end{equation}
where $U_\pi$ denotes the unitary representation of $\pi$ on $\h_R^{\otimes n}$. Furthermore, $\rho_{R^n E}$ is called \emph{$N$-exchangeable on $R_1, \ldots, R_n$ relative to $E$} if there exists an extension $\rho_{R_1 \cdots R_N E} \in\mathcal{S}(\h_R^{\otimes {N}} \otimes \h_E)$ that includes $m = N-n$ additional systems, i.e., 
\begin{equation}\label{eq:exchangeable}
    \tr_m \rho_{R_1 \cdots R_N E} = \rho_{R^{n} E} \ ,
\end{equation}
which is permutation-invariant on $R_1, \ldots, R_N$ relative to~$E$. A density operator that is $N$-exchangeable on $R_1, \ldots, R_n$ for any $N\geq n$ is said to be \emph{infinitely exchangeable}, or just \emph{exchangeable}, on $R_1, \ldots, R_n$. To relate these definitions to the introductory paragraph, note that the density operator of $n$ subsystems chosen at random from $N$ subsystems is $N$-exchangeable. Finally, $\rho_{R^n E}$ is said to be of \emph{de Finetti form on $R_1, \ldots, R_n$ relative to $E$} if there exists a probability measure $\dd \sigma$ on $\mathcal{S}(\h_R)$ and a family $\{\rho_{E|\sigma}\}_\sigma$ of density operators on $\h_E$ parametrised by $\sigma \in \mathcal{S}(\h_R)$ such that   
\begin{align} \label{eq:dFstate}
    \rho_{R_1 \cdots R_n E} = \int\dd  \sigma \sigma_{R}^{\otimes n} \otimes \rho_{E|\sigma}\ .
\end{align}

In its basic form, the quantum de Finetti theorem asserts that any density operator that is infinitely exchangeable on finite-dimensional systems $R_1,  \ldots, R_n$ is of de Finetti form on these systems~\cite{hudson1976locally,caves2002unknown,konig2005finetti}.\footnote{\label{ftn:uniqueness}The measure $\dd \sigma$ in~\eqref{eq:dFstate} is asymptotically unique, in the sense that for any $\dd \tilde{\sigma} \neq \dd\sigma$ there exists $N \in \mathbb{N}$ such that $\int\dd \tilde{\sigma} \tilde{\sigma}_{R}^{\otimes N} \neq \int\dd  \sigma \sigma_{R}^{\otimes N}$. The de Finetti theorem therefore implies that for any family $
\{\rho_{R^n} \in\mathcal{S}(\h_R^{\otimes n})\}_{n \in \mathbb{N}}$ of  permutation-invariant density operators that are mutually compatible, i.e., $\tr_m \rho_{R^{n+m}} = \rho_{R^{n}} \, \forall n,m \in \mathbb{N}$, there exists a unique measure $\dd \sigma$ such that~\eqref{eq:dFstate} holds.} For our purposes, we need a robust version of this statement, which yields a claim when the state is $N$-exchangeable for some finite $N \in \mathbb{N}$ only. 

\begin{thm}\label{thm:definetti_state}
For any even $N$ and for any density operator $\rho_{R_1 \cdots R_{N/2} E}$ that is $N$-exchangeable on $R_1 \cdots R_{N/2}$ there exists an extension with a classical variable\footnote{To describe a classical variable~$W$ on a set~$\mathcal{W}$ in the context of quantum systems, we encode it in a quantum system, which we also call~$W$,  equipped with a basis $\{\ket{w}_W\}_{w \in \mathcal{W}}$ labelled by the elements of~$\mathcal{W}$.} $W$, which takes values in a finite set~$\mathcal{W}$, such that, for any $n \leq N/2$,
  \begin{align} \label{eq:dFrobust}
      \bigl\| \rho_{R_1 \cdots R_n E W} - \sum_w p_w \rho_{R|w}^{\otimes n} \otimes \rho_{E|w} \otimes \ketbra{w}{w}_W \bigr\|_1 \leq
        6\dim(R)\sqrt{\frac{2 n }{N}} \ ,
  \end{align}
  where the probability distribution $p$ and the density operators $\rho_{R|w}$ and $\rho_{E|w}$ are defined by  $\sum_{w} {p_w \rho_{R|w} \otimes \ketbra{w}{w}_W} = \rho_{R_1 W}$ and $\sum_{w} {p_w \rho_{E|w} \otimes \ketbra{w}{w}_W} = \rho_{E W}$.
\end{thm}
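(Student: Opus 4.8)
The plan is to reduce the statement to a measurement-based version of the quantum de Finetti theorem, constructing the classical register~$W$ explicitly. Starting from an $N$-exchangeable extension $\rho_{R_1 \cdots R_N E}$ as in~\eqref{eq:exchangeable}, which is permutation-invariant on all $N$ copies relative to~$E$, I would designate the first $n \leq N/2$ systems as the ones appearing in~\eqref{eq:dFrobust} and use the remaining $m = N-n \geq N/2$ systems $R_{n+1}, \ldots, R_N$ as a \emph{probe}. Applying a fixed informationally complete POVM with finitely many outcomes (e.g.\ a SIC-type or covariant uniform POVM on $\h_R$) to each probe system produces a classical outcome string, which I take to be the register~$W$; since the POVM has finitely many outcomes and $m$ is finite, the set $\mathcal{W}$ is automatically finite. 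By permutation invariance the probe systems are interchangeable with the kept ones, so conditioning on $W=w$ should effectively pin down the underlying i.i.d.\ component and leave the kept systems close to a product state, exactly as in the de Finetti form~\eqref{eq:dFstate}.

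Once $W$ is fixed in this way, the objects on the right-hand side of~\eqref{eq:dFrobust} are \emph{forced} to be the marginals $\rho_{R_1 W} = \sum_w p_w\, \rho_{R|w} \otimes \ketbra{w}{w}_W$ and $\rho_{E W} = \sum_w p_w\, \rho_{E|w} \otimes \ketbra{w}{w}_W$ of the joint state we have built, so no additional freedom remains and there is nothing to optimise. The substance is then to show that, with these forced conditional states, the post-measurement state $\rho_{R_1 \cdots R_n E | w}$ is, on average over~$w$, close to the factorised form $\rho_{R|w}^{\otimes n} \otimes \rho_{E|w}$. The decoupling of~$E$ is handled by the same conditioning: because the probe interacts with~$E$ only through the globally permutation-invariant state, the de Finetti structure \emph{relative to}~$E$ produces the full product $\sigma^{\otimes n} \otimes \rho_{E|\sigma}$ rather than merely $\sigma^{\otimes n}$, and this is what, after identifying the conditionals, becomes $\rho_{R|w}^{\otimes n} \otimes \rho_{E|w}$. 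A useful sanity check is the case $n=1$, where the left-hand side of~\eqref{eq:dFrobust} vanishes identically by the definition of the marginals, so the bound holds trivially there.

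The main obstacle is the \emph{quantitative} estimate, i.e.\ recovering the specific bound $6\dim(R)\sqrt{2n/N}$ rather than a cruder one, and I expect the real work to lie here. I would route the argument through fidelity rather than trace distance: using the symmetric-subspace combinatorics on $\mathrm{Sym}^N(\h_R)$ (the coherent-state resolution of the identity) together with a gentle-measurement argument, one bounds how accurately the $m \approx N/2$ probe systems determine the common state, showing that the infidelity between $\rho_{R_1 \cdots R_n E|w}$ and the product target accumulates to order $\dim(R)^2\, n / N$ across the $n$ kept copies. The Fuchs--van de Graaf inequalities then convert this into a trace distance of order $\dim(R)\sqrt{n/N}$, which explains both the linear dependence on $\dim(R)$ and the characteristic square-root scaling; a naive chain-rule or triangle-inequality argument would instead yield the weaker $n/\sqrt{N}$, so passing through the fidelity estimate is essential. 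Tracking the constants—either directly or by invoking the finite de Finetti theorem of~\cite{konig2005finetti,christandl2007one} in its measured form—to land on the factor~$6$, and verifying that the finite-outcome POVM is fine enough that its discretisation does not dominate while keeping $\mathcal{W}$ finite, are the remaining (routine but necessary) steps.
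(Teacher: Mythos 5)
Your high-level strategy---generate $W$ by measuring the complementary subsystems of an exchangeable extension, condition on the outcome, and control the error via fidelity and Fuchs--van de Graaf---is indeed the same as the paper's (the remark after the theorem states explicitly that $W$ can be obtained by a measurement on $R_{N/2+1},\ldots,R_N$). However, your route has a gap that breaks the key estimate: you skip the symmetric purification. The paper first invokes Lemma~\ref{lem:purification} to replace the mixed exchangeable extension $\rho_{R_1\cdots R_N E}$ by a \emph{pure} state $\Psi_{S_1\cdots S_N F}$ on $\mathrm{Sym}^N(\h_S)\otimes\h_F$ with $\h_S\cong\h_R\otimes\h_{R'}$ and $\h_F\cong\h_E\otimes\h_{E'}$. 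This is not a convenience: a permutation-invariant \emph{mixed} state is generally not supported on the symmetric subspace (the maximally mixed state on two qubits already has weight on the singlet), so the ``coherent-state resolution of the identity on $\mathrm{Sym}^N(\h_R)$'' you want to use is simply unavailable for the state you propose to measure. The purification is also where the constant comes from: the joint fidelity estimate produces the dimension $\dim(\h_S)=\dim(R)^2$ of the \emph{doubled} space, whence $\sqrt{2\dim(\h_S)n/N}=\dim(R)\sqrt{2n/N}$, which is exactly the prefactor in \eqref{eq:dFrobust}; on your route the appearance of $\dim(R)^2$ in the infidelity is asserted but has no source.

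The second gap concerns the choice of measurement and the decoupling of $E$. The paper measures the probe half with a single \emph{collective} POVM $\{C_{N/2}\psi_{S|w}^{\otimes N/2}/|\mathcal{W}|\}$ built from a quantum $N$-design on the purified space (this is also what makes $\mathcal{W}$ finite). Its outcomes label pure states, and the fidelity computation in Theorem~\ref{thm:pure_definetti_state} shows that, conditioned on $w$, the kept systems are close to the pure product $\psi_{S|w}^{\otimes n}$; purity then \emph{automatically} decouples them from $F$, and hence from $E$. Your per-copy informationally complete POVM has no analogous mechanism: the conditional state given an outcome string is not close to any pure product, so your claim that ``the de Finetti structure relative to $E$ produces the full product'' is precisely the statement to be proven rather than a consequence of the setup; moreover, converting statistics of a product IC POVM into trace-norm closeness requires inverting the measurement map, which costs additional dimension factors and would not reproduce $6\dim(R)\sqrt{2n/N}$. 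Finally, your $n=1$ sanity check is incorrect: at $n=1$ the left-hand side of \eqref{eq:dFrobust} compares the joint state $\rho_{R_1 E W}$ with $\sum_w p_w\,\rho_{R|w}\otimes\rho_{E|w}\otimes\ketbra{w}{w}_W$, and conditioned on $w$ the systems $R_1$ and $E$ may still be correlated; only the marginals on $R_1 W$ and on $E W$ are exact by construction, which is the (weaker) exactness the paper claims.
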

\begin{proof}
  See Appendix~\ref{app:deFinetti}. 
\end{proof}
This version of the quantum de Finetti theorem can be obtained along the lines of the proofs in~\cite{christandl2007one,renner2007symmetry,renner2008security}. It is similar to Theorem~II.7$'$ in~\cite{christandl2007one}, although the right hand side involves an extra square root. Conversely, it generalises the latter in four different aspects. First, it asserts that the distribution $p$ in the de Finetti state is independent of $n$. Second, it includes the classical extension $W$. Third, it asserts that the integral in~\eqref{eq:dFstate} can be replaced by a finite sum. Fourth, it yields exact statements for $n=1$. We also note that  $W$ may be obtained by an appropriate measurement applied to the subsystems $R_{N/2+1}, \ldots, R_N$ of a permutation-invariant extension of $\rho_{R_1 \cdots R_{N/2} E}$, as can be verified by inspecting the proof.

\subsection{Conditional entropy}

Given a density operator $\rho = \rho_{A B}$ on a joint system consisting of parts $A$ and $B$, the von Neumann entropy of~$A$ conditioned on~$B$ is defined as 
\begin{align}
 S(A|B)_{\rho} = S(A B)_{\rho} - S(B)_{\rho} \ .
\end{align} 
We will be interested in the case where the conditioning system $B$ is classical, i.e., $\rho_{A B} = \sum_{b} p_b \rho_{A|b} \otimes \ketbra{b}{b}_B$, with $\rho_{A|b}$ the state of~$A$ conditioned on ${B=b}$. In this case, we may define $S(A|B=b)_{\rho} := S(A)_{\rho_{|b}}$ and express the conditional entropy as the average of this value over~$B$,
\begin{align} \label{eq:qcconditionalvN}
     S(A|B)_{\rho} = \sum_{b} p_b S(A|B=b)_{\rho} \ .
\end{align}

The notion of conditional entropies can be extended  to other entropic quantities~\cite{renner2008security}. Specifically, the R\'enyi entropy of order $n$ of $A$ conditioned on~$B$ may be defined as\footnote{We note that there exist other variants of conditional R\'enyi entropy in the literature. For example, the definition used in~\cite{muller2013quantum} has the reduced state $\rho_B$ replaced by the maximum over arbitrary density operators $\sigma_B$. These definitions also satisfy~\eqref{eq:Renyilimit}.} 
\begin{align}
      S_{n}(A|B)_\rho = - D_n({\rho_{AB} \| I_A \otimes \rho_B}) \ ,
\end{align}
where $D_n(\cdot \| \cdot)$ denotes the sandwiched relative entropy~\cite{wilde2014strong,muller2013quantum}. It follows immediately from Theorem~3 of~\cite{muller2013quantum} that
\begin{align} \label{eq:Renyilimit}
  S(A|B)_\rho = \lim_{n \to 1}  S_{n}(A|B)_{\rho} \ .
\end{align}
In the case where the conditioning system $B$ is classical, \eqref{eq:qcconditionalvN} generalises to 
\begin{align} \label{eq:qcconditionalRenyi}
  S_{n}(A|B)_\rho 
  = \frac{1}{1-n} \log \sum_b p_b \tr(\rho_{A|b}^n) 
  = \frac{1}{1-n} \log \sum_b p_b 2^{(1-n) S_n(A|B=b)_{\rho}} \ ,
\end{align}
where $S_n(A|B=b)_\rho := S_n(A)_{\rho_{|b}}$, as can be readily verified from the definition of $D_n(\cdot \| \cdot)$.

\section{Demystifying the replica trick} \label{sec:results}

\subsection{A many-black-hole system} \label{sec:manyblackholes}

Consider a setup consisting of $N$ black holes, for any fixed $N \in \mathbb{N}$, each specified by the same boundary conditions $\mathcal{B}$. While these boundary conditions may be chosen arbitrarily, and in particular, do not need to lie in the past, we will for concreteness think of a mechanism that prepares $N$ identical matter shells, $M_1, \ldots, M_N$, at locations chosen at random within a given spatial region.  The region shall be large enough so that the individual shells are separated by a large spatial distance (with probability close to~$1$). This ensures that we can neglect any interaction among them that could be mediated by the space in between. We reflect this separation formally by writing the overall boundary conditions as a Cartesian product, $\mathcal{B}^{\times N}$. 

By definition, the mechanism prepares all matter shells $M_1, \ldots, M_N$ in the same way so that, except for their different locations, they are indistinguishable. Operationally, this means that there does not exist any experiment to tell whether or not an extra \emph{swap operation} was applied, which exchanged two of the shells. This indistinguishability property is clearly preserved as the matter shells collapse to black holes and start to emit radiation. (Otherwise, letting them collapse would precisely be an experiment to distinguish them, which is excluded by definition.)

Due to the large spatial separation between the collapsing matter shells, we may assume that the resulting radiation fields, $R_1, \ldots, R_N$, can be treated as separate subsystems. Indistinguishability as discussed above then implies that the joint state $\rho_{R_1 \cdots R_N}$ of all radiation systems at a given ``time'' is invariant under a swap operation that exchanges two of the radiation fields.\footnote{For our purposes it is irrelevant how precisely such a time is defined, provided that the definition is invariant under the swap operation. For example, one may imagine that the region in which the initial matter shells are placed is itself a huge shell with an observer sitting at the center of it. Time may then be defined with respect to this observer.} The state is thus permutation-invariant on $R_1, \ldots, R_N$.\footnote{Even if the different black holes were generated in different ways, one may enforce permutation invariance operationally by shuffling them at random, i.e., by applying the swap operation to randomly chosen pairs. In fact, applying such a reshuffling to achieve permutation invariance is a common trick in quantum information theory~\cite{renner2007symmetry}.} 


\subsection{Non-signalling property for path integrals} \label{sec:nonsignalling}

In the description above we have made the assumption that $R_1, \ldots, R_N$ can be treated as different subsystems. The non-signalling property of quantum theory then implies that the value of an observable $O_n$ defined on $n$ out of the $N$ subsystems does not depend on the remaining $m=N-n$ subsystems. That is, instead of evaluating $O_n$  on $\rho_{R_1 \cdots R_N}$, we may equally well evaluate it on the reduced state $\rho_{R_1 \cdots R_n} = \tr_{m}(\rho_{R_1 \cdots R_{n+m}})$, i.e., 
\begin{equation}\label{eq:nonsignalling}
   \tr(\rho_{R_1 \cdots R_{n}}O_n) = \tr(\rho_{R_1 \cdots R_{n+m}} \cdot O_n \otimes I^{\otimes m}) \, .
\end{equation}
In fact, it is precisely this non-signalling property that justifies the use of reduced states in quantum theory. 

In the path integral formalism, states are defined implicitly via boundary conditions. If the boundary conditions have product form, one may take this to mean that they refer to different subsystems. The non-signalling property~\eqref{eq:nonsignalling} would then imply that the value of an observable $O_n$ on $n$ subsystems should not depend on whether the boundary conditions prepare additional subsystems, i.e.,
\begin{align} \label{eq:pinonsignalling}
    \langle O_n \rangle_{\mathcal{B}^{\times n}} 
    =     \frac{\int_{\mathcal{B}^{\times (n+m)}}\mathcal{D}g\mathcal{D}\psi\, O_n e^{-I[g,\psi]}}{\int_{\mathcal{B}^{\times (n+m)}}\mathcal{D}g\mathcal{D}\psi\, e^{-I[g,\psi]}} \, .
\end{align}

Note however that there is a subtle conceptual difference between~\eqref{eq:nonsignalling} and~\eqref{eq:pinonsignalling}. In the first, the reduced state $\rho_{R_1 \cdots R_n} = \tr_{N-n}(\rho_{R_1 \cdots R_{N}})$ that appears on the left hand side describes $n$ radiation systems within a setup consisting of $N$ black holes. Conversely, the expectation value on the left hand side of~\eqref{eq:pinonsignalling} may be understood as an $N$-independent quantity that refers to an experiment where only $n$ black holes are prepared in the first place. But this means that~\eqref{eq:pinonsignalling}~can only be valid for all $m \in \mathbb{N}$ under the additional assumption that an experiment on the radiation fields of $n$ black holes does not depend on the presence of additional $m=N-n$ black holes. 

Since factorisable boundary conditions do not necessarily yield factorisable states (this is precisely the factorisation problem), it is generally challenging to verify~\eqref{eq:pinonsignalling} by a direct calculation.\footnote{Non-factorisability of the state does however not contradict non-signalling.} The validity of~\eqref{eq:pinonsignalling} is thus often imposed by assumption.\footnote{In the context of path integral calculations based on the replica trick, one usually makes the additional assumption that $\int_{\mathcal{B}^{\times n}}\mathcal{D}g\mathcal{D}\psi\, e^{-I[g,\psi]} = (\int_{\mathcal{B}}\mathcal{D}g\mathcal{D}\psi\, e^{-I[g,\psi]})^n $; see the discussion around~\eqref{eq:replica_trick_2}.} For the purpose of this work, it suffices however to assume that this condition holds asymptotically for large~$m$, in the sense that the expression on the right hand side of~\eqref{eq:pinonsignalling} converges to a well-defined value in the limit $m \to \infty$. We then take this to be the definition of the expectation value on the left hand side.

\subsection{Definition of the reference~$W$} \label{sec:reference}

Suppose that we generate an $N$-black-hole system with radiation subsystems $R_1, \ldots, R_N$, by imposing boundary conditions~$\mathcal{B}^{\times N}$ as described in Section~\ref{sec:manyblackholes}, for any even integer~$N$. The quantum de Finetti theorem, Theorem~\ref{thm:definetti_state},\footnote{The robustness of the de Finetti theorem, i.e., the approximate validity of the statement for finite~$N$, is important for this construction to have an operational meaning. Note that any given procedure for generating black holes may satisfy our requirements (e.g., that they are separated by large spatial distances and hence approximately have a subsystem structure) only up to a finite number $N$. We may however still safely assume that, given any arbitrary $N$, there exists a procedure that generates $N$ black holes that meet our requirements.} asserts the existence of a quantum-classical extension $\rho_{R_1 \cdots R_{N/2} W}$ of the state of $N/2$ radiation systems that includes an extra system~$W$, the reference, such that the following holds. There exists a de Finetti approximation of the form\footnote{The superscript $(N)$ shall  remind us that the state is an approximation that depends on $N$.}
\begin{align} \label{eq:qcstate}
  \rho^{(N)}_{R_1 \cdots R_{N/2} W} = \sum_w\, p^{(N)}_w \rho_{R|w}^{\otimes N/2} \otimes\ketbra{w}{w}_W \ ,
\end{align}
such that the reduced states on $n \leq N/2$ radiation systems satisfy 
\begin{align} \label{eq:dFapprox}
  \| \rho_{R_1 \cdots R_n W} - \rho^{(N)}_{R_1 \cdots R_n W} \| \leq \O(\sqrt{n/N})
\end{align}
and, for any $i \in \{1, \ldots, N/2\}$,
\begin{align} \label{eq:dFprecise}
  \rho_{R_i W} = \rho^{(N)}_{R_i W} \ .
\end{align}

Note that, if we knew the value $w$ that $W$ admits, we could say that any individual radiation system $R=R_i$ is in state $\rho_{R|w}$. In particular, conditioned on ${W=w}$, any of them has  entropy $S(R|{W=w})_{\rho^{(N)}} = S(R)_{\rho_{|w}}$. According to~\eqref{eq:qcconditionalvN}, the expectation of this entropy over all possible values of $W$ is equal to the conditional entropy, $S(R|W)_{\rho^{(N)}}$. Note furthermore that, due to~\eqref{eq:dFprecise}, this value may as well be understood as an entropy of the quantum-classical extension of the original state rather than its de Finetti approximation.

The conditional entropy $S(R|W)_{\rho^{(N)}}$ and the corresponding conditional R\'enyi entropies $S_n(R|W)_{\rho^{(N)}}$ may nonetheless depend on the mechanism that generates the black holes. Since this mechanism can in principle be different for different values~$N$, we need to impose a compatibility requirement. Intuitively, it is reasonable to demand that adding another black hole to a collection of $N$ black holes has no noticeable impact on the state of any single black hole. This requirement is analogous to the non-signalling condition~\eqref{eq:pinonsignalling} for path integrals. For our derivations, we will however only use the (weaker) assumption that the radiation $R$ and its relation to $W$ are approximately independent of $N$ for large~$N$. More precisely, we require that the probability distribution $p^{(N)}$ of the states $\rho_{R|w}$ of $R$ conditioned on $W$, which is defined by $\smash{\rho^{(N)}_{R W}}$, is convergent (in distribution) in the limit $N \to \infty$. Since the von Neumann entropy is continuous as a function of the state, this assumption implies that $S(R|W)_{\rho^{(N)}}$ converges in the $N \to \infty$ limit. We may thus define the entropy of a single radiation field conditioned on the reference $W$ in the limit where many black holes are generated by 
\begin{align} \label{eq:condW}
  S(R|W) := \lim_{N \to \infty} S(R|W)_{\rho^{(N)}} \ .    
\end{align}
We note that this definition is robust in the sense that it does not depend on the details of how the quantum de Finetti theorem is employed. In particular, as asserted by Lemma~\ref{lem:uniqueness} in Appendix~\ref{app:conditionalentropyuniqueness}, we could as well have applied the de Finetti theorem to a subset of the $N$ radiation systems rather than to all of them. 


\subsection{Equivalence of swap entropy and conditional entropy} \label{sec:swapcond}

We are now ready to state our first main result, which will help us answering Question~\ref{qn:1}. To establish the link to the phrasing of the question, we first note that Hawking's original argument~\cite{hawking1975particle,hawking1976breakdown} involves an explicit calculation of the particle number distribution in the different modes of the radiation field $R$ of a black hole, from which one may then read off the von Neumann entropy~$S(R)$. Conversely, the quantity that was calculated based on the replica trick is the swap entropy $S^{\mathrm{swap}}(R)_{\mathcal{B}}$ (see Section~\ref{sec:replica} for details). 

\begin{cl}\label{claim:main1}
  Let $\mathcal{B}$ be boundary conditions for a black hole with radiation field~$R$. Then
    \begin{align}\label{eq:main1}
     S^{\mathrm{swap}}(R)_{\mathcal{B}} 
     = S(R|W)
  \end{align}
  where the right hand side is the entropy of $R$ conditioned on the reference~$W$, defined by~\eqref{eq:condW} for an $N$-black-hole system with boundary conditions $\mathcal{B}^{\times N}$.
  \end{cl}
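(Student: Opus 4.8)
The plan is to establish the claim in two stages: first prove the integer-order identity $S_n^{\mathrm{swap}}(R)_{\mathcal{B}} = \lim_{N\to\infty} S_n(R|W)_{\rho^{(N)}}$ for every integer $n\geq 2$, where the de Finetti theorem does the real work, and then pass to the $n\to 1$ limit on both sides, where the genuine difficulty lies.

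For the integer-order identity I would start from the definition \eqref{eq:replica_trick_1} of the swap entropy, which reduces everything to the quantity $\langle \tau_n\rangle_{\mathcal{B}^{\times n}}$. Invoking the non-signalling property for path integrals \eqref{eq:pinonsignalling} in the limit $m\to\infty$, I would rewrite this as $\langle \tau_n\rangle_{\mathcal{B}^{\times n}} = \lim_{N\to\infty}\tr(\rho^{(N)}_{R_1\cdots R_n}\tau_n)$, where $\rho^{(N)}_{R_1\cdots R_n}$ is the reduced state on $n$ radiation subsystems of the permutation-invariant $N$-black-hole state of Section~\ref{sec:manyblackholes}. Now I would apply Theorem~\ref{thm:definetti_state}: for $n\leq N/2$ the state $\rho^{(N)}_{R_1\cdots R_n}$ (after tracing out $W$) is within trace distance $\O(\sqrt{n/N})$ of the de Finetti state $\sum_w p^{(N)}_w \rho_{R|w}^{\otimes n}$ appearing in \eqref{eq:qcstate}. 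Since $\tau_n$ is unitary and hence bounded, the map $\sigma \mapsto \tr(\sigma\tau_n)$ is continuous in trace norm, so for fixed $n$ the approximation error vanishes as $N\to\infty$ and
\[
\langle \tau_n\rangle_{\mathcal{B}^{\times n}} = \lim_{N\to\infty}\sum_w p^{(N)}_w \tr(\rho_{R|w}^{\otimes n}\tau_n) = \lim_{N\to\infty}\sum_w p^{(N)}_w \tr(\rho_{R|w}^n),
\]
where the last equality is the replica identity \eqref{eq:replicabasic} applied to each $\rho_{R|w}$. Comparing with the expression \eqref{eq:qcconditionalRenyi} for the conditional R\'enyi entropy of order $n$ with classical conditioning system $W$ then yields the integer-order identity directly. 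This step is also what dissolves the state paradox: the factorisation failure of $\rho_{R_1\cdots R_n}$ is exactly accounted for by the de Finetti mixture over~$w$.

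It remains to take $n\to 1$. By \eqref{eq:swaplimit} the left-hand side becomes $S^{\mathrm{swap}}(R)_{\mathcal{B}}$. On the right-hand side, for each fixed $N$ the limit $\lim_{n\to1}S_n(R|W)_{\rho^{(N)}}$ equals $S(R|W)_{\rho^{(N)}}$ by \eqref{eq:Renyilimit}, and $\lim_{N\to\infty}S(R|W)_{\rho^{(N)}} = S(R|W)$ by definition \eqref{eq:condW}. The claim therefore follows as soon as the two limits $n\to1$ and $N\to\infty$ can be interchanged.

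I expect this interchange of limits to be the main obstacle, and it is also where the analytic-continuation subtleties of the replica trick enter: the integer-order identity holds only as an equality of functions of the discrete variable $n$, whereas $S^{\mathrm{swap}}$ and $S(R|W)$ arise from continuing to $n\to1$, so one must argue that the continuation of $\lim_N S_n(R|W)_{\rho^{(N)}}$ agrees with $\lim_N$ of the continuation. I would control this using the structural hypotheses already in place: $\dim(R)$ is finite, so the quantities $S_n(R|W{=}w)$ are bounded by $\log\dim(R)$ uniformly in $n$ and $w$, and the distribution $p^{(N)}$ is assumed to converge in distribution as $N\to\infty$. Together these should supply enough uniformity --- via a dominated-convergence or equicontinuity argument on the finite-dimensional, bounded-spectrum family $\{\rho_{R|w}\}$ --- to justify exchanging the order of the two limits and so to identify the analytically continued $n\to1$ values on the two sides, closing the proof.
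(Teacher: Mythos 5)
Your reduction to the integer-order identity is sound and essentially coincides with the paper's first step: non-signalling turns $\langle\tau_n\rangle_{\mathcal{B}^{\times n}}$ into an $N\to\infty$ limit over the $N$-black-hole state, the de Finetti theorem replaces the reduced state by $\sum_w p_w^{(N)}\rho_{R|w}^{\otimes n}$ (your remark that the $\O(\sqrt{n/N})$ error is killed by trace-norm continuity of $\sigma\mapsto\tr(\sigma\tau_n)$ makes explicit what the paper leaves implicit), and \eqref{eq:qcconditionalRenyi} identifies the result with $2^{(1-n)S_n(R|W)_{\rho^{(N)}}}$. Your handling of the $n\to1$ endpoint is also in the right spirit: the paper makes your ``equicontinuity'' precise via the bound $S(R|W)_{\rho^{(N)}}\ge S_n(R|W)_{\rho^{(N)}}\ge S(R|W)_{\rho^{(N)}}-\mathrm{const}\,(n-1)(\log|R|)^2$ (Lemma~8 of~\cite{tomamichel2009fully}), which is uniform in $N$ precisely because $|R|$ is fixed, and this legitimately swaps $\lim_{n\to1}$ and $\lim_{N\to\infty}$.

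The genuine gap lies in between these two steps: you never get from agreement at integers $n\ge2$ to agreement at real $n>1$, and no dominated-convergence or equicontinuity argument can do this. The left-hand side $S^{\mathrm{swap}}(R)_{\mathcal{B}}$ is defined through \eqref{eq:swaplimit} by analytically continuing a function that the path integral specifies only at integers; such a continuation is not unique without a growth condition, since one can add any multiple of $\sin(\pi n)$ (or any function vanishing on the integers) without disturbing the integer data. The paper closes this with two ingredients absent from your proposal: (i)~Lemma~\ref{lem:distributionconvergence}, which upgrades the assumed convergence in distribution of $p^{(N)}$ to \emph{uniform} convergence of $g_N(n)=2^{(1-n)S_n(R|W)_{\rho^{(N)}}}$ on compact subsets of $\Re(n)\ge1$, so that $g=\lim_N g_N$ is continuous there and analytic on $\Re(n)>1$; and (ii)~Carlson's theorem, applicable because $|g|\le1$ on the half plane, which is the uniqueness principle forcing the Carlson-admissible continuation $f$ of the integer swap-entropy data to coincide with $g$ on all of $\Re(n)\ge1$. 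Only after this identification does one have $S_n^{\mathrm{swap}}(R)_{\mathcal{B}}=\lim_{N\to\infty}S_n(R|W)_{\rho^{(N)}}$ for all real $n>1$, at which point your limit-interchange step finishes the proof. Your proposal flags the continuation issue but then tries to absorb it into the interchange of limits, where finiteness of $\dim(R)$ is of no help: the obstruction is non-uniqueness of analytic continuation, not lack of uniformity, and it is removed only by a Carlson-type rigidity statement.
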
 

We remark that, although the quantity~$S(R|W)$ is defined in the $N \to \infty$ limit, Claim~\ref{claim:main1} may be understood as a statement about experiments that require the creation of only a finite number $N$ of black holes. Concretely, for any $\eps > 0$ there exists $N_0$ such that $|{S^{\mathrm{swap}}(R)_{\mathcal{B}}-S(R|W)_{\rho^{(N)}}}|<\eps$ holds for all $N \geq N_0$.

\begin{proof}
By~\eqref{eq:replica_trick_1} and the non-signalling property described in Section~\ref{sec:nonsignalling}, the swap entropy of order~$n$, for any integer, $2 \leq n \leq N$, is given by
\begin{align}
  S^{\mathrm{swap}}_{n}(R)_{\mathcal{B}} 
  = \lim_{N \to \infty} \frac{1}{1-n} \log \langle \tau_n \otimes I^{N-n} \rangle_{\mathcal{B}^{\times N}}\ ,
\end{align}
where $\tau_n$ denotes the cyclic shift operator on $R_1 \cdots R_N$ described in Section~\ref{sec:replica}. Let $\rho_{R_1 \cdots R_N}$ be the joint state of the  $N$ radiation fields corresponding to the boundary conditions $\mathcal{B}^{\times N}$ as defined in Section~\ref{sec:manyblackholes}. Let ${\rho}^{(N)}_{R_1 \cdots R_{N/2} W}$ be the de Finetti state of the form~\eqref{eq:qcstate} that approximates $\rho_{R_1 \cdots R_N}$ as described in  Section~\ref{sec:reference}. We may then express the swap entropy in terms of this state as
\begin{align} \label{eq:swapapprox}
    S^{\mathrm{swap}}_{n}(R)_{\mathcal{B}} 
    =  \lim_{N \to \infty} \frac{1}{1-n}\log  \tr\bigl(\tau_n \sum_w\, p_w^{(N)} \rho^{\otimes n}_{R|w}\bigr)  \ ,
\end{align}
Furthermore, by linearity, the term in the logarithm can be written as
\begin{align}\label{eq:Renyiterm}
      \tr\bigl(\tau_n \sum_w\, p_w^{(N)} \rho_{R|w}^{\otimes n}\bigr)
    =  \sum_w\, p_w^{(N)} \tr(\tau_n \rho_{R|w}^{\otimes n}) 
    =  \sum_w\, p_w^{(N)} \tr(\rho_{R|w}^{n}) 
    = 2^{(1-n)S_n(R|W)_{{\rho}^{(N)}}}\ ,
\end{align}
where we used~\eqref{eq:qcconditionalRenyi} to establish that the sum over~$w$ corresponds to the conditional R\'enyi entropy of order~$n$ evaluated for the state ${\rho}_{R W}^{(N)} = \sum_w p_w^{(N)} \rho_{R|w} \otimes \ketbra{w}{w}_W$. Combining this with~\eqref{eq:swapapprox}, we find that for any integer $n \geq 2$,
\begin{align} \label{eq:swapeq}
    S^{\mathrm{swap}}_{n}(R)_{\mathcal{B}} 
    =  \lim_{N \to \infty} \frac{1}{1-n}\log \Bigl( 2^{(1-n)S_n(R|W)_{{\rho}^{(N)}}} \Bigr) \ .
\end{align}
To proceed, it is convenient to define
\begin{align}
  f(n) & := 2^{(1-n) S_n^{\mathrm{swap}}(R)_{\mathcal{B}}} \\ g_N(n) & := 2^{(1-n) S_n(R|W)_{\rho^{(N)}}} =  \sum_w\, p_w^{(N)} \tr(\rho_{R|w}^n) 
\end{align}
for any $n$ on the complex half plane $\Re(n) \geq 1$. The function $f(n)$ is thus, by construction, given by the analytic continuation of the swap entropy.\footnote{More precisely, the function $f(n) $ is defined as the unique continuation of $2^{(1-n)S_n^{\mathrm{swap}}(R)_{\mathcal{B}}}$ that satisfies the assumptions of Carlson's theorem; see Footnote~\ref{ftn:Carlson} below.} \eqref{eq:swapeq}~tells us that, for integers $n \geq 2$,  the limit $g(n) := \lim_{N \to \infty} g_N(n)$ exists and satisfies 
\begin{align} \label{eq:fgequal}
    f(n) = g(n) \ .
\end{align}
To extend this equality to non-integer values $n>1$, we first note that any function $g_N$ is a finite sum of functions that are analytic in the complex half plane $\Re(n) \geq 1$ and hence itself analytic in this half plane. Furthermore, due to our assumption that the distribution of the conditional states $\rho_{R|w}$ of a single radiation field $R$ is convergent in the limit $N \to \infty$ (see Section~\ref{sec:reference}), Lemma~\ref{lem:distributionconvergence} in Appendix~\ref{app:Renyiconvergence} implies the uniform convergence of the sequence $\{g_N\}_{N \in \mathbb{N}}$ in any compact region within the half plane $\Re(n) \geq 1$.  It then follows from standard theorems of complex analysis that $g = \lim_{N \to \infty} g_N$ is continuous in any such region and hence in the entire half plane $\Re(n) \geq 1$. Similarly, $g$ is analytic in the half plane $\Re(n) > 1$.  Furthermore, $|g|$ is bounded by $1$ in this half plane and hence satisfies the assumptions of Carlson's theorem (see e.g.~\cite{boas2011entire}).\footnote{\label{ftn:Carlson}Carlson's theorem applies to complex functions $f$ that are analytic on the strip $\Re(n) > 1$, continuous on $\Re(n) \geq 1$, and do not grow exponentially fast, i.e., $|f(n)|\le Ce^{\tau|n|}$, for some real constants $C$ and $\tau$, and such that this bound holds with $\tau<\pi$ whenever $\Re(n)=1$. The theorem asserts that such a function is uniquely defined by its values for all but finitely many  $n \in \mathbb{N}$.} We may thus conclude  that~\eqref{eq:fgequal} holds whenever $\Re(n) \geq 1$. But this implies that, for any $n > 1$, 
\begin{align} \label{eq:swapRenyi}
    S_n^{\mathrm{swap}}(R)_{\mathcal{B}} 
    = \lim_{N \to \infty}  S_n(R|W)_{\rho^{(N)} } \ .
\end{align}

To finish our proof we consider the $n \to 1$ limit of~\eqref{eq:swapRenyi}. By the monotonicity of the R\'enyi entropy in $n$ and Lemma~8 in~\cite{tomamichel2009fully}, we have for small enough $n>1$
\begin{align}
    S(R|W)_{{\rho}^{(N)}}\ge S_n(R|W)_{{\rho}^{(N)}}\ge S(R|W)_{{\rho}^{(N)}} - \mathrm{const} \, (n-1) (\log|R|)^2 \ .
\end{align}
Consequently, as $n$ approaches~$1$, $S_n(R|W)_{\rho^{(N)}}$ converges uniformly in $N$ to $S(R|W)_{\rho^{(N)}}$. This means that
\begin{align}
    \lim_{n \to 1} \lim_{N \to \infty} S_n(R|W)_{\rho^{(N)} }
    = \lim_{N \to \infty} \lim_{n \to 1} S_n(R|W)_{\rho^{(N)} }   
    = \lim_{N \to \infty} S(R|W)_{\rho^{(N)}} 
    = S(R|W) \ .
\end{align}
We have thus established that the $n \to 1$ limit of~\eqref{eq:swapRenyi} yields~\eqref{eq:main1}. 
\end{proof}

Claim~\ref{claim:main1} immediately answers the first part of  Question \ref{qn:1}. The replica trick is in fact computing the von Neumann entropy $S(R|W)$ of the radiation $R$ conditioned on the reference~$W$. This quantity is obviously different from the entropy that Hawking calculated, $S(R)$, which does \emph{not} involve any conditioning. The discrepancy between the results of the two calculations is thus due to the fact that they refer to different physical quantities.

Since $S^{\mathrm{swap}}(R)$ follows the Page curve, as shown by the  recent results based on the replica trick~\cite{penington2019replica,almheiri2020replica}, it is clear from Claim~\ref{claim:main1} that the same must be true for the conditional entropy $S(R|W)$; see Fig.~\ref{fig:page1}. In particular, this entropy approaches zero as the black hole evaporates completely. This, in turn, means that the mutual information 
\begin{align} \label{eq:mutual}
  I(R:W) = S(R) - S(R|W)
\end{align}
becomes large for an old black hole. The reference $W$ thus plays a crucial role when we want to understand the dynamics of a black hole.

This conclusion is compatible with the general idea that the discrepancy between Hawking's result and the recent calculations using the replica trick may be due to the fact that the semi-classical path integral captures not one single quantum theory but rather an ensemble of theories~\cite{bousso2020gravity,bousso2020unitarity,engelhardt2021free,pollack2020eigenstate,liu2021entanglement}. To see this, note that, by virtue of~\eqref{eq:qcconditionalvN}, the conditional entropy of the radiation may be expressed as an expectation value, so that 
\begin{align} \label{eq:entropyaverage}
   S^{\mathrm{swap}}(R) = \sum_w p_w S(R|W=w) \ .
\end{align}
Hence, assuming that the black hole behaviour is governed by an ensemble of theories parametrised by $w$, the swap entropy can be interpreted as the ensemble average of the entropy of the radiation state $\rho_{R|w}$ resulting from the evolution prescribed by any particular theory. This picture, which suggests a duality between gravity and an ensemble of quantum theories, is supported by two-dimensional toy models, such as 2D dilaton gravity which is dual to the SYK model defined with a disorder average~\cite{saad2018semiclassical,saad2019jt,saad2019late,stanford2019jt}. Conversely, Hawking's ever-increasing entropy $S(R)$ may  be understood as the entropy of the  mixed state, $\rho_R = \sum_w p_w \rho_{R|w}$,  defined by the effective evolution induced by the ensemble of theories. 

If one has a (possibly effective) theory that allows one to compute the joint radiation state $\rho_{R_1 \cdots R_n}$ of many black holes, one may, following our derivation above, extract expressions for the probabilities $p_w$ and the conditional states $\rho_{R|w}$ from the de Finetti state that approximates $\rho_{R_1 \cdots R_n}$.  A nice example for how $W$ may look like is the black hole toy model of Penington, Shenker, Stanford, and Yang (PSSY)~\cite{penington2019replica}, which has an explicit matrix ensemble as its dual (cf.\ equation~(D.10) in~\cite{penington2019replica}).\footnote{See Section~\ref{sec:pssy} for a short description of the PSSY model.}

The fact that the swap entropy corresponds to an average of entropies  as in~\eqref{eq:entropyaverage} raises the question whether, by typicality, we may also obtain a statement for the individual entropies $S(R|W=w)$.  It is often argued that the entropy calcualted by the replica trick is \emph{self-averaging}~\cite{penington2019replica}, but we are still lacking a general typicality statement. Nonetheless, a criterion for typicality may be obtained from the conditional R\'enyi entropies. Note that, assuming an upper bound of $k$ on the Hilbert space dimension,  the spectrum of any conditional state $\rho_{R|w}$ can be inferred from the first $k-1$ moments which are given by the integer R\'enyi entropies, $S_2(R|{W=w}),\cdots, S_k(R|{W=w})$, using the Newton-Girard method~\cite{song2012,troyer2017}. Furthermore, $2^{(1-n) S_n(R|W)}$, for $n>1$, corresponds to the expectation value of the higher moments of these entropies. One may thus test whether the entropies $S_2(R|W),\cdots, S_k(R|W)$ accurately predict the value of other conditional R\'enyi entropies. This test would succeed if the values $S_n(R|{W=w})$ do not fluctuate depending on~$w$, but may fail otherwise. Such a calculation may be carried out in concrete black hole models, such as the PSSY model, where the R\'enyi entropies can be evaluated by summing over semi-classical saddles.  

Let us now turn to the second part of Question~\ref{qn:1}. The de Finetti approximation $\sum_w  p_w \rho_{R|w}^{\otimes n}$ describes $n$ radiation systems which, individually, are in the same unknown state $\rho_{R|w}$, whereas the probability distribution $p$ serves as a Bayesian prior that captures the uncertainty one may have about that state~\cite{caves2002unknown}. This means that an estimate for the state $\rho_{R|w}$ can be found experimentally by applying state tomography to the $n$ systems (e.g., the scheme proposed in~\cite{christandl2012reliable}, which relies on the sole assumption that the joint state is permutation-invariant). Based on this estimate one may then calculate an approximation for $S(R|{W=w})$.\footnote{There also exist methods to determine $S(R|{W=w})$ more directly, e.g., the Empirical Young Diagram algorithm~\cite{keyl2001estimating,acharya2019measuring}.} Under suitable typicality assumptions, this entropy is also a good estimate for its average over~$w$. 

We conclude that $S(R|W)$ can be determined by tomography in a many-black-hole experiment and thus has a well-defined operational meaning. Conversely, to determine the entropy $S(R)$ from measurements of the Hawking radiation, we would need a state of the form $\left(\sum_w\, p_w \rho_{R|w}\right)^{\otimes n}$. But this is \emph{not} the de Finetti state we would obtain when running an experiment that generates many black holes. We may thus summarise our answer to the last part of Question~\ref{qn:1} as follows: The entropy calculated by Hawking, $S(R)$, cannot be determined experimentally, whereas the entropy obtained via the replica trick, $S^{\mathrm{swap}}(R) = S(R|W)$, can.

\subsection{Equivalence of conditional entropy and regularised entropy} \label{sec:condcorr}

To answer Question~\ref{qn:2}, we prove the following result, which is again a consequence of the quantum de Finetti theorem.

\begin{cl}\label{claim:main2}
  The entropy of the radiation field $R$ of any individual black hole conditioned on the reference $W$, defined by~\eqref{eq:condW} within an $N$-black-hole scenario, satisfies 
  \begin{align} \label{eq:main2}
      S(R|W) = \lim_{N\rightarrow \infty}\frac1N S(R_1\cdots R_N)_{\rho} \ ,
  \end{align}
 where $S(R_1\cdots R_N)_\rho$ is the entropy of the joint state of the radiation fields of all $N$ black holes. 
\end{cl}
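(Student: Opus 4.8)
The plan is to establish \eqref{eq:main2} by relating the regularised entropy directly to the de Finetti decomposition, exploiting the fact that conditioning on the reference $W$ renders the radiation fields independent and identically distributed. The key observation is that the joint state $\rho_{R_1 \cdots R_N}$ is well approximated (in the sense of \eqref{eq:dFapprox}) by the de Finetti state $\rho^{(N)}_{R_1 \cdots R_N W} = \sum_w p_w^{(N)} \rho_{R|w}^{\otimes N} \otimes \ketbra{w}{w}_W$, in which the $R_i$'s are conditionally product. First I would compute the entropy of this de Finetti state exactly. Since $W$ is classical, the chain rule gives
\begin{align}
S(R_1 \cdots R_N W)_{\rho^{(N)}} = S(W)_{\rho^{(N)}} + \sum_w p_w^{(N)} S(R_1 \cdots R_N \mid W=w)_{\rho^{(N)}} = S(W)_{\rho^{(N)}} + N \sum_w p_w^{(N)} S(R)_{\rho_{|w}} \, ,
\end{align}
where the last equality uses additivity of von Neumann entropy on the product state $\rho_{R|w}^{\otimes N}$. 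The sum over $w$ is precisely $S(R|W)_{\rho^{(N)}}$ by \eqref{eq:qcconditionalvN}.

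The next step is to discard the contribution of $W$ in the large-$N$ limit. Dividing by $N$ and using $S(R_1 \cdots R_N)_{\rho^{(N)}} \geq S(R_1 \cdots R_N W)_{\rho^{(N)}} - S(W)_{\rho^{(N)}}$ together with subadditivity in the other direction, I would show that the $W$-system contributes $S(W)_{\rho^{(N)}}/N$, which vanishes as $N \to \infty$ provided $|\mathcal{W}|$ grows subexponentially in $N$ (indeed $S(W) \leq \log|\mathcal{W}|$, and one expects $\mathcal{W}$ to be controlled by the fixed single-system dimension $\dim(R)$ rather than by $N$). Hence
\begin{align}
\lim_{N \to \infty} \frac{1}{N} S(R_1 \cdots R_N)_{\rho^{(N)}} = \lim_{N \to \infty} S(R|W)_{\rho^{(N)}} = S(R|W) \, ,
\end{align}
where the final equality is the definition \eqref{eq:condW}.

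It then remains to transfer this conclusion from the de Finetti approximation $\rho^{(N)}$ to the true state $\rho$. For this I would invoke the Fannes-Audenaert continuity bound for the von Neumann entropy: since the relevant Hilbert space has dimension $\dim(R)^N$, a trace-distance error of order $\delta$ translates into an entropy error of order $\delta \, N \log \dim(R) + h(\delta)$, and after dividing by $N$ this leaves a term of order $\delta \log \dim(R)$. The crucial input is that \eqref{eq:dFapprox} must be applied at $n = N$, but the bound $\O(\sqrt{n/N})$ degenerates there. The honest route is therefore to apply the de Finetti theorem at level $2N$ (or more generally $cN$ for large $c$), so that the $N$-partite reduced state is approximated with trace-distance error $\O(\sqrt{N/(2N)}) = \O(1)$ — still not small. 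This is the main obstacle: the robust de Finetti bound is simply too weak to control the \emph{full} $N$-partite state, so a naive application fails.

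The resolution I would adopt is to avoid estimating the full joint state directly and instead bound $\frac{1}{N} S(R_1 \cdots R_N)_\rho$ from both sides using only the few-body reduced states, where the de Finetti approximation \emph{is} accurate. For the lower bound, subadditivity gives $S(R_1 \cdots R_N)_\rho \leq \sum_i S(R_i)_\rho = N S(R)_\rho$ trivially, but the direction I actually need is an upper bound on the regularised entropy by $S(R|W)$ and a matching lower bound. Here I would use that $S(R|W)_{\rho^{(N)}}$ is itself defined through the regularisation and appeal to the permutation invariance of $\rho_{R_1 \cdots R_N}$ together with strong subadditivity: permutation invariance guarantees $\frac1N S(R_1\cdots R_N)_\rho$ is monotonically non-increasing in $N$ (a standard subadditivity argument for symmetric states), so the limit exists, and its value is pinned down by evaluating the conditional entropy on small blocks where the de Finetti bound \eqref{eq:dFapprox} controls the error by $\O(\sqrt{n/N}) \to 0$ when $n$ is fixed and $N \to \infty$. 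Combining the monotone convergence of the regularised entropy with the block-wise identification $\frac{1}{n} S(R_1 \cdots R_n)_\rho \to S(R|W)$ yields \eqref{eq:main2}. I expect the delicate point to be justifying the interchange of the $n \to \infty$ and $N \to \infty$ limits cleanly, which is exactly where the convergence-in-distribution assumption on $p^{(N)}$ from Section~\ref{sec:reference} and the uniform control furnished by Lemma~\ref{lem:distributionconvergence} do the essential work.
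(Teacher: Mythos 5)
You correctly isolate the central obstacle (the bound \eqref{eq:dFapprox} degenerates at $n=N$), and your machinery does yield one half of Claim~\ref{claim:main2}: for each fixed $N$, permutation invariance and strong subadditivity give $\frac{1}{N}S(R_1\cdots R_N)_{\rho}\le\frac{1}{n}S(R_1\cdots R_n)_{\rho}$ for $n\le N$, and on such small blocks the de Finetti approximation plus Fannes--Audenaert identifies $\frac{1}{n}S(R_1\cdots R_n)_{\rho}$ with $S(R|W)_{\rho^{(N)}}$ up to $\frac{1}{n}I(R_1\cdots R_n:W)_{\rho^{(N)}}$; this is essentially the paper's first direction, implemented there with blocks of size $\sqrt{N}$. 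The genuine gap is the \emph{matching lower bound}, $\liminf_{N}\frac1N S(R_1\cdots R_N)_{\rho}\ge S(R|W)$, which your tools cannot produce: monotonicity and subadditivity only ever bound the regularised entropy from \emph{above} by few-body entropies, and closeness of the few-body marginals to a high-entropy de Finetti mixture does not by itself prevent the global entropy from being much smaller (marginal information alone never lower-bounds global entropy; for permutation-invariant states the lower bound is true, but that is exactly the statement requiring proof). The paper proves it by writing $S(R_1\cdots R_N)_{\rho}=\sum_k S(R_k|R_1\cdots R_{k-1})_{\rho}$ and, for each $k\le N-\sqrt{N}$, applying Theorem~\ref{thm:definetti_state} \emph{with side information} $E=R_1\cdots R_{k-1}$, keeping $M\ge\sqrt{N}$ untouched systems so that exchangeability relative to $E$ still holds; conditioned on $W$ the resulting state \eqref{eq:generaldFstate} is product between $R_k$ and $E$, so data processing gives $S(R_k|R_1\cdots R_{k-1})_{\rho}\ge S(R|W)_{\rho^{(N,M)}}-\O(M^{-1/4})$, and Lemma~\ref{lem:uniqueness} identifies the limit of $S(R|W)_{\rho^{(N,M)}}$ with $S(R|W)$. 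This side-information application of the de Finetti theorem is the missing idea; without it your ``matching lower bound'' is only asserted, not derived.

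Two smaller problems. First, your control of the reference's entropy is unfounded: the set $\mathcal{W}$ produced by Theorem~\ref{thm:definetti_state} is a quantum $N$-design, whose cardinality necessarily grows with $N$, so neither $S(W)\le\log|\mathcal{W}|$ nor $I(R_1\cdots R_n:W)\le S(W)$ is uniformly controlled for fixed $n$ and $N\to\infty$; your expectation that $|\mathcal{W}|$ is ``controlled by the fixed single-system dimension'' fails. The paper instead purifies the conditional states and uses that the purified de Finetti state is supported on $\mathrm{Sym}^n(\h_S)$, whose dimension is $\O(\mathrm{poly}(n))$, giving $I(R_1\cdots R_n:W)_{\rho^{(N)}}\le\O(\log n)$ uniformly in $N$; your block-wise identification needs this bound (or an equivalent) to survive the $N\to\infty$ limit at fixed $n$. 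Second, your diagnosis of the delicate point is off: Lemma~\ref{lem:distributionconvergence} enters only the proof of Claim~\ref{claim:main1} (for the Carlson-theorem continuation of the R\'enyi entropies), not Claim~\ref{claim:main2}; the relevant auxiliary result here is Lemma~\ref{lem:uniqueness}. Relatedly, your monotonicity ``in $N$'' presupposes that the $N$-black-hole states for different $N$ form a consistent family, which the paper does not assume; only the within-state inequality quoted above is available, and it suffices for the upper bound alone.
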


Note that, analogously to what is said in the remark after Claim~\ref{claim:main1}, this claim may as well be understood as a statement that holds (approximately) for large but finite~$N$.

\begin{proof}

We start with the direction
 \begin{align} \label{eq:directionone}
      S(R|W) \ge \lim_{N\rightarrow \infty}\frac1N S(R_1\cdots R_N)_{\rho} \ .
  \end{align}
Assume for simplicity that $N = n^2$ for $n \in \mathbb{N}$ and consider a partition of the $N$ radiation systems into $N/n$ blocks of size $n$. Applying the chain rule for the conditional von Neumann entropy recursively, we can decompose $S(R_1\cdots R_N)_{\rho}$ into a sum of entropies for each block,
\begin{multline} \label{eq:entropydecomposition}
    S(R_1\cdots R_N)_{\rho} \\ = S(R_1\cdots R_n)_{\rho}+S(R_{n+1}\cdots R_{2n}|R_1\cdots R_n)_{\rho}+\cdots+S(R_{N-n+1}\cdots R_{N}|R_1\cdots R_{N-n})_{\rho} \ ,
\end{multline}
where the blocks are ordered and the entropy of each block is conditioned on all the previous blocks in the order. The data processing inequality then implies that
\begin{equation}\label{eq:reduction1}
\begin{aligned}
    S(R_1\cdots R_N)_{\rho} \le& S(R_1\cdots R_n)_{\rho}+S(R_{n+1}\cdots R_{2n})_{\rho}+\cdots+S(R_{N-n+1}\cdots R_{N})_{\rho} \\
    =& \frac{N}{n}S(R_1\cdots R_n)_{\rho} \ ,
\end{aligned}
\end{equation}
where the equality holds due to the permutation symmetry of the state $\rho_{R_1\cdots R_N}$.

 Let $\rho_{R_1\cdots R_n W}^{(N)}$ be a de Finetti approximation of the form~\eqref{eq:qcstate}, whose existence is guaranteed by Theorem~\ref{thm:definetti_state}. For each state $\rho_{R|w}$ that appears in the sum that defines this state, let $\psi_{R R'|w}$ be a purification with purifying system $R'$. We may then consider the extension of $\rho_{R_1\cdots R_n W}^{(N)}$ defined by
\begin{equation}
    \rho^{(N)}_{R_1 R_1'\cdots R_nR_n' W} 
    = \sum_w\, \psi_{S|w}^{\otimes n} \otimes \ketbra{w}{w}_W \ .
\end{equation}
Note that the reduced state on $R_1 R_1' \cdots R_n R_n'$ is a convex combination of products of pure states, $\psi_{S|w}^{\otimes n}$. Since these are obviously invariant under permutations, we can conclude that $\rho_{R_1 R_1' \cdots R_n R_n'}$ is supported on the symmetric subspace\footnote{The symmetric subspace of $n$ systems is spanned by state vectors that are invariant under arbitrary permutations.} $\mathrm{Sym}^n(\h_{S})$, which has dimension of order $\O(\mathrm{poly}(n))$, so that its entropy is bounded by $\O(\log n)$. Together with the data processing inequality, this implies that 
\begin{equation}
    I(R_1\cdots R_n :W)_{\rho^{(N)}}
    \leq
    I(R_1R_1'\cdots R_nR_n' :W)_{\rho^{(N)}}
    \leq
    S(R_1R_1'\cdots R_nR_n')
    \leq \O(\log n) \ .
\end{equation}
 The Fannes-Audenaert continuity bound~\cite{fannes1973continuity,audenaert2007sharp} allows us to upper bound the entropy of $\rho_{R_1 \cdots R_n}$ by the entropy of the approximating de Finetti state $\rho^{(N)}_{R_1 \cdots R_n}$, which has distance $\eps=\O(\sqrt{n/N}) = \O(N^{-1/4})$, 
\begin{align}
        S(R_1\cdots R_n)_{\rho} &\le S(R_1\cdots R_n)_{\rho^{(N)}}+\delta \ ,
\end{align}
where $\delta:=\varepsilon n \log|R|+h(\varepsilon)$ and  $h(p):=-p\log p- (1-p)\log (1-p)$ denotes the binary entropy function. Using the definition of the mutual information, we may upper bound the entropy on the right hand side by
\begin{multline}
 S(R_1\cdots R_n)_{\rho^{(N)}}
        = S(R_1\cdots R_n|W)_{\rho^{(N)}}+I(R_1\cdots R_n:W)_{\rho^{(N)}}\\
         \leq S(R_1\cdots R_n|W)_{\rho^{(N)}}+\O(\log n)
        = n S(R|W)_{\rho^{(N)}}+\O(\log n) \ ,
\end{multline}
where we used~\eqref{eq:qcconditionalvN} and the product structure of the de Finetti state conditioned on ${W=w}$ for the last equality. Inserting these inequalities into the decomposition~\eqref{eq:reduction1} and recalling that $n = N^\frac12$, we get 
\begin{equation}
     S(R_1\cdots R_N)_{\rho} \le N S(R|W)_{\rho^{(N)}}+N^\frac12\O(\log N)+N\eps\log|R|+N^\frac12 h(\eps) \ .
\end{equation}
Dividing both sides by $N$, recalling that $\varepsilon = \O(N^{-1/4})$, and taking the limit $N\to \infty$, we arrive at the desired bound~\eqref{eq:directionone}.

It remains to prove the other direction.  We may again decompose the entropy of all $N$ radiation systems into a sum of entropies of blocks as  in~\eqref{eq:entropydecomposition}, where in this case we choose the block size $n=1$. We again assume for simplicity that $N$ is the square of an integer. Consider the $k$th term of this sum, i.e., $S(R_{k}|R_1\cdots R_{k-1})_{\rho}$. If $k \leq K:=N-\sqrt{N}$ we may bound this term using the general version of Theorem~\ref{thm:definetti_state} that includes the auxiliary system $E$. Taking $E$ to contain all systems that can appear in the conditioning, i.e.,  $E=R_1\cdots R_{k-1}$, we still have permutation invariance on $M$ of the other radiation systems, for any $M \leq N-(K-1) = \sqrt{N} +1$. Theorem~\ref{thm:definetti_state} then implies that the state of any $k$ of the radiation systems is approximated, up to an error $\varepsilon' = \O(M^{-1/2})$, by a state $\rho^{(N,M)}_{R_1 \cdots R_{k}}$, whose extension has the form 
\begin{align}\label{eq:generaldFstate}
    \rho^{(N, M)}_{R_{1}\cdots R_{k}W} = \sum_w p^{(N, M)}_w \otimes \rho_{R_1\cdots R_{k-1}|w} \otimes \rho_{R_k|w} \otimes\ketbra{w}{w}_W \ .
\end{align}
The Fannes-Audenaert bound implies
\begin{multline} \label{eq:Slbound}
    S(R_{k}|R_1\cdots R_{k-1})_{\rho} \ge S(R_{k}|R_1\cdots R_{k-1})_{\rho^{(N,M)}}-\delta' \\
    \ge S(R_{k}|R_1\cdots R_{k-1}W)_{\rho^{(N,M)}}-\delta' 
    = S(R|W)_{\rho^{(N,M)}}-\delta' \ ,
\end{multline}
where $\delta':=\eps' \log|R|+h(\eps') = \O(M^{-1/4})$, the second inequality follows from the data processing inequality, and the equality is a consequence of the product structure in~\eqref{eq:generaldFstate}. We now apply the bound~\eqref{eq:Slbound} to the first  $K$ terms on the right hand side of~\eqref{eq:entropydecomposition} and note that the remaining $N-K = \sqrt{N}$ terms are lower bounded by a negative constant to obtain 
\begin{multline}
    S(R_1\cdots R_N)_{\rho}\ge K S(R|W)_{\rho^{(N,M)}}-K \delta' - \O(N^{1/2}) \\
    \geq (N-\sqrt{N}) S(R|W)_{\rho^{(N,M)}} - \O(N M^{-1/4} + N^{1/2}) \ . 
\end{multline}
Finally, we divide by $N$ and take the limit $N\to \infty$,
\begin{align}
    \lim_{N\to \infty}\frac1N S(R_1\cdots R_N)_{\rho}
    \geq  \lim_{N \to \infty} S(R|W)_{\rho^{(N,M)}} + O(M^{-1/4}) \ .
\end{align}
The desired bound then follows from Lemma~\ref{lem:uniqueness}.
\end{proof}

The right hand side of~\eqref{eq:main2} may be understood as a \emph{regularised} entropy. It measures the entropy that any individual radiation system~$R_i$ contributes to the joint entropy of all radiation systems in the many-black-hole scenario. Crucially, the reference~$W$, which appears as the conditioning system in the von Neumann entropy on the left hand side of~\eqref{eq:main2}, has no explicit occurrence in this regularised entropy. The reference-dependence is thus somehow hidden in the joint state, and hence the correlation, of the radiation fields $R_1, \ldots, R_N$ of the independently prepared black holes. In this sense, the reference can be regarded as a  global property of spacetime.

Claim~\ref{claim:main2} also sheds light on the role of the replica black holes, which occur in the calculation of~$S^{\mathrm{swap}}(R)_{\mathcal{B}}$. For this we combine it with Claim~\ref{claim:main1} to obtain
\begin{equation}\label{eq:main3}
    S^{\mathrm{swap}}(R)_{\mathcal{B}} = \lim_{N\rightarrow \infty}\frac1N S(R_1\cdots R_N)_{\rho}.
\end{equation}
The replica trick thus calculates the regularised entropy of the black hole radiation. We can conclude from this that the introduction of replicas is not just a purely mathematical trick to compute the von Neumann entropy $S(R)$ --- the trick simply doesn't yield this quantity! The replicas should instead be understood as physically relevant objects, for they are needed to give a meaning to the right hand side of~\eqref{eq:main3}.

Finally, let us turn to the second part of Question~\ref{qn:2}, which asks how the gravitational path integral, within a semi-classical approximation, can ``know'' about the reference~$W$ to yield $S(R|W)$ and thus reproduce the Page curve.  For this we note that any $W$ that is non-trivial, in the sense that $S(R)_\rho \neq S(R|W)$, implies that the individual black hole systems are correlated. More precisely, for $N$ large,
\begin{multline}
  S(R)_{\rho} - S(R|W)_{\rho} \approx S(R)_\rho - \frac{1}{N} S(R_1 \cdots R_N)_{\rho} 
  = \frac{1}{N} \sum_{n=1}^N \bigl( S(R_n)_{\rho} - S(R_n|R_1 \cdots R_{n-1})_{\rho} \bigr) \\
  = \frac{1}{N} \sum_{n=1}^N I(R_n: R_1 \cdots R_{n-1})_{\rho}
  = \frac{1}{N} \sum_{n=1}^N I(R_1: R_2 \cdots R_{n})_{\rho}
\end{multline}
where we have used the chain rule for the von Neumann entropy and where the last equality follows from permutation-symmetry. The right hand side is an average over~$n$ of the mutual information $I(R_1 : R_2 \cdots R_n)$. Since this mutual information is monotonically non-decreasing in $n$ and bounded by $2S(R_1)$, the average over $n$ may, for large~$N$, be replaced by any typical~$n \in \{1, \ldots, N\}$.  Hence, for such~$n$, the mutual information between one radiation field $R_1$ and $n-1$ other radiation fields satisfies
\begin{align} \label{eq:systemcorrelation}
    I(R_1: R_2 \cdots R_n)_\rho \approx I(R_1 : W)_\rho = S(R)_\rho - S^\mathrm{swap}(R) \ . 
\end{align}
This confirms that the correlation between the individual radiation systems becomes non-zero after the Page time, when the two entropies on the right hand side start to diverge (Fig.~\ref{fig:page1}). But this is precisely when the wormhole solutions become dominant. Following the spirit of the ER=EPR conjecture~\cite{van2010building,maldacena2013cool}, one might more generally expect that correlations (rather than only entanglement) manifest themselves as particular spacetime geometries. We may therefore regard the wormhole solutions as  geometrical representations of the correlation between the radiation systems, which is mediated by the reference~$W$. This suggests, as an answer to Question~\ref{qn:2}, that it is the replica wormhole geometry that ``knows'' about~$W$. 

\section{From the typical unitary to the Elusive Reference (ER) model}\label{sec:model}

An important landmark in understanding the information-theoretic content of Hawking radiation is Page's derivation of the Page curve~\cite{page1993average,page1993information}. He championed a model where the time evolution of a black hole as viewed by an asymptotic observer corresponds to a typical unitary, i.e., a unitary that one would pick almost surely according to the Haar measure over the set of all possible unitaries. A further important step towards an information-theoretic understanding of Hawking radiation was the work by Hayden  and Preskill (HP). Using basically the same model as Page's, they showed that a black hole acts like a mirror, bouncing back any information thrown into it after the Page time~\cite{hayden2007black}. The typical unitary model highlights the idea that a black hole appears from the outside as a maximally chaotic system that scrambles information that falls into it~\cite{sekino2008fast,shenker2014black,maldacena2016bound,cotler2017black,stanford2019jt,hayden2016holographic,belyansky2020minimal,krishnan2021hints,swingle2016measuring,landsman2019verified,pollack2020eigenstate,liu2021entanglement}. The model and its variants also inspired experiments for witnessing quantum gravity effects in the lab~\cite{swingle2016measuring,landsman2019verified,brown2019quantum,nezami2021quantum}. 

The success of the information-theoretic approach pioneered by Page and HP raises the question whether the results presented in Sections~\ref{sec:swapcond} and~\ref{sec:condcorr} are compatible with the typical unitary model. To answer this question we first introduce an extension of the model to the many-black-hole scenario described in Section~\ref{sec:manyblackholes}.

\subsection{The Elusive Reference (ER) model} \label{sec:elusivereference}

Consider a many-black-hole system and let $W$ be the reference constructed as described in Section~\ref{sec:reference}. We have already established several facts about the nature of $W$. First, since $I(R:W) = 0$ before the Page time, $W$ does not tell us anything about the early radiation. Conversely, the early radiation does not tell us anything about $W$. Second, since $S(R|W) = 0$ when the black hole is evaporated completely, $W$ must contain full information about the final state of any radiation field. Furthermore, it follows from~\eqref{eq:systemcorrelation} that this information can be retrieved from the other radiation fields.

Taken together, these facts suggest the following model of a many-black-hole system, which we will refer to as the \emph{Elusive Reference (ER)} model. Suppose that we start with an initial situation consisting of $N$ identical subsystems $M_1, \ldots, M_N$ containing initial matter shells. Each matter shell $M_i$ collapses to a black hole $B_i$ that emits radiation $R_i$. We may then describe the evolution from the initial time to some fixed later time as an isometry $\$$ from $M_1 \dots M_N$ to $B_1 R_1 \cdots B_N R_N W$ defined by\footnote{The notation $\$$ for this map is deliberately taken to resemble the map defined by Eq.~4.25 of~\cite{bartlett2007}. The notation $\$$ is also used in~\cite{hawking1982unpredictability} to denote the (non-unitary) $S$-matrix that describes the black hole evolution.}
\begin{align} \label{eq:dollarmap}
    \$ :=  \sqrt{\frac{1}{|\mathcal{W}|}} \sum_{w \in \mathcal{W}} \,  (U_w)^{\otimes N} \otimes \ket{w}_W  \ ,
\end{align}
where $\{U_w\}_{w \in \mathcal{W}}$ is a quantum $N$-design of isometries from  $M_i$ to the joint system $B_i \otimes R_i$.\footnote{A finite family of unitaries $\{U_w\}_{w \in \mathcal{W}}$ is a quantum $N$-design if $\smash{\frac{1}{|\mathcal{W}|}\sum_w (U_w\ketbra{\psi}{\psi}U_w^{*})^{\otimes N}} = \smash{\int\dd U\,(U\ketbra{\psi}{\psi}U^{*})^{\otimes N}}$ for any $\ket{\psi}$, where $\dd U$ denotes the Haar measure. Here we use the canonical extension of this definition from the set of unitaries on a system to the set of isometries. In the limit of $N\to\infty$, a sequence of $N$-designs corresponds to the Haar measure.} We will generally think of $W$ as a reference system that cannot be accessed directly, i.e., we do not assume that there exist physical operations to act on it or to measure it.

Suppose that we are only interested in the action of $\$$ on $n \ll N$ systems and assume that the input to the other $N-n$ systems is random.\footnote{It is sufficient that the input is chosen at random within the subspace spanned by matter shell states that one considers for the first $n$ subsystems.} It then follows from the de Finetti theorem, applied to the Choi-Jamio\l{l}kowski representation of $\$$~\cite{jamiolkowski1972linear,choi1975completely}, that the reduced map approximately takes the form (written as a TPCPM)
\begin{align} \label{eq:dollarmapreduced}
    X_{M_1} \otimes \cdots \otimes X_{M_n} \quad \mapsto \quad \frac{1}{|\mathcal{W}|} \sum_{w \in \mathcal{W}} \,  U_w X_{M_1} U_w^{*} \otimes \cdots \otimes U_w X_{M_n} U_w^{*} \otimes \ketbra{w}{w}_W \ .
\end{align}
Note that this map is not unitary and hence not reversible. It corresponds to a ``twirling'' map as considered in the literature on quantum reference frames~\cite{Aharonov1967,kitaev2004,bartlett2007}. There the aim is to include reference frames (e.g., for position or for directions) explicitly into the description of a quantum system. 

The analogy to quantum reference frames is useful to illustrate the ER model and to motivate the choice of the terminology ``elusive''. As a typical example, consider a system consisting of $n$ spins that point in spatial directions $X_1, \ldots, X_n$. The reference frame relative to which these directions are defined is modelled as a separate system $W$, which one may think of as a tripod. In this analogy, the twirling map would have the form~\eqref{eq:dollarmapreduced}, with the sum replaced by an integral over the rotation group. The twirling captures the idea that, without access to any other subsystems, the direction of an individual spin, say $X_1$, would look completely random to us. Crucially, however, any sufficiently large subset of the other spins can serve as a physical reference frame. Hence, relative to them, $X_1$ is a well-defined direction. Note that the latter is true even if we do not have direct access to~$W$, i.e., the tripod may be a purely hypothetical construct. But even if we did think of~$W$ as a physical tripod, we could not tell the direction of this tripod unless we had another (super-)reference frame relative to which it is defined. For these reasons, it is sensible to regard $W$ as an elusive system that we cannot experimentally act on.

In the ER model, it is the Hawking radiation that takes the role of the spin direction in the example above. Note that the corresponding reference $W$ would then not  need to define  directions in real space, but in the Hilbert space of the entire radiation field~$R$. Hence, lacking access to such a reference, the radiation of any single black hole looks completely random to us. Its physical state may even be undefined, in the same sense as the direction of a single spin in a universe without a direction reference would be undefined. Conversely, radiation fields of other black holes can serve as a reference, in the same way as the direction of a spin can be defined relative to a collection of other spins.

Let us conclude this subsection by comparing the ER model to the typical unitary model as considered by Page and HP. If we reduce the ER model to a situation featuring only a single black hole and omit the reference information~$W$, then the reduced map~\eqref{eq:dollarmapreduced} for $n=1$ corresponds to an averaging over typical unitaries. However, the physical interpretation of this map is rather different from that admitted by Page and HP. For example, in their work on black holes as mirrors~\cite{hayden2007black}, HP adopt the view that the black hole dynamics is typical but fixed. According to this view, $W$ could be regarded as a (fixed) parameter of the theory that describes the black hole physics. Conversely, according to the ER model, the isometry $U_w$ cannot be fixed, for this would imply $I(R_i:W) = 0$ even after the Page time, thus contradicting our results from above. This shows that $W$ cannot be regarded as a fixed parameter. Nonetheless, HP's conclusion that a black hole acts as a mirror can be restored approximately in the ER model, but we would necessarily have to consider many black holes. The radiation fields of the extra black holes would then serve as an approximate reference, relative to which the mirrored information can be decoded (see also the discussion following Proposition~\ref{pr:Schmidt} below).

\subsection{Black hole interior} \label{sec:toy}

In the information-theoretic approach by Page and HP, the black hole  interior appears explicitly as a quantum system, $B$. In the ER model it is however not clear that one should regard the $N$ systems $B_1, \ldots, B_N$ as separate interiors of the individual black holes. To enable a distinction between different notions of the interior, we will refer to $B_i$ as the \emph{primary interior} of the $i$th black hole.  Analogously to the HP model, we take the dimension of each $B_i$ to be equal to $2^{S_{\mathrm{BH}}}$, where $S_{\mathrm{BH}}$ is the black hole's (thermal) Bekenstein-Hawking entropy, and hence a measure for its size. 

According to the ER model, the joint state of the $N$-black-hole system obtained from $N$~matter shells in an initially pure state $\ket{0}_M$ is
\begin{align}
    \rho_{B_1R_1 \cdots B_N R_N W} =  \$ \bigl(\ketbra{0}{0}_{M}\bigr)^{\otimes N} \$^{*}\ .
\end{align}
Let us first focus on one single black hole~$B=B_1$ and its radiation field~$R=R_1$. Assuming that $N \gg 1$, the joint state of $B$ and $R$ together with the reference $W$ is approximately a mixed state of the form (see also~\eqref{eq:dollarmapreduced})
\begin{align} \label{eq:BRWstate}
    \rho_{BRW} = \tr_{B_2 R_2 \cdots B_N R_N}(\rho_{B_1R_1 \cdots B_N R_N W}) \approx \frac{1}{|\mathcal{W}|} \sum_{w \in \mathcal{W}} \,  U_w \ketbra{0}{0}_M U_w^{*} \otimes \ketbra{w}{w}_W \ .
\end{align}
Following Page's analysis, one may use typicality (or, more generally, decoupling theorems~\cite{hayden2007black,dupuis2010decoupling}) to show that, for any fixed $w \in \mathcal{W}$, the state on $B$ and $R$ conditioned on $W=w$, which is by construction pure, must approximately (for  $k+S_{\mathrm{BH}} \gg 1$) have a Schmidt decomposition of the form
\begin{align} \label{eq:U0decomposition}
  U_w \ket{0}_M 
  \sim \sum_j\ket{j_w}_B\ket{j_w}_R \ ,
\end{align}
where $\{\ket{j_w}_B\}_{j}$ and $\{\ket{j_w}_R\}_{j}$ are families of orthonormal states on $B$ and $R$, respectively. 

If one has access to a single black hole only, then $W$ is unknown according to the ER model. This is a difference to Page's and HP's model. To account for this difference, we do not fix $W=w$ but instead describe the situation with respect to a fixed orthonormal basis $\{\ket{i}_R\}_{i \in \{1, \ldots, 2^k\}}$, which may be interpreted as a measurement basis for the radiation field~$R$. (We assume for convenience that $R$ has the size of $k$ qubits.) It is useful to consider a purification of  $\rho_{R B W}$, which we define using a second copy, $\bar{W}$, of the reference~$W$, 
\begin{align} \label{eq:BRWpurification}
   \ket{\rho}_{W\bar{W}BR} 
    = \frac{1}{\sqrt{|\mathcal{W}|}}\sum_{w \in \mathcal{W}} \ket{w}_W\ket{w}_{\bar{W}} U_w \ket{0}_M \ .
\end{align}
Due to the mixing property of the $N$-design  $\{U_w\}_{w \in \mathcal{W}}$, the reduced state on $R$ is maximally mixed. Since, the reduced state on $W \bar{W}$ is constrained to the subspace spanned by vectors of the form $\ket{w}_W \ket{w}_{\bar{W}}$ for $w \in \mathcal{W}$, we immediately find the following statement about the Schmidt decomposition of $\ket{\rho}_{W\bar{W}BR}$. 
 
\begin{pr} \label{pr:Schmidt}
For any fixed basis $\{\ket{i}_R\}_{i \in \{1, \ldots, 2^k\}}$, there exists an orthonormal family $\{\ket{\psi_i}_{W \bar{W} B}\}_{i \in \{1, \ldots, 2^k\}}$ of states of the form
\begin{align} \label{eq:BWstates}
  \ket{\psi_i}_{W\bar{W}B}=\frac{1}{\sqrt{|\mathcal{W}|}}\sum_{w \in \mathcal{W}} \ket{w}_W\ket{w}_{\bar{W}}\ket{\psi_i^w}_B \ ,
\end{align}
where $\ket{\psi_i^w}_B$ are (not necessarily normalised) vectors, such that
\begin{align}\label{eq:pssystate}
    \ket{\rho}_{W\bar{W}BR} 
    = \frac{1}{\sqrt{2^k}}\sum^{2^k}_{i=1}\ket{\psi_i}_{W\bar{W}B}\ket{i}_R \ .
\end{align}
\end{pr}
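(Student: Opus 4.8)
The plan is to obtain the claimed decomposition directly by expanding the purification \eqref{eq:BRWpurification} in the fixed radiation basis $\{\ket{i}_R\}$ and then verifying that the resulting coefficient vectors are, up to one uniform rescaling, orthonormal. Concretely, I would first write $U_w\ket{0}_M = \sum_i \ket{\phi^w_i}_B\ket{i}_R$, where $\ket{\phi^w_i}_B := (I_B\otimes\bra{i}_R)U_w\ket{0}_M$ are unnormalised vectors on $B$. Substituting into \eqref{eq:BRWpurification} and interchanging the two sums gives $\ket{\rho}_{W\bar{W}BR} = \sum_i \ket{\chi_i}_{W\bar{W}B}\ket{i}_R$ with $\ket{\chi_i}_{W\bar{W}B} := \frac{1}{\sqrt{|\mathcal{W}|}}\sum_{w}\ket{w}_W\ket{w}_{\bar{W}}\ket{\phi^w_i}_B$. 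By construction each $\ket{\chi_i}$ already has the prescribed form \eqref{eq:BWstates}, so the only remaining task is to show that these vectors are mutually orthogonal and of equal norm.

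The key step is the overlap computation. Using the orthonormality of $\{\ket{w}_W\}$ and of $\{\ket{w}_{\bar{W}}\}$, the double sum over $w,w'$ in $\langle\chi_i|\chi_j\rangle$ collapses to a single sum, leaving $\langle\chi_i|\chi_j\rangle = \frac{1}{|\mathcal{W}|}\sum_w\langle\phi^w_i|\phi^w_j\rangle$. I would then rewrite each term as $\langle\phi^w_i|\phi^w_j\rangle = \tr\bigl[(I_B\otimes\ketbra{i}{j}_R)\,U_w\ketbra{0}{0}_M U_w^{*}\bigr]$, so that the average over $w$ reproduces precisely the averaged state on $BR$, namely $\frac{1}{|\mathcal{W}|}\sum_w U_w\ketbra{0}{0}_M U_w^{*}$. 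Tracing out $B$ then identifies the remaining object as $\tr[\ketbra{i}{j}_R\,\rho_R]$, where $\rho_R$ is the reduced radiation state.

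At this point the decisive physical input enters: the mixing property of the $N$-design $\{U_w\}$ forces $\rho_R = \frac{1}{2^k}I_R$ (as already noted above \eqref{eq:BRWpurification}). Hence $\langle\chi_i|\chi_j\rangle = \frac{1}{2^k}\tr[\ketbra{i}{j}_R] = \frac{1}{2^k}\delta_{ij}$, and this holds for \emph{any} choice of basis $\{\ket{i}_R\}$ precisely because the identity operator is basis-independent. Defining $\ket{\psi_i}_{W\bar{W}B} := \sqrt{2^k}\,\ket{\chi_i}_{W\bar{W}B}$ then yields an orthonormal family of the form \eqref{eq:BWstates} (with $\ket{\psi^w_i}_B = \sqrt{2^k}\,\ket{\phi^w_i}_B$), and the rescaling turns the expansion into $\ket{\rho}_{W\bar{W}BR} = \frac{1}{\sqrt{2^k}}\sum_i\ket{\psi_i}_{W\bar{W}B}\ket{i}_R$, which is \eqref{eq:pssystate}.

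I do not anticipate a serious obstacle: the statement is essentially a repackaging of the Schmidt decomposition of $\ket{\rho}$ across the cut $(W\bar{W}B):R$, and the only nontrivial ingredient, maximal mixedness of $\rho_R$, is supplied by the $N$-design assumption. The one point requiring care is that the claim asserts orthonormality for an \emph{arbitrary} fixed radiation basis rather than a single adapted one; this is not automatic from a generic Schmidt decomposition and relies specifically on the reduced state being maximally mixed, hence basis-agnostic. I would therefore invoke the maximal-mixedness step explicitly, since it is exactly what upgrades the basis-dependent overlap identity to the basis-independent statement claimed.
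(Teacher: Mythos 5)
Your proof is correct and follows essentially the same route as the paper: the paper's (very brief) argument likewise rests on exactly the two facts you isolate, namely that the $N$-design property makes $\rho_R$ maximally mixed and that the state is supported on the span of $\ket{w}_W\ket{w}_{\bar{W}}$, from which the Schmidt decomposition in an arbitrary basis $\{\ket{i}_R\}$ is immediate. Your write-up simply makes explicit the overlap computation $\langle\chi_i|\chi_j\rangle = \bra{j}\rho_R\ket{i} = \delta_{ij}/2^k$ that the paper leaves implicit.
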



One may interpret the joint system $W \bar{W} B$ as the \emph{effective  interior} of the black hole and regard $\ket{\psi_i}_{W \bar{W} B}$ as the interior partners of the radiation states $\ket{i}$. Note that these states  genuinely exhibit entanglement between the primary interior $B$ and the reference. Hence, in contrast to the situation considered by Page and HP, the external radiation $R$ is not purified by the primary interior $B$ alone. 

To reproduce the conclusions by Page and HP about the state of the primary interior~$B$, we first need to return to the joint state of $n \gg 1$  black holes and their radiation fields. Using the fact that this state approximately has de Finetti form (if $n \ll N$) as well as Proposition~\ref{pr:Schmidt}, we find
\begin{equation} \label{eq:globalstate}
    \rho_{WB_1R_1 \cdots B_nR_n} \approx  \frac{1}{|\mathcal{W}|} \sum_{w \in \mathcal{W}} \ketbra{w}{w}_W \otimes \Bigl(\frac{1}{\sqrt{2^{k}}} \sum_{i=1}^{2^k} \ket{\psi^w_i}_B\ket{i}_R \Bigr)^{\otimes n} \Bigl(\frac{1}{\sqrt{2^{k}}} \sum_{i=1}^{2^k} \bra{\psi^w_i}_B\bra{i}_R \Bigr)^{\otimes n}\, .
\end{equation}
Tracing out the primary interiors of the $n$ black holes, the marginal state on $W R^n$ reads
\begin{align} \label{eq:globalstatereduced}
    \rho_{WR_1 \cdots R_n} 
    = \frac{1}{|\mathcal{W}|} \sum_{w \in \mathcal{W}} \ketbra{w}{w}_W\otimes\rho_{R|w}^{\otimes n} \quad ,
\end{align}
where, for any fixed $w$, the state $\rho_{R|w}$ may be expressed in terms of basis elements $\ket{i}_R$ as
\begin{align} \label{eq:internalstateoverlap}
  \rho_{R|w} =\frac{1}{2^k}\sum_{i,j=1}^{2^k}\braket{\psi^w_i}{\psi^w_j}_B\ketbra{j}{i}_R \ .
\end{align}
The radiation fields of the $n$ black holes are thus each in the same state $\rho_{R|w}$, which may however be unknown because the reference $W$ is not assumed to be accessible. Nonetheless, an estimate for $\rho_{R|w}$ can be obtained by applying quantum state tomography on some of the radiation systems.  Conditioned on this estimate, the overlap 
\begin{align} \label{eq:internaloverlap}
   \braket{\psi^w_i}{\psi^w_j}_B = 2^k \bra{i} \rho_{R|w} \ket{j} 
\end{align}
can be calculated and thus represents an operationally accessible quantity that provides information about the structure of the states of the primary interior~$B$ of any single black hole. 

More concretely, according to~\eqref{eq:globalstate}, each vector $\ket{\psi^w_i}_B$ corresponds (up to normalisation) to the primary interior partner state of the radiation state $\ket{i}_R$, conditioned on the reference $W=w$. While the latter is not physically accessible, equation~\eqref{eq:internaloverlap} tells us that, in order to make any claims about how the different interior states $\ket{\psi^w_1}_B, \ldots, \ket{\psi^w_{2^k}}_B$ relate to each other, it is sufficient to know $\rho_{R|w}$. In particular, conditioned on this knowledge, which as argued above is physically accessible in the many-black-hole scenario via quantum state tomography, the state of~$B$ corresponding to any radiation mode $\ket{i}_R$ is pure. The operationally accessible knowledge in the ER model is thus basically equivalent to knowing~$w$, and hence~$U_w$. Conditioned on this knowledge, a black hole in the ER model can thus be treated in the same way as in Page's or HP's model, where the unitary that describes the time evolution (corresponding to $U_w$ in the ER model) is fixed. 

This implies that the Page curve can be reproduced in the ER model, provided that we condition on the knowledge obtained via tomography as just described. Indeed, according to~\eqref{eq:globalstate}, the joint state of $B$ and $R$ conditioned on~$W=w$ is pure. Consequently, the entropy of the density operator $\rho_{R|w}$ cannot be larger than the size $S_{\mathrm{BH}}$ of~$B$. In particular, after the Page time, i.e., when $k > S_{\mathrm{BH}}$, the overlap~\eqref{eq:internaloverlap} cannot be zero for all $i \neq j$. Hence, in contrast to the states $\ket{\Psi^1}_{W \bar{W} B}, \ldots,  \ket{\psi_{2^k}}_{W \bar{W} B}$ of the effective interior $W \bar{W} B$, the vectors $\ket{\psi^w_1}_B, \ldots, \ket{\psi^w_{2^k}}_B$ on the primary interior~$B$ are not mutually orthogonal. 

We may conclude that the ER model is compatible both with Hawking's calculation and with the Page curve. The entropy $S(R|W)$, which must have the same behavior as $S({R|W=w})$ for a typical choice of ${W=w}$, follows the Page curve. In particular, it decreases after the Page time, and this decrease is  reflected by the non-orthogonality of the family $\{\ket{\psi^w_i}_B\}_{i \in \{1, \ldots,  2^k\}}$ of  states of the primary interior~$B$. Conversely, the orthogonality of the family $\{\ket{\psi_i}_{W \bar{W} B}\}_{i \in \{1, \ldots, 2^k\}}$ of states of the effective interior $W \bar{W} B$ implies that the entropy on $S(R)$ must be maximal, as predicted by Hawking. 

\subsection{Comparison to the PSSY model}\label{sec:pssy}

The evaluation (and even definition) of path integrals such as~\eqref{eq:partitionobservable} is difficult for realistic theories consisting of interacting fields $\psi$ on a $(3+1)$-dimensional spacetime. The purpose of the model proposed by PSSY~\cite{penington2019replica} is to simplify the analysis to a degree where explicit calculations of path integrals become possible, while keeping qualitative features of black hole physics. To this aim, they take inspiration from Jackiw-Teitelboim (JT) gravity, which is a $(1+1)$-dimensional theory of gravity that couples to a \emph{dilaton field} $\phi$.

\begin{figure}
\centering
\begin{subfigure}{.33\textwidth}
  \centering
  \includegraphics[width=0.8\textwidth]{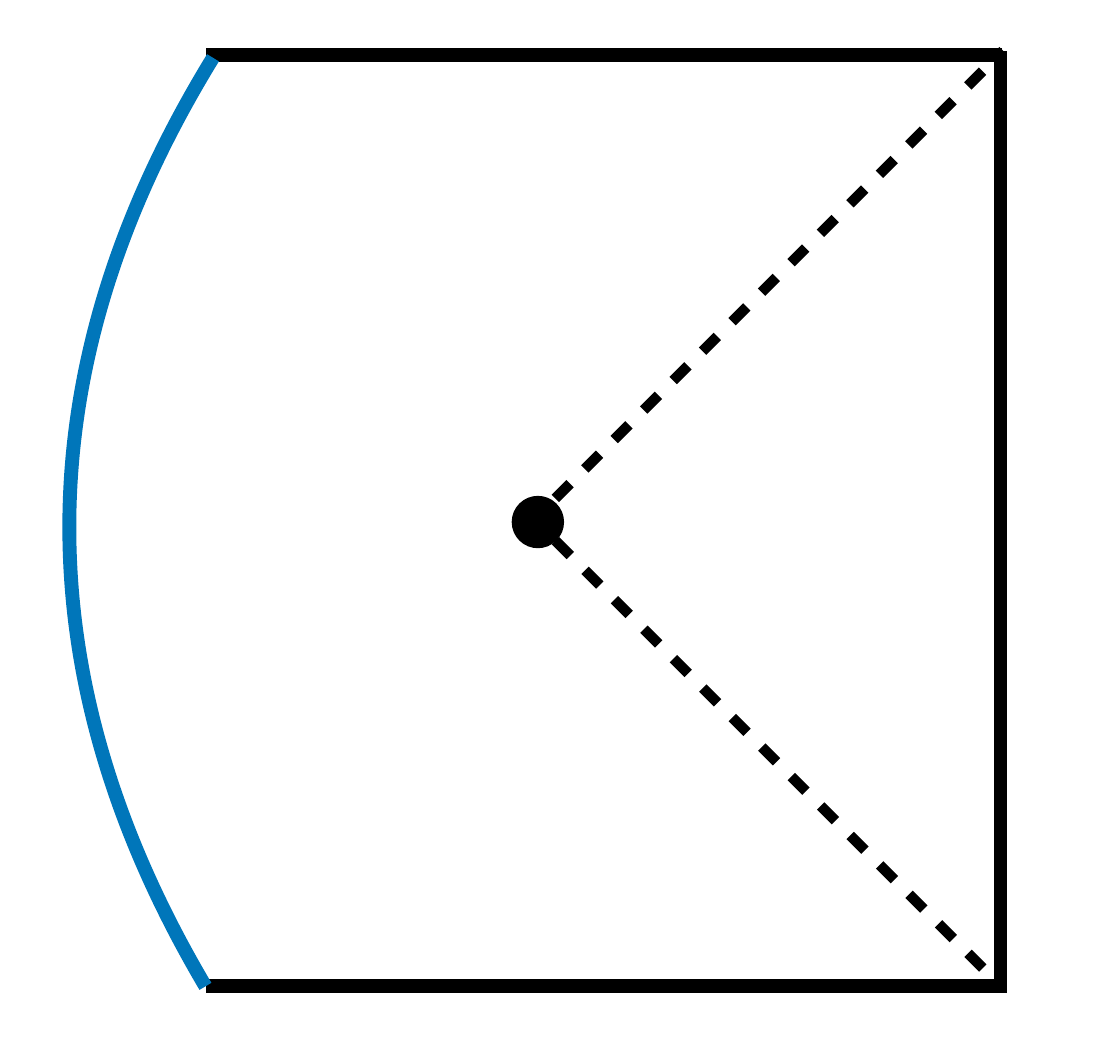}
   \caption{Lorentzian spacetime.}
   \label{fig:lorentz}
\end{subfigure}
\begin{subfigure}{.31\textwidth}
  \centering
  \includegraphics[width=0.8\textwidth]{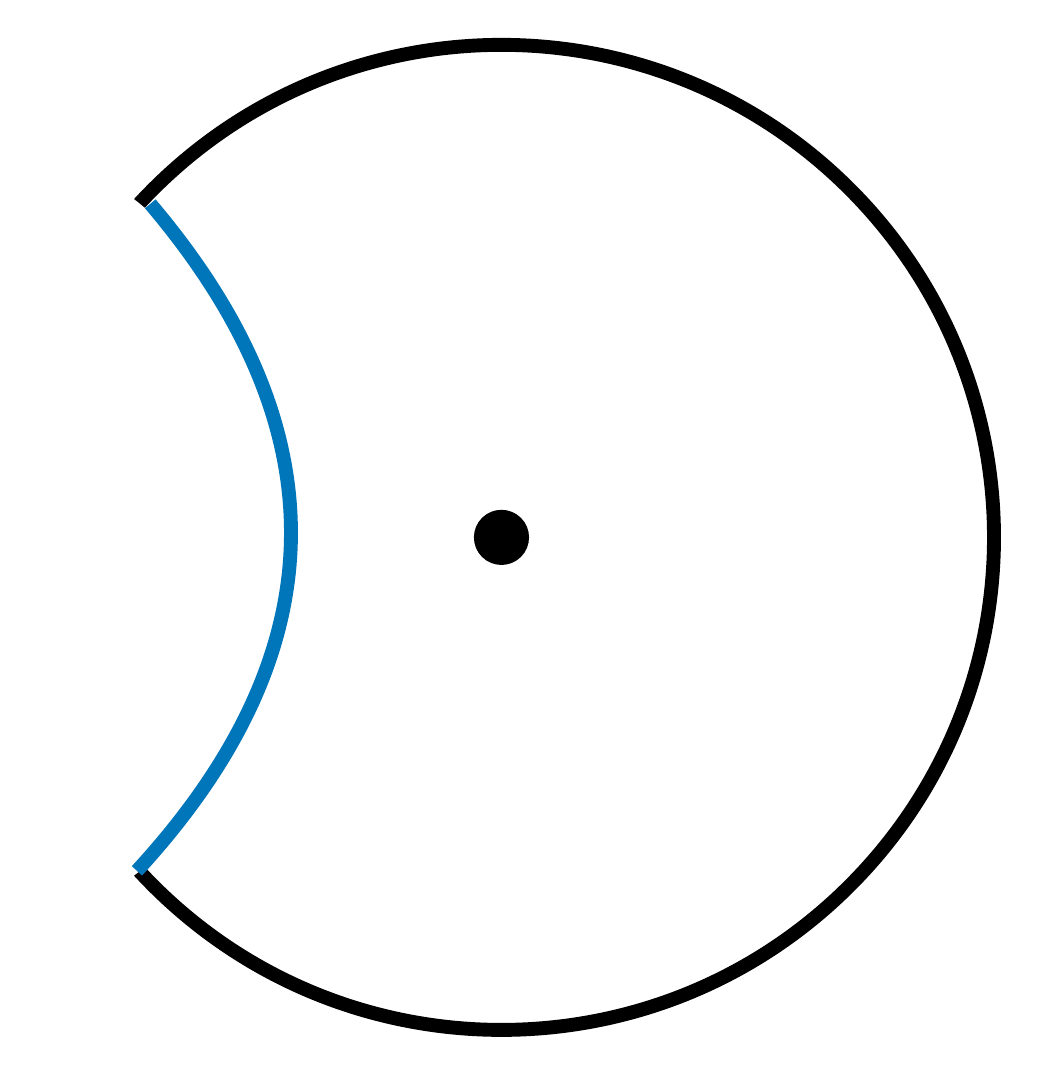}
    \caption{Euclidean spacetime.}
   \label{fig:euclid}
\end{subfigure}%
\caption{Schematic of the spacetime considered within the PSSY model in Lorentzian and Euclidean metric. 
             Fig.~\ref{fig:lorentz} shows the spacetime containing a black hole. The dashed lines represent the bifurcate event horizon, the solid vertical line spatial infinity. The blue line represents the EOW brane. Since path integrals are difficult to define and evaluate for a Lorentzian spacetime, one considers an Euclidean analogue, as shown on Fig.~\ref{fig:euclid}. }
    \label{fig:geometry}
\end{figure}

Concretely, the Euclidean action is taken to be~\cite{harlow2020factorization,penington2019replica,almheiri2020replica} 
\begin{align} \label{eq:JTaction}
    I_{\mathrm{JT}}[g, \phi] = - \frac{S_0}{2\pi} \Bigl( \frac{1}{2} \int_{\mathcal{M}} \sqrt{g} R + \int_{\partial \mathcal{M}} \sqrt{h} K \Bigl)
    - \Bigl( \frac{1}{2} \int_{\mathcal{M}} \sqrt{g} (R+2) \phi + \int_{\partial \mathcal{M}} \sqrt{h} (K-1) \phi  \Bigl) \ ,
\end{align}
where the integrals over $\mathcal{M}$ correspond to the Einstein-Hilbert action\footnote{The gravitational coupling constant $G_N$ is conventionally set to one in 2D gravity coupled to a dilaton field.} for the metric $g$ with the curvature scalar denoted by $R$, and where the integrals over the topological boundary $\partial \mathcal{M}$ of $\mathcal{M}$ correspond to the Gibbons-Hawking-York term with the induced metric $h$ and the trace over the extrinsic curvature $K$. $S_0$ is a parameter that will correspond to the extremal entropy of the black hole solution, i.e., the black hole entropy in the zero temperature limit (see below). 

In its Lorentzian version, which serves as the physical guideline, the spacetime $\mathcal{M}$ has an asymptotic boundary region, which includes the regions outside of the black hole; see  Fig.~\ref{fig:lorentz}. In addition, one bounds $\mathcal{M}$ at the interior by an \emph{end of the world (EOW) brane}~\cite{kourkoulou2017pure}, which intersects the asymptotic boundary in the past and future. The EOW brane may be interpreted as the worldline of a particle of mass $\mu \geq 0$ that moves freely on the ambient spacetime, i.e., its dynamics is governed by the action
\begin{align}
    I_{\mathrm{particle}}[g] = \mu \int_{\mathrm{EOW}} \dd s \ .
\end{align}
where $\dd s$ is the line element of the metric $g$.

The Euclidean spacetime considered by PSSY, which is shown by  Fig.~\ref{fig:euclid}, also has an asymptotic boundary and a boundary defined by the EOW. The boundaries intersect at two points which, in analogy to the Lorentzian picture, one may call \emph{infinite past} and \emph{infinite future}. At the asymptotic boundary one imposes the condition $\phi = \frac{1}{\eps}$, for $\eps \to 0$, and defines the boundary metric by $\dd\tau^2 = \eps^2 ds^2\big|_{\partial \mathcal{M}}$. At the EOW, the boundary conditions are $K=0$ and $\partial_n \phi = \mu$, where $\partial_n$ denotes the derivative normal to the boundary. Finally, the renormalised length $\beta$ of the asymptotic boundary between the intersection points with the EOW is fixed. This set of boundary conditions admits a Euclidean AdS-Schwarzschild black hole solution with temperature $\beta^{-1}$ and Bekenstein-Hawking entropy $S_{\mathrm{BH}}=S_0+\phi_h$, where $\phi_h$ is the dilaton value at the horizon, which is proportional to the temperature. For simplicity, we continue our discussion with a black hole at low temperature, so that $S_{\mathrm{BH}}$ is well approximated by $S_0$.

\begin{figure}
\centering
         \includegraphics[width=0.9\textwidth]{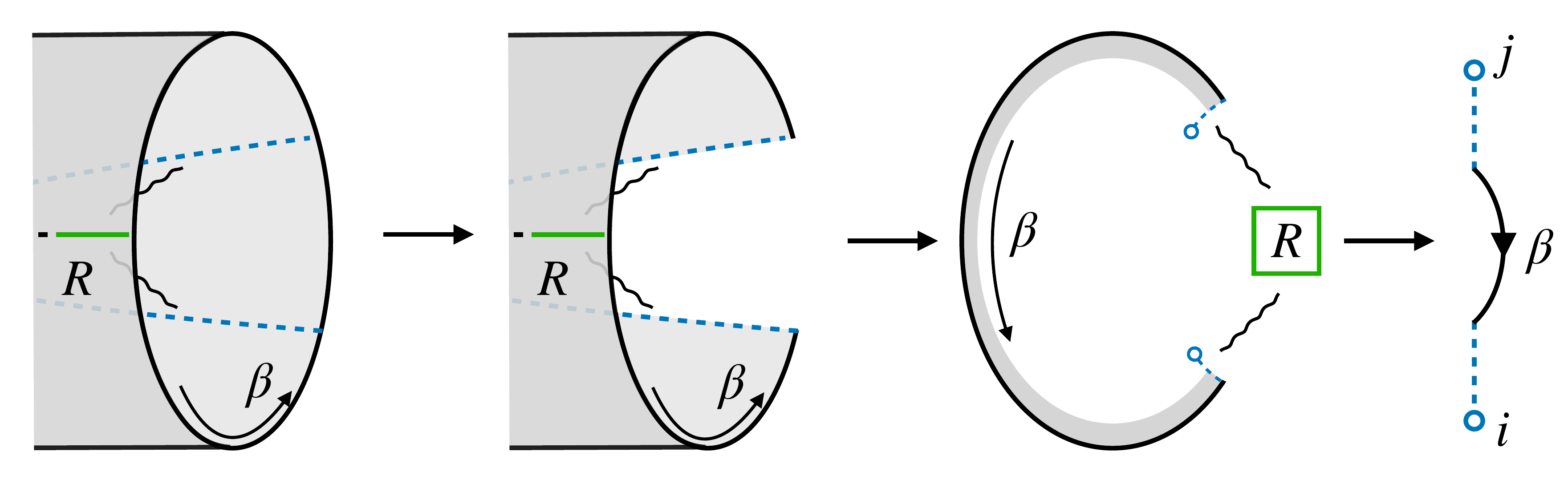}
             \caption{Boundary conditions in the PSSY model. The sequence of figures illustrates how the boundary conditions of the PSSY model (commonly represented by a diagram as shown on the rightmost) may be understood in the context of the general scenario shown in Fig.~\ref{fig:bc} (leftmost here). The blue dashed line represents the worldline (brane) of a particle in the asymptotically hyperbolic space. The curly lines represent the entanglement between the Hawking radiation $R$ and the black hole interior. In a first simplification step (first arrow), the worldline of the particle is made the end of the world (EOW) by deleting the spacetime beyond it and regarding the interior degrees of freedom of the black hole as internal degrees of freedom of the particle.  In a second simplification step (second arrow), the radiation is taken out of the spacetime and treated as a separate reservoir~$R$. The correlation between the radiation and the EOW particle is imposed by matching any content $\ketbra{i}{j}_R$ of $R$ to the internal degrees of freedom $i$ and $j$ of the particle at the infinite past and future, respectively. }
    \label{fig:radiationembedding}
\end{figure}

In the PSSY model, the Hawking radiation is not generated dynamically. Rather, one assumes the existence of a radiation reservoir $R$ that is external to the JT spacetime and consists of $k$ qubits. It can be interpreted as a system in which one has stored all the radiation that has been emitted. To model the interior partners of the Hawking radiation quanta, one assumes that the EOW particle has $2^k$ distinguishable internal states, labelled by  $i \in \{1, \ldots, 2^k\}$. This degree of freedom undergoes no separate dynamics, i.e., the internal state~$i$ remains constant on the particle's worldline. Since the worldline of the EOW particle intersects the asymptotic boundary, the internal state can be set by appropriate boundary conditions. The idea that the EOW particle is the interior partner of the Hawking radiation is now imposed by connecting these boundary conditions to the states of $R$. Specifically, one equips $R$ with an orthonormal basis $\{\ket{i}_R\}_{i \in \{1, \ldots,  2^k\}}$ and identifies $\ket{i}_R$  with the boundary condition that the EOW brane has internal state $i$ in the infinite past, and any $\bra{j}_R$ with the boundary condition that the EOW brane has internal state $j$ in the infinite future.  

This idea yields a rule for evaluating the expectation value of any observable $O = \sum_{i,j} O_{ij} \ketbra{j}{i}_R$ on the radiation reservoir~$R$. Concretely, the partition function (see~\eqref{eq:partitionobservable}) is given by 
\begin{align}
    Z[\mathcal{B}, O] = \sum_{i,j} O_{ij}  \int_{i \stackrel{\beta}{\longrightarrow} j} \mathcal{D}g\mathcal{D}\psi e^{-I_{\mathrm{JT}}[g,\phi] - I_{\mathrm{particle}}[g]}
\end{align}
where the subscript to the integral indicates that the boundary conditions $\mathcal{B}$ demand that the internal state of the interior particle in the asymptotic past and future is $i$ and $j$, respectively, and that the asymptotic spacetime boundary between these two points has renormalised length~$\beta$ (see Fig.~\ref{fig:radiationembedding}). Note that only the end points of the EOW particle are visible at the boundary, but not their internal connection via the EOW worldine. This fact is important when replicas are considered. In this case a connection of two end points via an asymptotic boundary does \emph{not} imply that they belong to the same EOW particle. 

\begin{figure}
\centering
\begin{subfigure}{.33\textwidth}
  \centering
  \includegraphics[width=.95\linewidth]{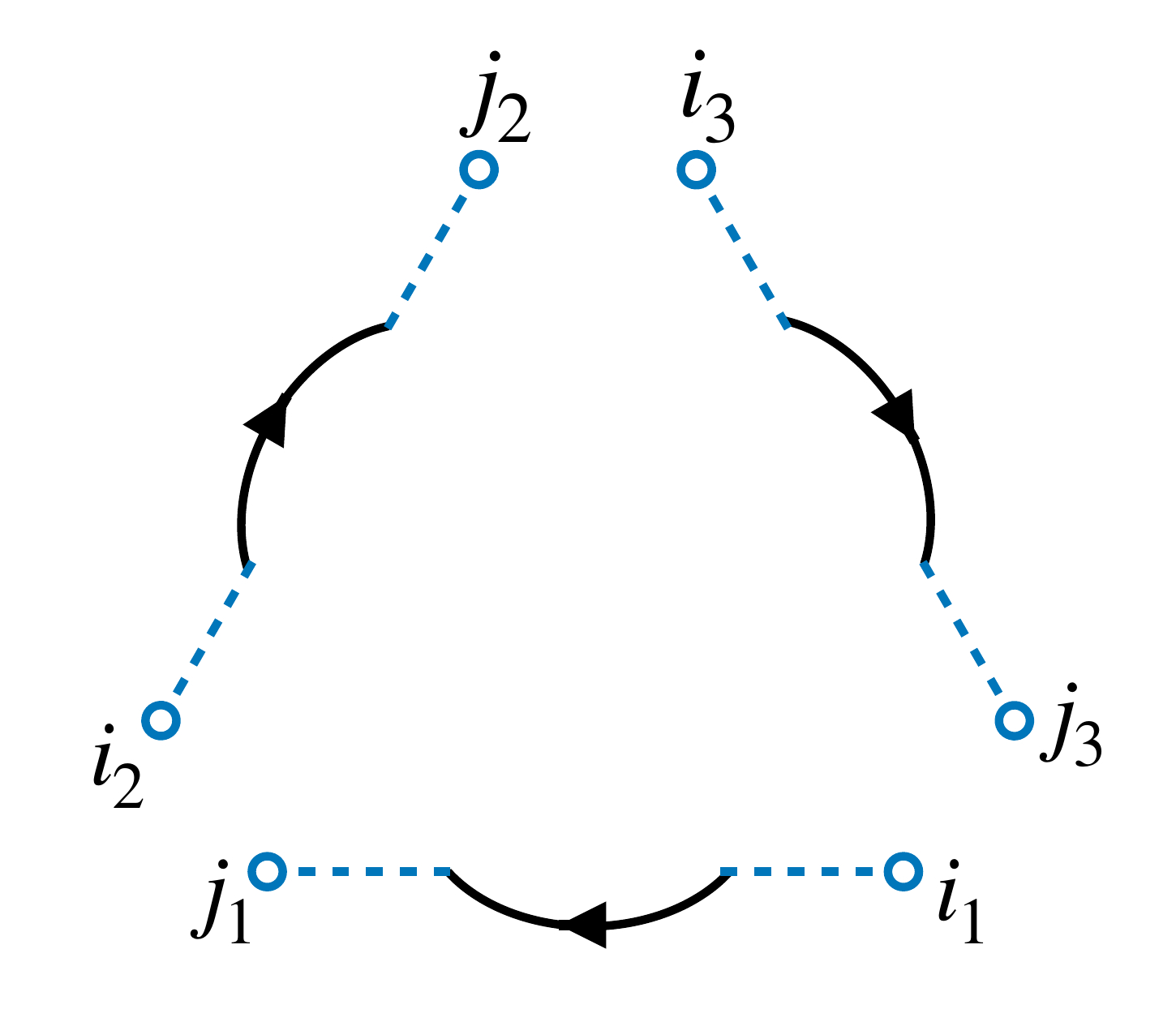}
  \caption{\centering}\label{fig:bc3}
\end{subfigure}%
\begin{subfigure}{.33\textwidth}
  \centering
  \includegraphics[width=.87\linewidth]{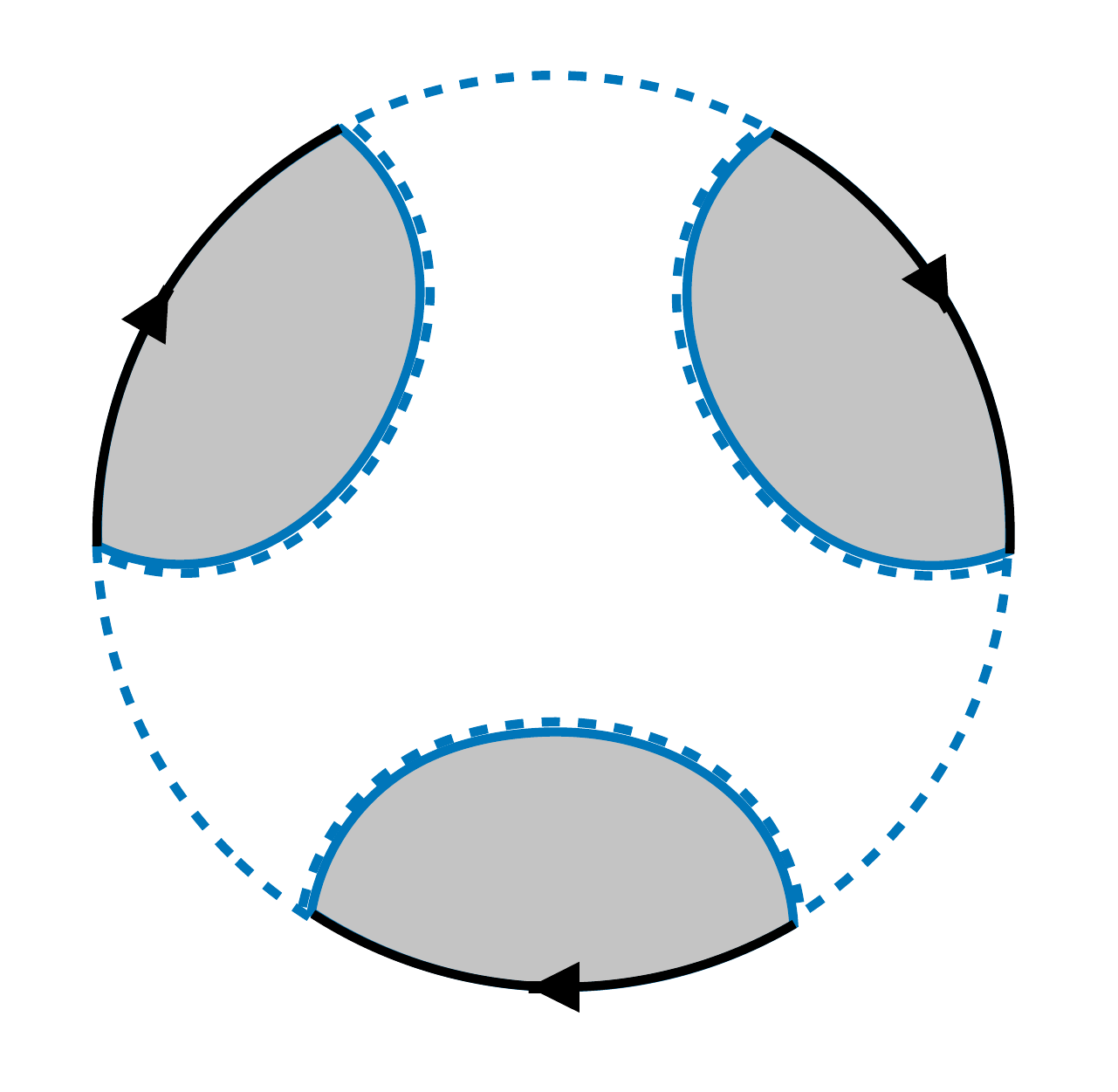}
   \caption{\centering}\label{fig:producttopology}
\end{subfigure}
\begin{subfigure}{.33\textwidth}
  \centering
  \includegraphics[width=.9\linewidth]{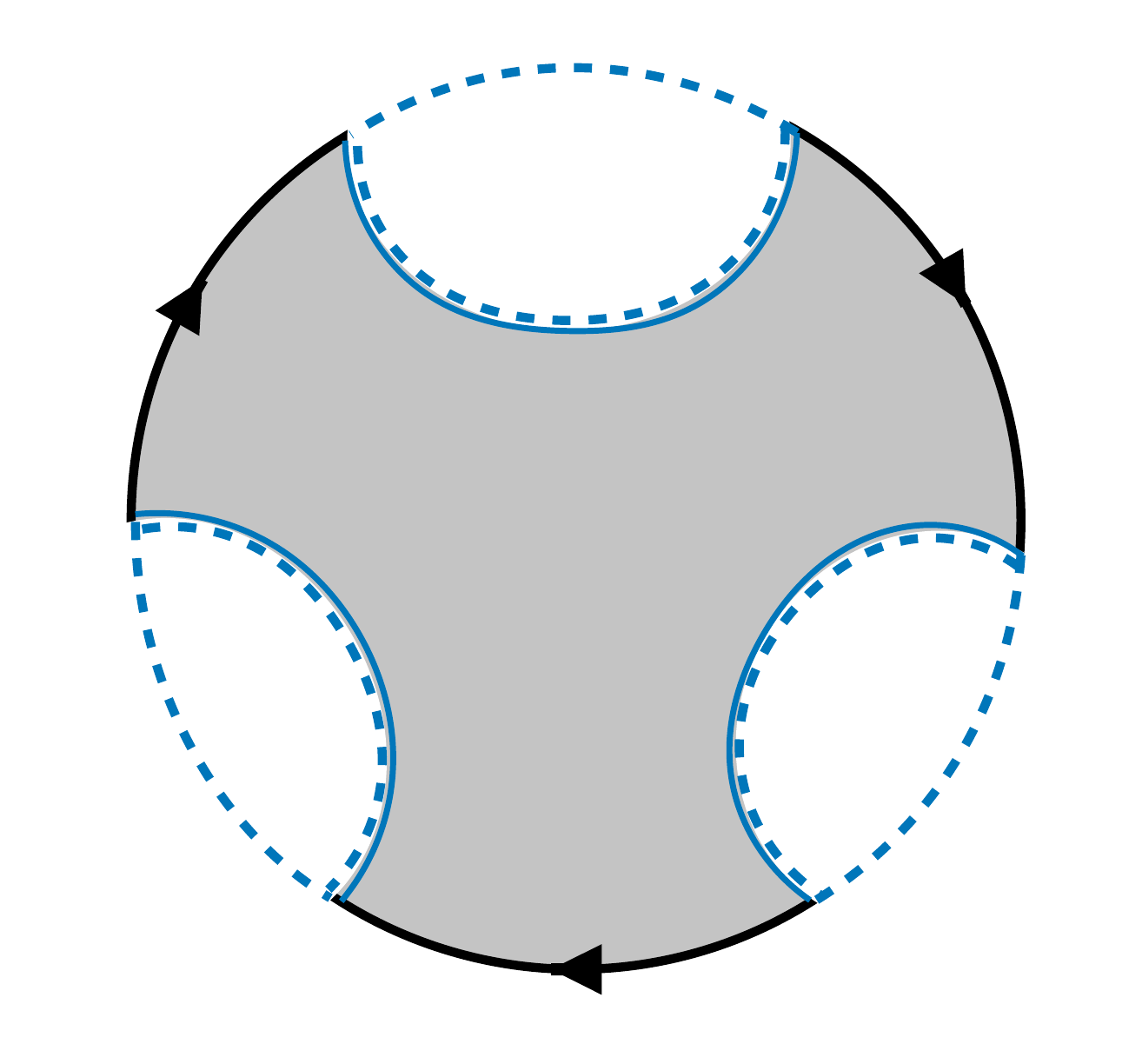}
   \caption{\centering}\label{fig:wormholetopology}
\end{subfigure}%
\caption{Gravitational path integral calculation for $S_n^{\mathrm{swap}}(R)_{\mathcal{B}}$. The boundary conditions are depicted by Fig.~\ref{fig:bc3}. Each solid black line corresponds to an asymptotic boundary with renormalised length $\beta$, and the dashed blue legs at their ends indicate the infinite past and infinite future of an EOW particle, with the index labelling the particle's internal state. The figure shows the case where $n=3$, which requires $3$ identical copies of these boundary conditions.  The contraction with the cyclic shift operator $\tau_{i_1 j_1 i_2 j_2 i_3 j_3} = \delta_{j_1,i_2} \delta_{j_2,i_3} \delta_{j_3, i_1}$ corresponds to a pairing of the asymptotic end points of the EOW worldlines that are drawn adjacent to each other. The path integral then has two dominant saddle geometries. The first, shown in Fig.~\ref{fig:producttopology}, has the topology of three disconnected discs. Since the internal state of each EOW particle is constant along its worldline (blue line), this leads to one single index loop (dashed line). The second saddle, shown in Fig.~\ref{fig:wormholetopology}, has a connected topology and three index loops.}
\label{fig:saddles2}
\end{figure}

To evaluate the entropy of the radiation~$R$, PSSY compute expression~\eqref{eq:replica_trick_2} for $S_n^{\mathrm{swap}}(R)_{\mathcal{B}}$ for any integer $n \geq 2$. To illustrate the corresponding path integral, which now involves $n$ copies of the spacetime boundary conditions described above, we consider the case where $n=3$ (see Fig.~\ref{fig:bc3}). The partition function $Z_n$ then admits the form
\begin{align} \label{eq:Z3partition}
    Z_3 = Z[\mathcal{B}^{\times 3}, \tau_3] 
    & = \sum_{i_1, j_1, i_2, j_2, i_3, j_3} \tau_{i_1 j_1 i_2 j_2 i_3 j_3}  \int_{\substack{i_1 \stackrel{\beta}{\longrightarrow} j_1 \\ i_2 \stackrel{\beta}{\longrightarrow} j_2 \\ i_3 \stackrel{\beta}{\longrightarrow} j_3}} \mathcal{D}g\mathcal{D}\psi e^{-I_{\mathrm{JT}}[g,\phi] - I_{\mathrm{particle}}[g]} \ .
\end{align}
The path integrals appearing in these expression can be evaluated using a saddle point approximation for the geometry~$g$. Remarkably, it has been found that there is more than one saddle. The topologies of the two dominant saddles are shown in Fig.~\ref{fig:producttopology} and~\ref{fig:wormholetopology}, respectively.\footnote{We omit sub-dominant saddles that have a topology which connects two out of the three boundaries.} The first is a factorised topology, in which each EOW brane intersects the same asymptotic boundary twice. The second topology, the \emph{wormhole solution}, has EOW branes that connect different boundaries. 

Since the JT action $I_{\mathrm{JT}}$ only depends on the topology, the integral on the right hand side of~\eqref{eq:Z3partition} can be expressed in terms of the partition function $\bar{Z}_n$ evaluated for a disc topology with $n$ asymptotic boundaries of length~$\beta$. In particular, for the topology shown in Fig.~\ref{fig:producttopology}, which consists of three disconnected discs, the integral evaluates to
\begin{align}
    \delta_{i_1,j_1} \delta_{i_2, j_2} \delta_{i_3,j_3} \bar{Z}_1^3 \ .
\end{align}
Similarly, the topology shown in Fig.~\ref{fig:wormholetopology} contributes a term of the form
\begin{align}
    \delta_{i_1,j_3} \delta_{i_2, j_1} \delta_{i_3,j_2} \bar{Z}_3 \ .
\end{align}
Inserting these in~\eqref{eq:Z3partition} and using that $\tau_{i_1 j_1 i_2 j_2 i_3 j_3} = \delta_{j_1,i_2} \delta_{j_2,i_3} \delta_{j_3, i_1}$, one finds
\begin{align}
    Z_3 =  \sum_{i_1, i_2, i_3} \delta_{i_1,i_2} \delta_{i_2, i_3} \delta_{i_3,i_1} \bar{Z}_1^3 + \delta_{i_1,i_1} \delta_{i_2, i_2} \delta_{i_3,i_3} \bar{Z}_3 
    = 2^k \bar{Z}_1^3 + (2^k)^3 \bar{Z}_3 \ .
\end{align}
Note also that
\begin{align}
    Z_1 = Z[\mathcal{B}, 1] = 2^k \bar{Z}_1 \ .
\end{align}
We now have all ingredients required for the evaluation of~\eqref{eq:replica_trick_2} for $n=3$. Returning to a general integer~$n$, the expression for the swap entropy of order~$n$ reads 
\begin{align}
    S_n^{\mathrm{swap}}(R)_{\mathcal{B}} 
  = \frac{1}{1-n}\log \frac{2^k \bar{Z}_1^n + 2^{nk} \bar{Z}_n}{2^{nk} \bar{Z}_1^n} 
  = \frac{1}{1-n}\log \Bigl(2^{(1-n) k} + \frac{\bar{Z}_n}{\bar{Z}_1^n} \Bigr) \ .
\end{align}
Hence, in a first approximation to JT gravity (see~\cite{penington2019replica} for more precise statements), where one only keeps the dependence on the $S_0$-term in the action~\eqref{eq:JTaction}, one finds that $\bar{Z}_n$ is proportional to $2^{\chi S_{\mathrm{BH}}}$, where $\chi$ is the Euler characteristic of the geometry, which in our case is $\chi=1$, and where $S_{\mathrm{BH}} \approx S_0$. We thus have
\begin{align}
    S_n^{\mathrm{swap}}(R)_{\mathcal{B}} 
    \approx \frac{1}{1-n}\log \Bigl(2^{(1-n) k} + \mathrm{const} \, 2^{(1-n) S_{\mathrm{BH}}} \Bigr) \ .
\end{align}
Consequently, for $S_{\mathrm{BH}} - k \gg 1$, corresponding to a young black hole, the first term in the bracket dominates, so that $S_n^{\mathrm{swap}}(R)_{\mathcal{B}} \approx k$. Conversely, for $k - S_{\mathrm{BH}} \gg 1$, corresponding to an old black hole, $S_n^{\mathrm{swap}}(R)_{\mathcal{B}} \approx S_{\mathrm{BH}}$. This conclusion may be summarised as
\begin{align} \label{eq:pathintegralresult}
    S_n^{\mathrm{swap}}(R)_{\mathcal{B}} \approx \min\{k, S_{\mathrm{BH}}\} \ .
\end{align}
By analytic continuation, the result can be extended to $n \to 1$, so that it also holds for $S^{\mathrm{swap}}(R)_{\mathcal{B}}$. This shows that the swap entropy follows the Page curve. 

A main feature of the path integral approach is that it does not require the definition of quantum states. In particular, \eqref{eq:pathintegralresult} was derived without any reference to a state space for the black hole spacetime. One may nonetheless try to understand the above conclusions in a picture where the black hole spacetime is regarded as a quantum system, which we denote by~$\tilde{B}$.\footnote{PSSY denote this system as~$B$~\cite{penington2019replica}. We use a tilde to distinguish it from the primary black hole interior described in Section~\ref{sec:toy}.} By construction, the index of the state $\ket{i}_R$ of the radiation system is encoded in the internal degree of freedom of the EOW particle. Assuming that the joint state of $R$ and $\tilde B$ is pure, it must have the form
\begin{equation}\label{eq:BW|R}
    \ket{\Psi}_{\tilde B R} = \frac{1}{\sqrt{2^k}}\sum_{i=1}^{2^k}\ket{\psi_i}_{\tilde B}\ket{i}_R\, ,
\end{equation}
where, $\ket{\psi_i}_{\tilde B}$ is a quantum state of $\tilde B$ in which the EOW particle admits state~$i$, for $i \in \{1, \ldots, 2^k\}$.

PSSY discuss this state in the light of the island formula~\eqref{eq:island}. Note that the formula features an entropy both on the left and on the right hand side. Depending on how one interprets the state~\eqref{eq:BW|R}, it can serve as a basis for the computation of either of these two entropies. We will use superscripts to make this distinction explicit. Specifically, the state $\ket{\Psi}^{\mathrm{grav}}_{\tilde B R}$ shall be such that the von Neumann entropy of $R$ matches the entropy on the left hand side of the island formula~\eqref{eq:island}. Conversely, $\ket{\Psi}^{\mathrm{bulk}}_{\tilde B R}$ is interpreted as the state that underlies the calculation of the von Neumann entropy on the right hand side of~\eqref{eq:island}.  

Let us start with the latter. The island $I$ may either be empty or contain the EOW particle, which lies in~$\tilde{B}$. In the second case, since the joint state of $\tilde{B}$ and $R$ is pure, we have 
\begin{align}
    S(R \cup I)_{\rho^{\mathrm{bulk}}}  = S(R \tilde{B})_{\rho^{\mathrm{bulk}}} = 0 \ ,
\end{align}
whereas the area term in~\eqref{eq:island} contributes $S_{\mathrm{BH}}$. Conversely, if the island is empty, we need to evaluate the von Neumann entropy of the radiation system $R$ alone. Since two states $\ket{\psi_i}^{\mathrm{bulk}}_{\tilde B}$, for different values of $i$, correspond to situations with distinguishable states of the EOW particle, one should treat them as mutually orthogonal, i.e., 
\begin{align} \label{eq:bulkoverlap}
    \braket{\psi_i}{\psi_j}^{\mathrm{bulk}}_{\tilde B} = \delta_{i j} \ .
\end{align}
Hence, when tracing over $\tilde B$, we get a maximally mixed state, so that
\begin{align} 
    S(R)_{\rho^{\mathrm{bulk}}} = k \ .
\end{align}
The island formula~\eqref{eq:island} thus yields
\begin{align} \label{eq:PSSYPage}
    S^{\mathrm{swap}}(R)_{\mathcal{B}} = \min\{S_{\mathrm{BH}},k\} \ ,
\end{align}
which is compatible with the result~\eqref{eq:pathintegralresult} of the path integral calculation. 

Let us now return to the state $\ket{\Psi}^{\mathrm{grav}}_{\tilde B R}$, which we defined as the state whose von Neumann entropy matches the entropy on the left hand side of~\eqref{eq:island}, i.e., the swap entropy $S^{\mathrm{swap}}(R)_{\mathcal{B}}$. For an old black hole, we have $S^{\mathrm{swap}}(R)_{\mathcal{B}} = S_{\mathrm{BH}} < k$, which immediately implies that the states $\ket{\psi_i^{\mathrm{grav}}}_{\tilde B}$ cannot be mutually orthogonal. Within the PSSY model, this is usually explained by the existence of ``microscopic quantities'', which give raise to random variables $R_{ij}$ with zero mean and unit variance, such that
\begin{align} \label{eq:gravoverlap}
    \braket{\psi_i}{\psi_j}^{\mathrm{grav}}_{\tilde B} = \delta_{i j} + 2^{-S_{BH}/2} R_{ij} \ .
\end{align}
Furthermore, to make this compatible with~\eqref{eq:bulkoverlap}, the latter is interpreted as an (implicit) ensemble average of~\eqref{eq:gravoverlap} over $R_{ij}$. 

Let us now try to relate the two claims about the overlap of the black hole states $\ket{\psi_i}_{\tilde{B}}$, \eqref{eq:bulkoverlap} and~\eqref{eq:gravoverlap}, to the ER model introduced in Section~\ref{sec:elusivereference}. Expression~\eqref{eq:bulkoverlap} may best be understood by identifying the system $\tilde B$ of the PSSY model with the joint system $W \bar{W} B$ of the ER model, which we called the \emph{effective interior} in Section~\ref{sec:toy}. More specifically, one may identify the states $\ket{\psi_i}^{\mathrm{bulk}}_{\tilde B}$ with the states $\ket{\psi_i}_{W\bar{W}B}$ defined in Proposition~\ref{pr:Schmidt}. The orthogonality statement~\eqref{eq:bulkoverlap} then coincides with the assertion of the proposition. 

Conversely, \eqref{eq:gravoverlap} may be understood within the ER model by interpreting $\tilde{B}$ as the \emph{primary interior} $B$ of the ER model. Specifically, the states $\ket{\psi_i}^{\mathrm{grav}}_{\tilde B}$ may be identified with the vectors $\ket{\psi_i^w}_B$ defined by Proposition~\ref{pr:Schmidt}. While these vectors are not necessarily normalised, they gain an operational meaning within the many-black-hole scenario, for their overlap~$\braket{\psi^w_i}{\psi^w_j}_B$ can be determined by an experiment, e.g., via quantum state tomography as described in Section~\ref{sec:toy}; see~\eqref{eq:internaloverlap}. In fact, one may as well  verify the modulus of the overlap more directly via a swap test. For this experiment, one would need to choose a number of pairs out of the collection of black holes with radiation states $\ket{i}_R$ and $\ket{j}_R$. This could be done by measuring the radiation $R$ in the basis $\{\ket{i}\}_{i=1, \ldots,  2^k}$ and post-selecting on the corresponding outputs. One could then apply a swap test to each pair, which yields an estimate for the squared overlap, 
\begin{align} \label{eq:swaptest}
  \frac{\bigl|\braket{\psi^w_i}{\psi^w_j}_B \bigr|^2}{\braket{\psi^w_i}{\psi^w_i}_B \braket{\psi^w_j}{\psi^w_j}_B} \ .
\end{align}
Note that this experiment, like the tomography experiment, can be carried out without access to the reference~$W$. 

To compare this to~\eqref{eq:gravoverlap}, we calculate the average  over~$W$ of the square of the scalar product occurring in~\eqref{eq:swaptest}. For this we use Lemma~3.5 of~\cite{dupuis2014one}, which yields
\begin{multline}
    \Bigl\langle\bigl| \braket{\psi^w_i}{\psi^w_j}_B \bigr|^2\Bigr\rangle_W = \frac{1}{|\mathcal{W}|} \sum_{w \in \mathcal{W}}  \bigl|\braket{\psi^w_i}{\psi^w_j}_B \bigr|^2
    = \frac{1}{|\mathcal{W}|} \sum_{w \in \mathcal{W}} \tr\bigl( \tau_{B B '} \ketbra{\psi^w_i}{\psi^w_i}_B \otimes \ketbra{\psi^w_j}{\psi^w_j}_{B'} \bigr) \\
    = \frac{|R|^2}{|\mathcal{W}| } \sum_{w \in \mathcal{W}} \tr \bigl( \tau_{B B'} \otimes \ketbra{i}{i}_R \otimes \ketbra{j}{j}_R U_w \ketbra{0}{0}_M U_w^{*} \otimes  U_w \ketbra{0}{0}_{M'} U_w^{*} \bigr) \\
    = |R|^2 \tr \bigl( \tau_{B B'} \otimes \bigl(\ketbra{i}{i}_R \otimes \ketbra{j}{j}_{R'} \, \alpha 1_{B  B'} \otimes 1_{R R'} + \beta \tau_{B B'} \otimes \tau_{R R'} \bigr) \bigr)
\end{multline}
with $\alpha, \beta \in \mathbb{R}$ such that $\alpha |B R|^2 + \beta |B R| = 1 $ and $\alpha |B R| + \beta |B R|^2 = 1$, and where $\tau_{B B'}$ denotes the operator that swaps the states of the two systems $B$ and $B'$. Hence, with $\alpha=\beta = \frac{1}{|B R|^2 + |B R|} = |BR|^{-2} + \O(\frac{1}{|BR|^3})$, we conclude that the average square overlap is
\begin{multline} \label{eq:woverlap}
    \Bigl\langle\bigl| \braket{\psi^w_i}{\psi^w_j}_B \bigr|^2\Bigr\rangle_W
    = \frac{1}{|B|^2} (|B| + |B|^2 \delta_{ij}) (1 + \O(| B R |^{-1})) \\
    = \delta_{ij} + 2^{-S_{\mathrm{BH}}} + \O(2^{-(S_{\mathrm{BH}}+k)})  \ ,
\end{multline}
where we used $|B| = 2^{S_{\mathrm{BH}}}$ and $|R| = 2^k$. This shows that the operational viewpoint within the ER model is indeed compatible with~\eqref{eq:gravoverlap}. 

\begin{table}
\begin{tabular}{p{3.35cm} p{1.7cm} p{3.5cm} p{4cm} p{0.8cm}}
    PSSY model & &  \multicolumn{3}{l}{ER model}  \\[-0.5ex]
    partner state to~$\ket{i}_R$ & & partner state to~$\ket{i}_R$ & relevant subsystem & size \\[0.1ex]\hline\\[-2.4ex]
    $\ket{\psi_i}^{\mathrm{bulk}}_{\tilde B}$ & $\longleftrightarrow$ & $\ket{\psi_i}_{W \bar{W} B}$ & effective interior $B W \bar{W}$ &  $\geq k$ \\[0.4ex]
    $\ket{\psi_i}^{\mathrm{grav}}_{\tilde B}$ & $\longleftrightarrow$ &
    $\ket{\psi^w_i}_B$ (for typical $w$) & primary interior $B$ & $S_{\mathrm{BH}}$
\end{tabular}
\caption{Dictionary between the PSSY model and the ER model. The states $\ket{\psi_i}_{\tilde{B}}$ defined within the PSSY model may be interpreted either as states of the bulk (whose entropy appears on the right hand side of the island formula~\eqref{eq:island}) or as states of the black hole itself (which relate to the entropy on the left hand side of the island formula). Depending on this interpretation, they correspond in the ER model to states of either the effective or the primary black hole interior.  \label{tb:dictionary}}
\end{table}

In summary, we have thus arrived at a dictionary that allows us to identify the black hole states in the PSSY model with corresponding states in the ER model as shown in Table~\ref{tb:dictionary}.  If one interprets the partner states to the radiation states $\ket{i}_R$ in the PSSY model as gravitational states then they have a non-trivial overlap. In the ER model, they span the Hilbert space $B$ of the primary interior. For an old black hole, the dimension $|B| = 2^{S_{\mathrm{BH}}}$ of this space may be arbitrarily smaller than the dimension $2^k$ of the radiation field.  Conversely, if one interprets the partner states to the radiation states $\ket{i}_R$ in the PSSY model as bulk states then they are mutually orthogonal and thus span a space of dimension $2^k$. In the ER model, this space corresponds to the effective interior $B W \bar{W}$, which even for an old black hole is large enough to carry them. 

To conclude this section, we note that the above dictionary provides some intuition on the ``location'' of the EOW particle and thus of the island~$I$ in the island formula~\eqref{eq:island}. For an old black hole, $I$ includes the bulk field beyond the event horizon and hence, in the PSSY model, the  EOW particle. But does this mean that the EOW particle is located at the outside of the black hole? The ER model provides a clear answer to this question. Since $S_{\mathrm{BH}} < k$, the EOW particle internal state can impossibly be encoded in the primary interior~$B$. Crucially, however, it is also not contained in the radiation field~$R$ of a single black hole. Rather, it is encoded in the effective interior $B W \bar{W}$, which includes the reference. And since this reference can be retrieved by tomography applied to many black holes, one may say that the EOW particle is (partially) contained in their joint radiation field.

\subsection{The Page curve for a black hole in superposition}\label{sec:example}

As a consistency check, we consider the concrete example of a black hole that is in superposition of two stages of the radiation process. More precisely, we imagine a situation that arises when starting from a superposition of black holes of different initial sizes. In order to maintain the controlled semiclassical gravitational path integral calculation and avoid handling a  superposition of macroscopically distinct metrics, we superpose them at different stages of the evaporation such that their Bekenstein-Hawking entropy~$S_{\mathrm{BH}}$ is equal whereas the radiation in one branch contains more quanta than the other. Therefore, both branches share the same spacetime geometry and only the radiation system is in a superposition of two different sizes.

The ER model provides a rule for defining the joint state of the black hole and the radiation system. We define this state in terms of two branches, each of which is given by an expression of the form~\eqref{eq:BRWpurification}. The first shall have $k$ radiation qubits. Using Proposition~\ref{pr:Schmidt}, we may write this branch as\footnote{To simplify the notation, we take $W$ from now on to denote the joint system that includes $\bar{W}$. This can be done without loss of generality because, according to~\eqref{eq:BWstates}, the state on $W \bar{W}$ is supported on the subspace spanned by $\ket{w}_W \ket{w}_{\bar{W}}$.}
\begin{equation}
    \ket{\Psi^0}_{WBR} := \frac{1}{\sqrt{|\mathcal{W}|}}\sum_{w \in \mathcal{W}} \ket{w}_WU_w \ket{0}_M=\frac{1}{\sqrt{2^k}}\sum^{2^k}_{i=1}\ket{\psi_i}_{WB}\ket{i}_R \ ,
\end{equation}
where $B$ is a system whose size corresponds to the Bekenstein-Hawking entropy, i.e.,  $S_{BH}=\log|B|$. The second branch involves the same system $B$ but shall have $m$ radiation qubits,
\begin{equation}
    \ket{\Psi^1}_{WBR} := \frac{1}{\sqrt{|\mathcal{W}|}}\sum_{w \in \mathcal{W}} \ket{w}_WU_w \ket{1}_M=\frac{1}{\sqrt{2^m}}\sum^{2^m}_{i=1}\ket{\varphi_i}_{WB}\ket{i}_R \ .
\end{equation}
One should think of $\ket{0}_M$ and $\ket{1}_M$  as the initial states of collapsing matter shells with distinct masses, which means that they are mutually orthogonal. We consider two possible superposed initial states, $\ket{\pm}_M:=\frac{1}{\sqrt{2}}(\ket{0}\pm\ket{1})_M$. By linearity, the resulting black hole states are  
\begin{equation}
    \ket{\Psi^\pm}_{WBR}:=\frac{1}{\sqrt{2}}\ket{\Psi^0}\pm\frac{1}{\sqrt{2}}\ket{\Psi^1}=\frac{1}{\sqrt{|\mathcal{W}|}}\sum_{w \in \mathcal{W}} \ket{w}_WU_w \ket{\pm}_M \ .
\end{equation}
Our goal is to evaluate the conditional entropy $S(R|W)_{\Psi^+}$ for one of the states, $\ket{\Psi^+}_{WBR}$.

\begin{figure}
    \centering
    \includegraphics[width=0.6\linewidth]{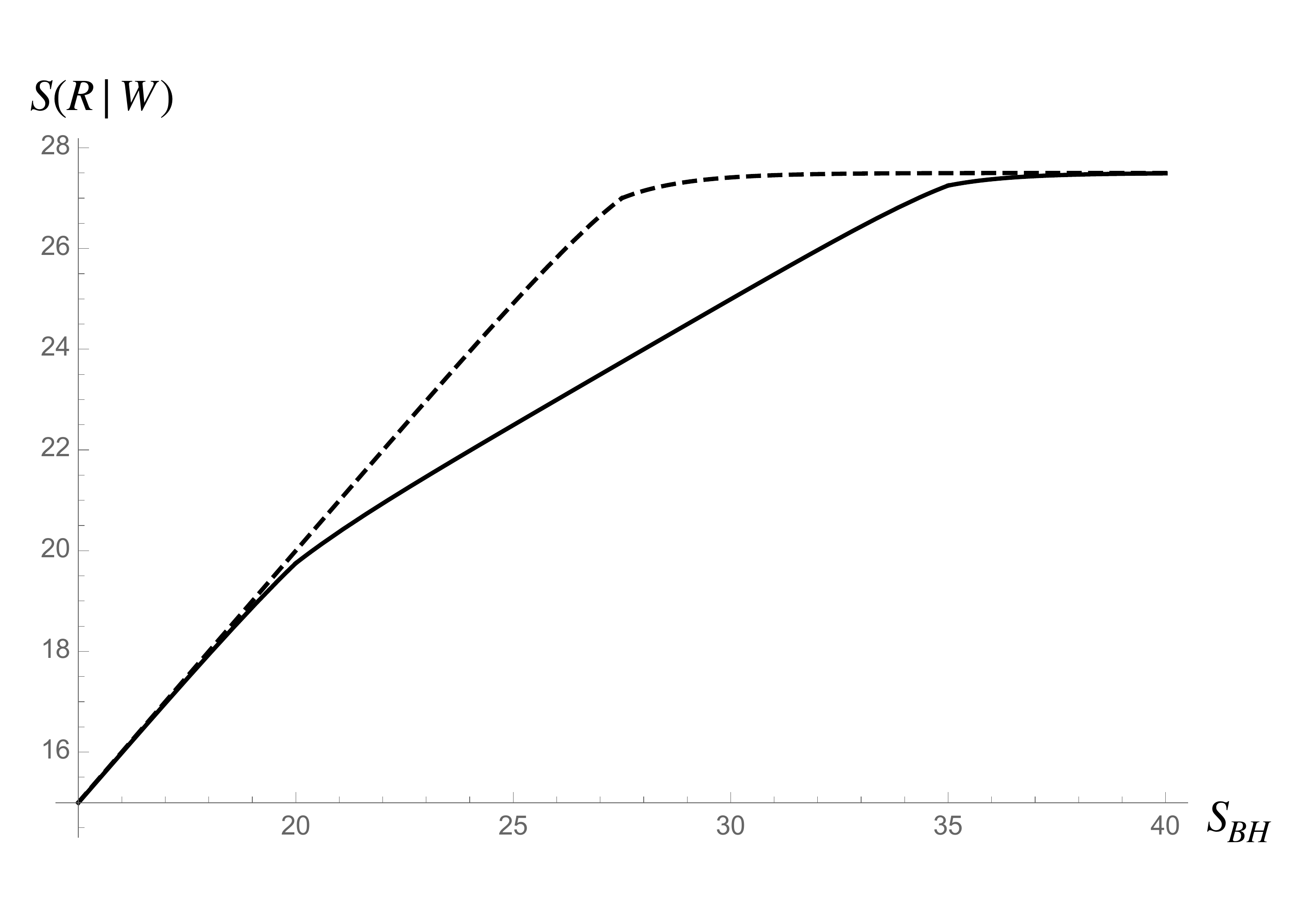}
    \caption{Page curve for a black hole in a superposition state. The solid curve shows the radiation entropy $S(R|W)$, calculated within the ER model, for a situation where the number of emitted radiation qubits is a fixed superposition between $k=35$ and $m=20$, depending on the size~$S_{\mathrm{BH}}$ of the black hole. It matches well with the curve derived using gravitational path integrals in~\cite{wang2021refined}. Note that this curve has no transition at the Page time, where $S_{\mathrm{BH}}$ equals $S(R) \approx 28$. For comparison, the standard Page curve with a transition at the Page time is indicated by the dashed line. }
    \label{fig:page2}
\end{figure}

We use the fact that the above states may be obtained from an initial state where the matter shell is entangled with an auxiliary qubit $A$,
\begin{equation}
    \frac{1}{\sqrt{2}}(\ket{0}_M\ket{0}_A+\ket{1}_M\ket{1}_A) \ ,
\end{equation}
by measuring the system $A$ with respect to the basis $\{\ket{\pm}_A = \frac{1}{\sqrt{2}}(\ket{0} \pm \ket{1})_A\}$. Suppose that the outcome of this measurement, which we label by $+$ or $-$, is stored  in a classical register~$C$. After the time evolution prescribed by the unitary~$U_w$, the total system without~$A$ thus admits a state of the form (now written as a density operator)
\begin{equation}
    \Psi_{WBRC}:=\frac12\Psi^+_{WBR}\otimes\ketbra{+}{+}_C + \frac12\Psi^-_{WBR}\otimes\ketbra{-}{-}_C
\end{equation}
Note that the marginal state obtained by tracing over~$C$ has the form
\begin{equation}
    \Psi_{WBR}:=\frac12\Psi^+_{WBR} + \frac12\Psi^-_{WBR} = \frac12 \Psi^0_{WBR} + \frac12 \Psi^1_{WBR} \ .
\end{equation}
Using this decomposition, it is easy to see that the entropies of the components of these states are related by
\begin{equation}\label{eq:infotheoretic}
  \frac12 S(R|W)_{\Psi^+} + \frac12 S(R|W)_{\Psi^-}
 = S(R|WC)_\Psi
 \approx
   S(R|W)_\Psi
\approx 
  \frac12 S(R|W)_{\Psi^0} + \frac12 S(R|W)_{\Psi^1} \ , 
\end{equation}
where the term after the first approximation sign can at most be $1$~bit larger than the left hand side. Similarly, the term after the second approximation sign can at most be $1$~bit smaller than the term preceding it. The overall approximation thus holds up to $1$~bit. 

Since $\{U_w\}_{w\in\mathcal{W}}$ is a unitary $N$-design, we may approximate the sum over $w$ in~$S(R|W)_{\Psi^\pm}$ by an integral over all isometries $U$ from $M$ to $B R$ with respect to the Haar measure~$\dd U$. Hence, up to an error that is arbitrarily small for large enough~$N$, we have
\begin{align} \label{eq:approxHaar}
    S(R|W)_{\Psi^\pm}\approx \int \dd U \, S(R)_{U\ketbra{\pm}{\pm} U^{*} } \ .
\end{align}
Let now $Z$ be the unitary on~$M$ that flips $\ket{\pm}_M$ to $\ket{\mp}_M$. Since the Haar measure $\dd U$ is invariant under the right action of $Z$, we find that the right hand side of the approximation~\eqref{eq:approxHaar} is independent of whether we consider $\Psi^+$ or $\Psi^-$.  Hence, the two terms on the left hand side of~\eqref{eq:infotheoretic} are approximately equal, and we find
\begin{equation}
    S(R|W)_{\Psi^+} \approx \frac12 S(R|W)_{\Psi^0} + \frac12 S(R|W)_{\Psi^1}\ .
\end{equation}

We have thus shown that, within the ER model, the entropy $S(R|W)_{\Psi^+}$ of the radiation of a black hole in the superposed state $\ket{\Psi^+} = \frac{1}{\sqrt{2}} ( \ket{\Psi^0} + \ket{\Psi^1})$ is well approximated by the average of the entropy of each branch. The latter is given by the standard Page curve as discussed in Section~\ref{sec:toy}.  Fig.~\ref{fig:page2} shows a plot of~$S(R|W)_{\Psi^+}$ for the case where $k=35$ and $m=20$.\footnote{For an arbitrary entanglement spectrum, the transition point of the Page curve (as shown in Fig.~\ref{fig:page2}) is no longer characterized by the von Neumann entropy $S(R)$ being approximately equal to~$S_{\mathrm{BH}}$, but rather is refined to two transition points characterized by the min/max-entropy $S_\mathrm{min/max}(R)$. This has been shown in~\cite{akers2021leading,wang2021refined}.} It exactly reproduces the curve that was obtained via gravitational path integral calculations within the PSSY model~\cite{wang2021refined} (see Fig.~7 of that reference).

\section{Discussion} \label{sec:discussion}

The quantum de Finetti theorem, combined with the recent calculations based on gravitational path integrals and applied to a setup consisting of many identically prepared black holes as described in Section~\ref{sec:results}, suggests the following explanation for the black hole information puzzle. The radiation $R_1$ of a single black hole, observed independently of the other black holes, is thermal as predicted by Hawking and thus has a large von Neumann entropy~$S(R_1)$. In particular, $S(R_1)$ will after the Page time exceed the Bekenstein-Hawking entropy~$S_{\mathrm{BH}}$ of the black hole. However, after the Page time, the  radiation fields $R_1 \cdots R_N$ of a collection of black holes get correlated. Consequently, the regularised entropy $\lim_{N \to \infty} \frac{S(R_1 \cdots R_N)}{N}$ will become strictly smaller than~$S(R_1)$. It is this regularised entropy that follows the Page curve. Hence, Hawking's ever growing radiation entropy and the Page curve  are compatible with each other.

According to the \emph{Elusive Reference (ER)} model introduced in Section~\ref{sec:model}, the correlation between the radiation fields $R_1, \ldots, R_N$ can be understood analogously to a situation where we would have $N$ identically oriented spins whose direction is defined relative to a reference frame~$W$, which may however not be directly observable. If we see only one single spin, its orientation looks random and thus has maximum entropy. Conversely, the collection of all $N$~spins is highly ordered, as they are all pointing in the same direction. The regularised entropy of the collection thus approaches zero. In the ER model, the spins correspond to the quantum fields that carry the Hawking radiation, for instance. Accordingly, the reference~$W$, rather than defining directions in real space, defines directions in the Hilbert space that underlies the description of these quantum fields. Hence, if we see radiation~$R$ from a single black hole but are lacking the reference, $R$~looks random to us, i.e., its von Neumann entropy~$S(R)$ is large.

The ER model also reconciles apparently contradicting conclusions that one could have drawn about the interior of a black hole.  In fact, within the ER model there are two different types of black hole interiors. One of them, the \emph{primary interior}, only has $S_{\mathrm{BH}}$ degrees of freedom and thus shrinks as the black hole evaporates. It corresponds to the black hole interior that is commonly considered for the derivation of the Page curve. The other, the~\emph{effective interior}, includes the reference~$W$ and thus remains large enough to contain all partner modes of the radiation even for an old black hole. It thus corresponds to the black hole interior as considered by Hawking.  

We hope that the light that the quantum de Finetti theorem sheds on black holes helps resolving the disagreement between proponents of the recent replica wormhole calculations~\cite{penington2019replica,almheiri2020replica,almheiri2020entropy} and those who uphold a remnant picture, such as the one implied by loop quantum gravity~\cite{perez2015no,bianchi2018white,martin2019evaporating,rovelli2019subtle,rovelli2020end}. The tension seems to be focused on the \emph{central dogma}, which asserts that the number of degrees of freedom of a black hole as observed from the outside is  $S_{\mathrm{BH}}=\mathrm{Area}/4G_N$~\cite{almheiri2020entropy}. Advocates of the remnant picture make the objection that the central dogma contradicts with the vast volume behind the horizon throat~\cite{christodoulou2015big}, also known as the \emph{bag of gold} geometry, that is well capable of accommodating the degrees of freedom that purify the Hawking radiation~\cite{perez2015no,rovelli2019subtle}. In the case of loop quantum gravity, these may be quantised geometric degrees of freedom at the Planck scale supported on the big volume behind the horizon. Alternatively, it has been proposed that the hidden degrees of freedom may be identified with baby universes that emerge and interact behind the horizon~\cite{coleman1988black,giddings1988loss,polchinski1994possible,marolf2020transcending,marolf2020observations}.

The disagreement about the central dogma may be resolved simply by associating these hidden degrees of freedom to the reference~$W$. The bag of gold would then correspond to the effective black hole interior, which we defined as the primary interior together with the reference~$W$. This large interior, which thus ``hides'' the reference, is compatible with Hawking's ever growing entropy~$S(R)$. Conversely, the primary interior is relevant --- and the central dogma thus valid --- in all considerations that are to be understood ``relative to'' the reference. In the case of statements about entropies, this means that one conditions on~$W$. As we have argued here, the replica wormhole calculations of the radiation entropy do exactly this, i.e., they compute~$S(R|W)$. Hence, it may well be that the disagreement around the replica wormhole calculations is simply due to a mismatch of quantities, $S(R)$ and $S(R|W)$, that one is talking about, and which refer to different notions of the black hole interior. 

The results presented here are also compatible with MM's arguments for the existence of superselection sectors induced by the baby universe Hilbert space, which is not accessible to asymptotic observers~\cite{marolf2020transcending,marolf2020observations,marolf2021page}. We think of the existence of superselection sectors more generally as a consequence of the missing access to a reference $W$ for the black hole dynamics, but one may as well regard $W$ as a carrier of the $\alpha$-states of the baby universe Hilbert space.  

In AdS/CFT, wormholes provide the essential motivation for the idea of emergent spacetime. They are the geometric avatars of correlations in gravity. Motivated by the close connection between (bipartite) entanglement and wormholes~\cite{maldacena2003eternal,van2010building}, it is even conjectured that wormholes like Einstein-Rosen bridges are universally equivalent to  Einstein-Podolsky-Rosen pairs (ER=EPR)~\cite{maldacena2013cool}. However, it is important to emphasize that wormholes that are supposed to represent entanglement are spatial ones that are not localized in time, whereas replica wormholes are spacetime wormholes which are localized both in space and in time. While the difference between them is explored in~\cite{marolf2020transcending,marolf2020observations}, the de Finetti theorem confirms that the replica wormholes that appear in the recent calculations of the radiation entropy, as opposed to the spatial wormholes commonly studied, only represent \emph{classical correlations}, i.e., there is no entanglement being mediated by them. We anticipate a rich correspondence between different types of wormholes in gravity and different types of corresponding correlations.

There are important questions that we did not address, such as what the reference $W$ represents exactly and how we should describe it from the viewpoint of an observer falling into the black hole~\cite{almheiri2013black,marolf2013gauge}. It could well be that semi-classical notions of spacetime, such as a smooth horizon, depend on whether or not one takes into account~$W$. In candidate theories of quantum gravity, one could make more detailed speculations about~$W$. For example, in holographic theories with a gravity/ensemble duality~\cite{saad2019jt,saad2018semiclassical,stanford2019jt}, $W$~could represent the distinct boundary Hamiltonians for an ensemble of quantum field theories; in the fuzzball paradigm, $W$~could represent the stringy and braney excitations that live on the horizon~\cite{mathur2005fuzzball,mathur2019resolving}; and loop quantum gravity would assign $W$ to the quantised geometry at the Planck scale~\cite{ashtekar2018quantum,bianchi2018white,martin2019evaporating,rovelli2020end}. Black holes still have lots of insights to offer for years to come. 

\section*{Acknowledgments}

We thank Laura Burri, Esteban Castroz-Ruiz, Ladina Hausmann, Giulia Mazzola, Joe Renes and Henrik Wilming for inspiring discussions. This work has been supported by the Swiss National Science Foundation via the Centers for Excellence in Research QSIT and SwissMAP, as well as project No.~200021\_188541. RR also acknowledges the hospitality of the Kavli Institute for Theoretical Physics at the UC Santa Barbara during the Quantum Physics of Information program.

\appendix

\section{Uniform convergence of R\'enyi entropies} \label{app:Renyiconvergence}

Any quantum-classical state of the form $\rho_{R W} = \sum_{w \in \mathcal{W}} p_w \rho_{R|w} \otimes \ketbra{w}{w}_W$ may be interpreted as follows. Given the classical information $W=w$, which occurs with probability $p_w$, the state on $R$ is $\rho_{R|w}$. This defines in a unique way a probability measure $\dd\rho$ on the space $\mathcal{S}(\h_R)$ of states on~$R$, 
\begin{align} \label{eq:discretemeasure}
    \dd\rho(s) = \sum_{\substack{w \in \mathcal{W} : \\ \rho_{R|w} \in s }}  p_w \qquad \forall \, s \subset \mathcal{S}(\h_R) \ ,
\end{align}
such that
\begin{align}
  \rho_{R W} = \int \dd\rho\, \rho_{R|w} \otimes \ketbra{w}{w}_W \ .
\end{align}
For the following, we consider a sequence $\{\rho^{(N)}\}_{N \in \mathbb{N}}$ of quantum-classical states $\rho^{(N)} = \rho_{R W}^{(N)}$ of the form above, and denote by $\{\dd\rho^{(N)}\}_{N \in \mathbb{N}}$ the corresponding sequence of probability distributions on $\mathcal{S}(\h_R)$. We say that this sequence \emph{converges in distribution} if the expectation value with respect to $\dd\rho^{(N)}$ of any real continuous and bounded function~$f$ on $\mathcal{S}(\h_R)$, i.e., $\int \dd\rho^{(N)} f(\rho)$, converges in the limit $N \to \infty$. 

\begin{lem} \label{lem:distributionconvergence}
  If the sequence $\{\dd\rho^{(N)}\}_{N \in \mathbb{N}}$ is convergent in distribution then the sequence $\{g_N\}_{N \in \mathbb{N}}$ of functions $g_N: n \mapsto 2^{(1-n) S_n(R|W)_{\rho^{(N)}}}$, for $N \in \mathbb{N}$, converges uniformly in any compact region of the complex half plane $\Re(n) \geq 1$.
\end{lem}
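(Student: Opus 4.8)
The plan is to write each $g_N$ as an average of the scalar map $\sigma \mapsto \tr(\sigma^n)$ over the state space and then combine convergence in distribution with a normal-family argument from complex analysis. Using the measure $\dd\rho^{(N)}$ associated with $\rho^{(N)}_{RW}$ via~\eqref{eq:discretemeasure}, we have
\begin{align}
  g_N(n) = \sum_{w} p_w^{(N)} \tr(\rho_{R|w}^n) = \int_{\mathcal{S}(\h_R)} \tr(\sigma^n)\, \dd\rho^{(N)}(\sigma) \ ,
\end{align}
so the statement is entirely about the behaviour of the family of test functions $\sigma \mapsto \tr(\sigma^n)$ as $n$ ranges over the half plane.

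First I would establish pointwise convergence of $g_N(n)$ on the open half plane $\Re(n) > 0$. For a density operator $\sigma$ with eigenvalues $\lambda_j \in [0,1]$ one has $\tr(\sigma^n) = \sum_{j : \lambda_j > 0} \lambda_j^n$, and since $\Re(n) > 0$ the scalar function $t \mapsto t^n$ (with value $0$ at $t=0$) is continuous on $[0,1]$; approximating it uniformly by polynomials $q$ and using $|\tr(q(\sigma))| \leq \dim(R)\,\|q\|_{\infty,[0,1]}$ shows that $\sigma \mapsto \tr(\sigma^n)$ is continuous on the compact space $\mathcal{S}(\h_R)$. It is moreover bounded, since $|\tr(\sigma^n)| \leq \sum_j \lambda_j^{\Re(n)} \leq \dim(R)$. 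Applying the definition of convergence in distribution to the (bounded, continuous) real and imaginary parts of $\tr(\sigma^n)$ then shows that $g_N(n)$ converges for every fixed $n$ with $\Re(n) > 0$.

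It remains to upgrade pointwise convergence to uniform convergence on compact sets. Each $g_N$ is a finite sum of terms $\lambda_j^n = e^{n \ln \lambda_j}$ with $\lambda_j > 0$ and is therefore holomorphic (indeed entire) in $n$, and the bound $|g_N(n)| \leq \dim(R)$ holds throughout $\Omega := \{n : \Re(n) > 0\}$, so $\{g_N\}$ is a locally uniformly bounded family of holomorphic functions on the open, connected domain $\Omega$. Since this family converges pointwise on all of $\Omega$, the Vitali--Porter theorem (equivalently, Montel's theorem together with uniqueness of the pointwise limit) implies that $\{g_N\}$ converges locally uniformly on $\Omega$. Because every compact subset of the closed half plane $\{\Re(n) \geq 1\}$ is a compact subset of the open half plane $\Omega$, this yields exactly the asserted uniform convergence.

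The point that requires care is the boundary line $\Re(n) = 1$: normal-family arguments only deliver uniform convergence on compact subsets of an \emph{open} domain, whereas the statement concerns the closed half plane $\Re(n) \geq 1$. I circumvent this by running the whole argument on the strictly larger open half plane $\{\Re(n) > 0\}$. This enlargement is harmless precisely because the spectra of density operators lie in $[0,1]$, so that $\tr(\sigma^n)$ stays bounded and holomorphic all the way down to $\Re(n) > 0$; this is the main (and essentially only) subtlety, everything else being routine.
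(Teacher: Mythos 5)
Your proof is correct, but it takes a genuinely different route from the paper's. The paper argues directly and quantitatively on the closed half plane $\Re(n) \geq 1$: it uses compactness of $\mathcal{S}(\h_R)$ to build a finite $\eps$-coarse-graining of the state space, the Lipschitz-type bound $|\tr(\rho^n) - \tr(\tilde{\rho}^n)| \leq |n|\,\|\rho - \tilde{\rho}\|_1$ (valid for $\Re(n) \geq 1$) to replace $\int \dd\rho^{(N)} \tr(\rho^n)$ by a finite sum $\sum_{\nu} p^{(N)}_\nu \tr(\nu^n)$ up to an error $|n|\eps$, and convergence in distribution to make the finitely many coefficients $p^{(N)}_\nu$ converge; choosing $\eps$ proportional to $\tilde{\eps}/(n_{\max}+1)$ then gives a uniform-in-$n$ error on any compact region, i.e., uniform convergence. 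You instead get pointwise convergence for free from the definition of convergence in distribution (applied to the real and imaginary parts of the bounded continuous test function $\sigma \mapsto \tr(\sigma^n)$) and upgrade it to locally uniform convergence by a normal-family (Vitali--Porter) argument, the key observation being that the whole family $\{g_N\}$ is entire and uniformly bounded by $\dim(R)$ on the larger open half plane $\Re(n) > 0$, so that the troublesome boundary line $\Re(n)=1$ sits in the interior of the domain where the complex-analytic machinery applies. What the paper's argument buys is explicit, elementary error control that never invokes holomorphy (only the Lipschitz bound in the state variable); what yours buys is brevity and a slightly stronger conclusion: the limit $g$ is automatically holomorphic on all of $\Re(n) > 0$, in particular across the line $\Re(n) = 1$, which would even streamline the continuity/analyticity discussion needed for the application of Carlson's theorem in the proof of Claim~1.
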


\begin{proof}

Since $\dd\rho^{(N)}$ is related to $\rho^{(N)}_{R W}$ via~\eqref{eq:discretemeasure}, we have 
\begin{multline} \label{eq:Snaverage}
      \qquad 2^{(1-n) S_n(R|W)_{\rho^{(N)}}} 
      =\sum_{w \in \mathcal{W}} p_w^{(N)} 2^{(1-n) S_n(R)_{\rho_{w}^{(N)}}} \\
      = \int \dd\rho^{(N)} 2^{(1-n) S_n(R)_\rho} 
      = \int \dd\rho^{(N)} \tr(\rho^n)  \ ,
\end{multline}
where we used~\eqref{eq:qcconditionalRenyi} for the first equality. 

Let $\eps > 0$. Since the space of states $\mathcal{S}(\h_R)$ (equipped with the trace norm $\| \cdot \|_1$) is compact, there must exist a continuous function $c^\eps$ from $\mathcal{S}(\h_R)$ to the space of probability distributions on a finite subset $\mathcal{V}$ of $\mathcal{S}(\h_R)$ such that any $\nu \in \mathcal{V}$ and any $\rho \in \mathcal{S}(\h_R)$ with $c^\eps(\rho)_\nu > 0$ are $\eps$-close. In the following, we will call a function $c^\eps$ with this property an $\eps$-coarse-graining of $\mathcal{S}(\h_R)$. 

Using the bound\footnote{The bound can be obtained by extending Lemma~2 of~\cite{raggio1995properties} to complex $n$ with $\Re(n)\ge 1$. }
 \begin{align*}
        | \tr(\rho^n) - \tr(\tilde{\rho}^n) | 
        \leq |n| \| \rho - \tilde{\rho} \|_1 \ ,
 \end{align*}
 which holds for any density operators $\rho$ and $\tilde{\rho}$ and $n$ with $\Re(n) \geq 1$, we have 
   \begin{align} \label{eq:approximationone}
       \int \dd\rho^{(N)} \tr(\rho^n)
       \stackrel{n\eps}{\approx} 
       \int \dd\rho^{(N)} \sum_{\nu \in \mathcal{V}} c^\eps(\rho)_{\nu} \tr(\nu^n) 
       = \sum_{\nu \in \mathcal{V}} p^{(N)}_\nu \tr(\nu^n) \ ,
   \end{align}
   where we defined the probability distribution $p^{(N)}$ on $\mathcal{V}$ by  $p^{(N)}_\nu = \int \dd\rho^{(N)} c^{\eps}(\rho)_\nu$. 
   
   By assumption, the sequence $\{\dd\rho^{(N)}\}_{N \in \mathbb{N}}$  of probability measures on  $\mathcal{S}(\h_R)$ converges in distribution. Hence, for any coarse-graining $c^\eps$, the sequence $\{p^{(N)}_\nu\}_{N \in \mathbb{N}}$  converges for any $\nu \in \mathcal{V}$. Since the set $\mathcal{V}$ is finite, there must exist $N_0$ and a probability distribution $p$ on $\mathcal{V}$ such that for all $N \geq N_0$ we have $\sum_{\nu \in \mathcal{V}} |p^{(N)}_{\nu} - p_\nu| \leq \eps$. This implies 
  \begin{align} \label{eq:approximationtwo}
     \sum_{\nu \in \mathcal{V}} p^{(N)}_\nu \tr(\nu^n)
       \stackrel{\eps}{\approx} 
      \sum_{\nu \in \mathcal{V}} p_\nu \tr(\nu^n) \ . 
  \end{align}
  Choose now any compact region of the complex plane $\Re(n) \geq 1$ and let $n_{\max}$ be the maximum modulus of~$n$ within this region. Then, for any $\tilde{\eps} > 0$, we may choose $\eps = \tilde{\eps}/(n_{\max}+1)$ and concatenate~\eqref{eq:Snaverage}, \eqref{eq:approximationone}, and~\eqref{eq:approximationtwo} to conclude that, for $N$ sufficiently large,
  \begin{align}
     2^{(1-n) S_n(R|W)_{\rho^{(N)}}} 
     \stackrel{\tilde{\eps}}{\approx}  
    \sum_{\nu \in \mathcal{V}} p_\nu \tr(\nu^n)       
  \end{align}
  holds for any $n$ within the region. Since the approximation is independent of $n$, we have thus shown that the convergence is uniform. 
\end{proof}

\section{Uniqueness of the conditional entropy} \label{app:conditionalentropyuniqueness}

For the construction of the reference $W$ in the $N$-black-hole scenario described in Section~\ref{sec:reference}, we applied the quantum de Finetti theorem, Theorem~\ref{thm:definetti_state}, to the total state of all radiation systems $R_1, \ldots, R_N$. However, we may as well apply Theorem~\ref{thm:definetti_state} to a subset consisting of only $M \leq N$ radiation systems. This yields an approximation for $n < M$ radiations systems in terms of a state of the form 
\begin{equation}\label{eq:alternativedefinettistate}
    \rho^{(N,M)}_{R_1\cdots R_n W}
    = \sum_w p_w^{(N,M)}\rho_{R|w}^{\otimes n}\otimes\ketbra{w}{w}_W \ , 
\end{equation}
which equals~\eqref{eq:qcstate} for $M=N$, but which may be different for different values of $M$. The following lemma asserts, however, that this difference is irrelevant when considering the conditional entropy of any single radiation system $R$. More precisely, in the limit of large $N$ and $M$, the conditional entropy $S(R|W)_{\rho^{(N,M)}}$ evaluated for these states is identical to the entropy $S(R|W)$ defined by~\eqref{eq:condW}, which corresponds to the case where $M=N$. 

\begin{lem}\label{lem:uniqueness}
For fixed boundary conditions $\mathcal{B}$ and for any $M,N \in \mathbb{N}$ with $M \leq N$, let $\rho^{(N,M)}$ be the state of the form~\eqref{eq:alternativedefinettistate} obtained by applying the quantum de Finetti theorem to $M$ radiation fields within the $N$-black hole system defined by $\mathcal{B}^{\times N}$. Then
   \begin{align}
   \lim_{M \to \infty} \lim_{N \to \infty} S(R|W)_{\rho^{(N,M)}} = S(R|W) \ .
\end{align}
\end{lem}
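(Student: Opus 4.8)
The plan is to write the conditional entropy as an average of von Neumann entropies over the de Finetti measure and then show that this measure converges, in the double limit, to the one defining $S(R|W)$. Using the measure $\dd\rho^{(N,M)}$ on $\mathcal{S}(\h_R)$ induced by the decomposition~\eqref{eq:alternativedefinettistate} (exactly as in Appendix~\ref{app:Renyiconvergence}), equation~\eqref{eq:qcconditionalvN} gives $S(R|W)_{\rho^{(N,M)}} = \int \dd\rho^{(N,M)}\, S(R)_\rho$, and the $M=N$ case reproduces the quantity whose limit is $S(R|W)$ by definition~\eqref{eq:condW}. First I would note that the two de Finetti states obtained by applying Theorem~\ref{thm:definetti_state} to $M$ and to $N$ of the radiation systems approximate the \emph{same} true $n$-system marginal $\rho_{R_1\cdots R_n}$ of the $N$-black-hole state: the robustness bound~\eqref{eq:dFrobust} (with $N$ replaced by $M$, valid for $n\leq M/2$) gives $\|\rho^{(N,M)}_{R_1\cdots R_n}-\rho_{R_1\cdots R_n}\|_1\leq \O(\sqrt{n/M})$, and the $M=N$ instance gives error $\O(\sqrt{n/N})$. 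Since $\rho^{(N,M)}_{R_1\cdots R_n}=\int\dd\rho^{(N,M)}\,\rho^{\otimes n}$ is precisely the $n$th moment tensor of the measure, this shows that all moment tensors of $\dd\rho^{(N,M)}$ and of $\dd\rho^{(N)}=\dd\rho^{(N,N)}$ agree up to $\O(\sqrt{n/M}+\sqrt{n/N})$.

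The remaining and essential step is to upgrade this agreement of moments to an agreement of entropy averages, and this is where the main obstacle lies: $S(R)_\rho$ is not a polynomial of any fixed degree in the moments, so moment closeness does not control it directly. I would resolve this by a Stone--Weierstrass argument on the compact set $\mathcal{S}(\h_R)$. The functions $\rho\mapsto\tr(\rho^{\otimes n}X)$, over all $n$ and operators $X$, form a point-separating subalgebra of $C(\mathcal{S}(\h_R))$ containing the constants, hence are dense; so for any $\eps>0$ there is a ``moment polynomial'' $P_\eps$ of some finite degree $K_\eps$ with $\sup_\rho|S(R)_\rho-P_\eps(\rho)|\leq\eps$. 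Because $\int\dd\rho\,P_\eps$ is a fixed, $(N,M)$-independent linear combination of moment tensors of order at most $K_\eps$, the moment bound above yields, for $M\geq 2K_\eps$,
\begin{align*}
  \limsup_{N\to\infty}\bigl|S(R|W)_{\rho^{(N,M)}}-S(R|W)\bigr|
  \;\leq\; 2\eps + C_\eps\,\O\!\bigl(\sqrt{K_\eps/M}\,\bigr),
\end{align*}
where $C_\eps$ collects the coefficients of $P_\eps$ (finite and independent of $N,M$) and I used the definition~\eqref{eq:condW} to replace $\lim_N S(R|W)_{\rho^{(N)}}$ by $S(R|W)$. Letting $M\to\infty$ and then $\eps\to0$ drives the right-hand side to zero, which is the claim. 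Equivalently, this argument shows $\dd\rho^{(N,M)}\Rightarrow\dd\rho^\infty$ in distribution in the double limit, so that the bounded continuous functional $S(R)_\cdot$ converges.

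A point I would be careful about is the existence of the inner limit $\lim_{N\to\infty}S(R|W)_{\rho^{(N,M)}}$ at fixed $M$. The approximation supplied by Theorem~\ref{thm:definetti_state} is not canonical, so a priori the sequence need only have all its limit points trapped in an interval of width $\O(\sqrt{1/M})$ about $S(R|W)$; the squeeze above already forces $\limsup_M\limsup_N$ and $\liminf_M\liminf_N$ to coincide and equal $S(R|W)$. To obtain a genuine inner limit one may either impose the same convergence-in-distribution requirement on $\{\dd\rho^{(N,M)}\}_N$ as was imposed on $\{\dd\rho^{(N)}\}_N$ in Section~\ref{sec:reference} (in which case the convergence of $S(R|W)_{\rho^{(N,M)}}$ follows from the boundedness and continuity of $S(R)_\cdot$), or appeal to the asymptotic uniqueness of the de Finetti measure to fix a consistent choice across $N$. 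In either case it is the quantitative $\O(\sqrt{1/M})$ control on the spread that makes the outer limit well defined, so the hard part is really the Stone--Weierstrass passage from moments to the entropy, not the bookkeeping of the two limits.
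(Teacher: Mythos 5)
Your proof is correct, and its skeleton coincides with the paper's: both arguments start from the observation that $\rho^{(N,M)}_{R_1 \cdots R_n}$ and $\rho^{(N)}_{R_1 \cdots R_n}$ are de Finetti approximations of the \emph{same} marginal of the $N$-black-hole state, so that their moment tensors $\sum_w p_w \rho_{R|w}^{\otimes n}$ agree up to $\O(\sqrt{n/M}+\sqrt{n/N})$ in trace norm, and both then reduce the entropy averages $\sum_w p_w S(R)_{\rho_{|w}}$ to functionals of finitely many moments. Where you genuinely differ is in the tool used for that reduction. The paper invokes the Empirical Young Diagram algorithm: a concrete observable $\Theta_{R_1 \cdots R_m}$ with $\|\Theta_{R_1\cdots R_m}\|_\infty \leq \log|R|$ and $\tr(\Theta_{R_1\cdots R_m}\, \sigma_R^{\otimes m}) = S(R)_\sigma + \O(m^{-1/3})$ uniformly in $\sigma$ --- in effect an \emph{explicit} degree-$m$ moment polynomial approximating the entropy with a known rate --- and choosing $m = \lfloor \sqrt{M}\rfloor$ yields the quantitative bound $\limsup_{N\to\infty}\bigl|S(R|W)_{\rho^{(N,M)}} - S(R|W)\bigr| = \O(M^{-1/6})$. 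You instead use Stone--Weierstrass to obtain, for each $\eps$, \emph{some} moment polynomial $P_\eps$ of \emph{some} degree $K_\eps$ with $\sup_\rho |S(R)_\rho - P_\eps(\rho)| \leq \eps$; this is more elementary and self-contained (no appeal to the entropy-estimation literature), and it suffices for the limit statement, but it is purely existential and hence gives no convergence rate in $M$. One may summarise the relation by saying that the EYD observable is a quantitative witness for exactly the density statement you derive abstractly. Finally, your caveat about the existence of the inner limit at fixed $M$ is well taken, but note that it applies equally to the paper's proof, which likewise only establishes the $\limsup$ bound and reads the double limit in that squeeze sense; your proposed remedy --- imposing on $\{\dd\rho^{(N,M)}\}_{N}$ the same convergence-in-distribution hypothesis that Section~\ref{sec:reference} imposes on $\{\dd\rho^{(N)}\}_{N}$ --- is the natural one.
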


\begin{proof}
  We will use the fact that the von Neumann entropy $S(R)_{\sigma}$ of a state $\sigma_R$ can be estimated by applying the Empirical Young Diagram algorithm~\cite{keyl2001estimating,acharya2019measuring}, which requires as input $m$ systems prepared in the product state $\sigma_R^{\otimes m}$. The estimate converges towards the value $S(R)_{\sigma}$ in the limit $m \to \infty$. Since the output of the algorithm is a real number, we may write the action of the algorithm as an observable $\Theta_{R_1 \cdots R_m}$.\footnote{Specifically, $\Theta_{R_1 \cdots R_m}$ is a linear combination of Young Diagram projectors with the measurement outcomes corresponding to the empirical estimate of the entropy.} A quantitative analysis of the algorithm then shows that 
  \begin{align} \label{eq:entropyestimate}
    S(R)_{\sigma} \stackrel{\eps'}{\approx} \tr(\Theta_{R_1 \cdots R_m} \sigma_R^{\otimes m}) 
  \end{align}
  where $\eps' = \O(m^{-1/3})$. 

  Let now $M \leq N$ and $m \leq \sqrt{M}$. Since both $\rho^{(N)}_{R_1 \cdots R_n}$ and $\rho^{(N,M)}_{R_1 \cdots R_n}$ are  defined as de Finetti approximations of the same marginal state of a system consisting of $N$ black holes, we have
  \begin{equation}\label{eq:lemstep1}
    \sum_w p_w^{(N,M)} \rho_{R|w}^{\otimes m} 
      \stackrel{\eps''}{\approx} \sum_w p_w^{(N)} \rho_{R|w}^{\otimes m} 
  \end{equation}
  where $\eps'' = \O(M^{-1/4})$. Applying the entropy measurement $\Theta_{R_1 \cdots R_m}$ to both sides of~\eqref{eq:lemstep1}, noting that $\|\Theta_{R_1 \cdots R_m}\|_\infty \leq \log |R|$, and using the approximation~\eqref{eq:entropyestimate} on both sides, we find by linearity
  \begin{multline}
      S(R|W)_{\rho^{(N,M)}} = \sum_{w} p_w^{(N,M)} S(R)_{\rho_{w}} \stackrel{\eps'}{\approx}
      \sum_{w} p_w^{(N,M)} \tr(\Theta_{R_1 \cdots R_m} \rho_{R|w}^{\otimes m}) \\
      = \tr\Bigl(\Theta_{R_1 \cdots R_m} \sum_{w} p_w^{(N,M)} \rho_{R|w} ^{\otimes m}\Bigr)
      \stackrel{\eps'' \log |R|}{\approx} \tr\Bigl(\Theta_{R_1 \cdots R_m} \sum_{w} p_w^{(N)} \rho_{R|w}^{\otimes m}\Bigr)
      \stackrel{\eps'}{\approx} S(R|W)_{\rho^{(N)}} \ .
  \end{multline}
  We may choose $m = \lfloor \sqrt{M} \rfloor$, in which case the overall approximation is $2 \eps' + \eps'' \log |R| = \O(M^{-1/6})$. By assumption, the $N \to \infty$ limit of the right hand side equals $S(R|W)$. We can thus conclude that
  \begin{align}
  \limsup_{N \to \infty} \bigl| S(R|W)_{\rho^{(N,M)}} - S(R|W) \bigr| = \O(M^{-1/6}) \ .
  \end{align}
  This immediately yields the desired claim.
\end{proof}

\section{Proof of Theorem~\ref{thm:definetti_state}} \label{app:deFinetti}

The variant of the quantum de Finetti theorem, Theorem~\ref{thm:definetti_state}, which we employ to derive our main claims, differs in various aspects from the ones proved in the literature. For this reason we provide a separate proof here. 

A main ingredient to the proof is a POVM on the symmetric subspace $\mathrm{Sym}^k(\h_S)$ of a product space $\h_S^{\otimes k}$. For this we use a quantum $k$-design, which is defined as a finite family of states $\{\psi_{S|w}\}_{w \in \mathcal{W}}$ on $\h_S$ such that
\begin{equation}
   \frac{1}{|\mathcal{W}|} \sum_{w\in\mathcal{W}} \psi_{S|w}^{\otimes k} = \int \dd\psi\,\psi_S^{\otimes k} \ ,
\end{equation}
where the integration measure $\dd\psi$ on the right hand side is the Haar measure on states on~$\h_S$. Note that a quantum $k$-design is also a quantum $t$-design for any $t\le k$. Then the family 
\begin{align} \label{eq:povm}
    \{C_k\psi_{S|w}^{\otimes k}/|\mathcal{W}|\}_{w \in \mathcal{W}} \ ,
\end{align}
with $C_{k}:=\binom{k+\dim(\h_S)-1}{k}$  the dimension of the symmetric subspace $\mathrm{Sym}^k(\h_S)$, forms a POVM on that subspace. 

Another key ingredient to the proof is the following well-known statement, which asserts that permutation-invariant states have a symmetric purification.

\begin{lem}[Lemma~II.5 and Lemma~II.5' in \cite{christandl2007one}]\label{lem:purification}
  Let $\rho$ be a state on $\h_R^{\otimes N}\otimes\h_E$ that is per\-mu\-ta\-tion-invariant relative to $\h_E$. Then there exists a purification of $\rho$ in $\mathrm{Sym}^N({\h_R\otimes\h'_R})\otimes\h_E\otimes\h'_E$ with $\h_R\cong\h'_R$ and $\h_E\cong\h'_E$.
\end{lem}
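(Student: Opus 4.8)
The plan is to construct the purification explicitly as the \emph{canonical} (square-root) purification of $\rho$ onto a mirror copy of the entire system, and then verify by a short symmetry argument that this particular purification automatically lands in the symmetric subspace. Write $\k := \h_R^{\otimes N}\otimes\h_E$ and let $\k'$ denote an isomorphic mirror copy with factors $\h_R'\cong\h_R$ and $\h_E'\cong\h_E$. Fix a product basis of $\k$ built from the computational basis of $\h_R^{\otimes N}$ together with an arbitrary basis of $\h_E$, and let $\ket{\Omega}=\sum_j\ket{j}_{\k}\ket{j}_{\k'}$ be the associated unnormalised maximally entangled vector. I would then set $\ket{\psi}:=(\sqrt{\rho}\otimes I_{\k'})\ket{\Omega}$. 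Using $\tr_{\k'}\ketbra{\Omega}{\Omega}=I_{\k}$ and Hermiticity of $\sqrt{\rho}$ gives $\tr_{\k'}\ketbra{\psi}{\psi}=\sqrt{\rho}\,I\,\sqrt{\rho}=\rho$, so $\ket{\psi}$ is a purification of $\rho$ living in $\k\otimes\k'$, which after reordering the tensor factors is $(\h_R\otimes\h_R')^{\otimes N}\otimes\h_E\otimes\h_E'$.

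It remains to show $\ket{\psi}\in\mathrm{Sym}^N(\h_R\otimes\h_R')\otimes\h_E\otimes\h_E'$, i.e.\ that $\ket{\psi}$ is invariant under the diagonal permutation action that permutes the $N$ pairs $(\h_R,\h_R')$ simultaneously. Under the reordering this diagonal action is exactly $U_\pi\otimes U_\pi'$, with $U_\pi$ permuting the $N$ copies of $\h_R$ and $U_\pi'$ those of $\h_R'$. Two ingredients drive the argument. First, since $\rho$ is permutation-invariant relative to~$E$, it commutes with $U_\pi\otimes I_E$ for every $\pi\in\mathbf{S}_N$, and hence so does its square root $\sqrt{\rho}$ (any unitary preserving $\rho$ preserves its eigenspaces and thus commutes with every function of $\rho$). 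Second, the maximally entangled vector satisfies the standard identity $(V\otimes\bar V)\ket{\Omega}=\ket{\Omega}$ for every unitary $V$ on $\k$, where $\bar V$ denotes the entrywise complex conjugate in the chosen basis.

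Combining these: apply $(U_\pi\otimes I_E)\otimes(\bar U_\pi\otimes I_{E'})$ to $\ket{\psi}$, commute $U_\pi\otimes I_E$ through $\sqrt{\rho}$, and then invoke the invariance of $\ket{\Omega}$ with $V=U_\pi\otimes I_E$ to recover $(\sqrt{\rho}\otimes I_{\k'})\ket{\Omega}=\ket{\psi}$. Finally, because the $U_\pi$ are permutation matrices in the computational basis they are real, so $\bar U_\pi=U_\pi$ and the conjugate action coincides with the honest diagonal permutation action $U_\pi\otimes U_\pi'$. Hence $\ket{\psi}$ is fixed by every diagonal permutation and therefore lies in $\mathrm{Sym}^N(\h_R\otimes\h_R')$, with the $\h_E\otimes\h_E'$ factors carried along untouched. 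This delivers both the purification and the symmetric-subspace claim in one stroke (Lemma~II.5$'$ being precisely the latter refinement).

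I expect the only delicate point to be the bookkeeping rather than any genuine difficulty: tracking the conjugate representation $\bar U_\pi$, reordering tensor factors so that ``symmetric on $\h_R\otimes\h_R'$'' is correctly matched to invariance under $U_\pi\otimes U_\pi'$, and invoking the reality of permutation matrices in the right basis so that $\bar U_\pi=U_\pi$. Everything else reduces to the two elementary identities for $\sqrt{\rho}$ and $\ket{\Omega}$ recalled above.
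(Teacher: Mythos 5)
Your proof is correct. The paper itself does not prove this lemma --- it imports it as Lemma~II.5 and Lemma~II.5$'$ from~\cite{christandl2007one} --- but your argument is essentially the standard proof given there: take the canonical purification $(\sqrt{\rho}\otimes I)\ket{\Omega}$, use that permutation invariance of $\rho$ implies $\sqrt{\rho}$ commutes with $U_\pi\otimes I_E$, apply the transpose trick $(V\otimes\bar V)\ket{\Omega}=\ket{\Omega}$, and note that permutation operators are real in the product basis so that $\bar U_\pi = U_\pi$, which makes the invariance under the diagonal action $U_\pi\otimes U_\pi'$ (and hence membership in $\mathrm{Sym}^N(\h_R\otimes\h_R')\otimes\h_E\otimes\h_E'$) immediate.
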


We now state a more explicit version of Theorem~\ref{thm:definetti_state}, from which the latter then follows as a corollary by virtue of Lemma~\ref{lem:purification}. The statements refer to the trace distance, which is defined as $\Delta(\rho,\sigma):=\frac12\|\rho-\sigma\|_1$.

\begin{customthm}{1'} \label{thm:pure_definetti_state}
Let $\Psi_{S_1 \cdots S_{N} F}$ be a pure state on $\mathrm{Sym}^N(\h_S)\otimes\h_F$ for $N$ even. Let $\{\psi_{S|w}\}_{w \in \mathcal{W}}$ be a quantum $N$-design over $\h_{S}$. Define furthermore 
  \begin{align} \label{eq:conddF}
     \Psi_{S_1 \cdots S_{N/2} F|w}:=\frac{C_{N/2}}{|\mathcal{W}|p_w}\tr_{S_{N/2+1}\cdots S_N}\,(I_{S_1\cdots S_{N/2}F}\otimes\psi^{\otimes N/2}_{S|w}\cdot \Psi_{S_1 \cdots S_{N} F}) \ ,
  \end{align}
   where $p_w:=C_{N/2}\tr\,\psi^{\otimes N/2}_{S|w} \Psi_{S_1 \cdots S_{N/2}}/|\mathcal{W}|$ is a probability. Then, for any $n \leq N/2$, 
\begin{align} \label{eq:distanceexpectationvalue}
      \sum_{w\in\mathcal{W}} p_w \Delta(\Psi_{S_1 \cdots S_n F|w} \,,\, \Psi_{S_1|w}^{\otimes n} \otimes \Psi_{F|w} ) \le 3\sqrt{\frac{2\dim(\h_S)n}{N}} \ .
  \end{align}
\end{customthm}

\begin{proof}
Let us introduce an auxiliary state $\eta_{F|w}$ on $\h_F$ as an intermediate tool for the proof,
 \begin{align} \label{eq:condeta}
      \eta_{F|w}:=\frac{C_{N/2+n}}{|\mathcal{W}|\tilde{p}_w}\tr_{S_1\cdots S_N}\,(I_{S_1 \cdots S_{N/2-n} F}\otimes\psi^{\otimes N/2+n}_{S|w}\cdot \Psi_{S_1 \cdots S_{N}F}) \ ,
  \end{align}
 where $\tilde{p}_w:= C_{N/2+n} \tr\,\psi^{\otimes N/2+n}_{S|w} \Psi_{S_1 \cdots S_{N/2+n}}/|\mathcal{W}|$ is a probability.

Firstly, we verify that $\{p_w\}_{w \in \mathcal{W}}$ is a valid probability distribution. Using the fact that $\{\psi_{S|w}\}_{w\in\mathcal{W}}$ is a quantum $N$-design, that~\eqref{eq:povm}, for $k=N/2$, forms a POVM on the symmetric subspace $\mathrm{Sym}^{N/2}(\h_S)$, and that  $\Psi_{S_1 \cdots S_{N/2}}$ lies in that subspace, we find
\begin{equation}
    \sum_w p_w =\frac{C_{N/2}}{|\mathcal{W}|} \tr\bigl( \sum_w  \psi^{\otimes N/2}_{S|w} \Psi_{S_1 \cdots S_{N/2}}\bigr) = 1 \ .
\end{equation}
Similarly, $\{\tilde{p}_w\}_{w \in \mathcal{W}}$ is also  a valid probability distribution.

By applying the triangle inequality for the trace distance twice, we upper bound the distance on the left hand side of~\eqref{eq:distanceexpectationvalue} by a sum of three distances, 
\begin{multline}\label{eq:triangleineq}
     \sum_{w\in\mathcal{W}} p_w \Delta(\Psi_{S_1 \cdots S_nF|w},\,\Psi^{\otimes n}_{S_1|w}\otimes\Psi_{F|w}) \\
    \le \sum_{w\in\mathcal{W}} p_w \Delta(\Psi_{S_1 \cdots S_nF|w},\,\psi^{\otimes n}_{S|w}\otimes\eta_{F|w})
    +  \sum_{w\in\mathcal{W}} p_w \Delta(\Psi^{\otimes n}_{S_1|w}\otimes\eta_{F|w} ,\,\psi^{\otimes n}_{S|w}\otimes\eta_{F|w}) \\
    +\sum_{w\in\mathcal{W}} p_w \Delta(\Psi^{\otimes n}_{S_1|w}\otimes\eta_{F|w} ,\,\Psi^{\otimes n}_{S_1|w}\otimes\Psi_{F|w}) \ .
\end{multline}

We now show that each of the three terms above are bounded by $\sqrt{2\dim(\h_S)n/N}$. Expanding the first term gives
\begin{equation}\label{eq:fvdGinequality}
  \sum_{w\in\mathcal{W}} p_w \Delta(\Psi_{S_1 \cdots S_nF|w},\,\psi^{\otimes n}_{S|w}\otimes\eta_{F|w}) 
   \le\sqrt{1-\sum_{w\in\mathcal{W}} p_wF(\Psi_{S_1 \cdots S_nF|w},\,\psi^{\otimes n}_{S|w}\otimes\eta_{F|w})}
\end{equation}
where we have applied the Fuchs–van de Graaf inequality and Jensen's inequality. To bound the right hand side we compute the expectation value of the fidelity, 
\begin{equation}
    \begin{aligned}
&\sum_{w\in\mathcal{W}} p_w\left(\tr\sqrt{\psi^{\otimes n}_{S|w}\otimes\sqrt{\eta_{F|w}}\cdot\Psi_{S_1 \cdots S_nF|w}\cdot\psi^{\otimes n}_{S|w}\otimes\sqrt{\eta_{F|w}}}\right)^2 \\
=&\sum_{w\in\mathcal{W}} p_w\left(\tr\sqrt{\psi^{\otimes n}_{S|w}\otimes\left[\sqrt{\eta_{F|w}}\cdot\tr_{S_1\cdots S_n}\,(\psi^{\otimes n}_{S|w}\Psi_{S_1 \cdots S_nF|w})\cdot\sqrt{\eta_{F|w}}\right]}\right)^2 \\
=&\sum_{w\in\mathcal{W}} p_w\left(\tr\sqrt{\sqrt{\eta_{F|w}}\cdot\tr_{S_1\cdots S_n}\,(\psi^{\otimes n}_{S|w}\Psi_{S_1 \cdots S_nF|w})\cdot\sqrt{\eta_{F|w}}}\right)^2 \\
=&\frac{C_{N/2}}{|\mathcal{W}|}\sum_{w\in\mathcal{W}}\left(\tr\sqrt{\sqrt{\eta_{F|w}}\cdot\tr_{S_1\cdots S_N}\,(\psi^{\otimes N/2+n}_{S|w}\Psi_{S_1 \cdots S_NF})\cdot\sqrt{\eta_{F|w}}}\right)^2 \\
=&\frac{C_{N/2}}{C_{N/2+n}}\sum_{w\in\mathcal{W}}\,\tilde{p}_w\left(\tr\,\sqrt{\sqrt{\eta_{F|w}}\cdot\eta_{F|w}\cdot\sqrt{\eta_{F|w}}}\right)^2 \\
=&\frac{\binom{N/2+\dim(\h_S)-1}{N/2}}{\binom{N/2+n+\dim(\h_S)-1}{N/2+n}}\ge \left(\frac{N/2+1}{N/2+n+1}\right)^{\dim(\h_S)-1}\ge 1- \frac{2\dim(\h_S)n}{N} \ , 
\end{aligned}
\end{equation}
where the first equality follows as $\psi_{S|w}$ is a pure state (projector), and where we made the substitutions~\eqref{eq:conddF} and~\eqref{eq:condeta} to obtain the third and fourth equalities. The first term on the right hand side of~\eqref{eq:triangleineq} is thus bounded by $\sqrt{2\dim(\h_S)n/N}$. 

The second term in \eqref{eq:triangleineq} can be expanded as
\begin{equation}
\begin{aligned}
    &\sum_{w\in\mathcal{W}} p_w \Delta(\Psi^{\otimes n}_{S_1|w}\otimes\eta_{F|w} ,\,\psi^{\otimes n}_{S|w}\otimes\eta_{F|w}) 
    =\sum_{w\in\mathcal{W}} p_w \Delta(\Psi^{\otimes n}_{S_1|w} ,\,\psi^{\otimes n}_{S|w})\\
    \le&\sum_{w\in\mathcal{W}} p_w \sqrt{1-F(\Psi_{S_1|w} ,\,\psi_{S|w})^n}
    \le\sum_{w\in\mathcal{W}} p_w \sqrt{n(1- F(\Psi_{S_1|w} ,\,\psi_{S|w}))} \ ,
\end{aligned}
\end{equation}
where the last step follows from the Bernoulli's inequality.\footnote{Bernoulli's inequality asserts that for every integer $n\ge 0$ and any real number $x\ge -1$, $1+nx\le (1+x)^n$. Here we take $x=F(\Psi_{S_1|w} ,\,\psi_{S|w})-1$.}
A similar calculation as above gives
\begin{equation}
    \sum_{w\in\mathcal{W}} p_w \sqrt{n(1- F(\Psi_{S|w} ,\,\psi_{S|w}))} \le \sqrt{n}\sqrt{1-(1-2\dim(\h_S)/N)} = \sqrt{2\dim(\h_S)n/N}
\end{equation}
and it follows that the second term in \eqref{eq:triangleineq} is also bounded by $\sqrt{2\dim(\h_S)n/N}$. 

Finally, the third term in~\eqref{eq:triangleineq} can be bounded as
\begin{multline}
    \sum_{w\in\mathcal{W}} p_w \Delta(\Psi^{\otimes n}_{S_1|w}\otimes\eta_{F|w} ,\,\Psi^{\otimes n}_{S_1|w}\otimes\Psi_{F|w}) = \sum_{w\in\mathcal{W}} p_w \Delta(\eta_{F|w} ,\,\Psi_{F|w}) \\ 
    \le \sum_{w\in\mathcal{W}} p_w \Delta(\psi^{\otimes n}_{S|w}\otimes\eta_{F|w},\,\Psi_{S_1 \cdots S_nF|w})\le\sqrt{2\dim(\h_S)n/N} \ ,
\end{multline}
where the first inequality follows from the fact that the trace distance is non-increasing under the partial trace and the second inequality follows from the observation that  the expression is identical to the first term in~\eqref{eq:triangleineq}, which we have already bounded above. Then the claim \eqref{eq:distanceexpectationvalue} follows.
\end{proof}

For the proof of Theorem~\ref{thm:definetti_state}, we first note that, by assumption, there exists a permutation-invariant extension $\rho_{R_1 \cdots R_N E}$ of the given density operator.  Lemma~\ref{lem:purification} then implies the existence of a permutation-invariant purification $ \Psi_{S_1 \cdots S_N F}$ with $S_i\cong R_iR'_i$ and $F\cong EE'$. Given that a quantum $N$-design exists for any $N$ and $\dim(\h_S)$~\cite{seymour1984averaging,roy2009unitary}, we may now apply Theorem~\ref{thm:pure_definetti_state}. With $\rho_{R_1 \cdots R_{N/2}E|w}$ and $\rho_{E|w}$ defined as the corresponding marginals of the state~\eqref{eq:conddF}, and using that the trace distance is non-increasing under the partial trace, we find
\begin{align}
     \sum_{w\in\mathcal{W}} p_w\Delta(\rho_{R_1 \cdots R_n E|w}\otimes \ketbra{w}{w}_W \,,\, \rho_{R|w}^{\otimes n} \otimes \rho_{E|w}\otimes \ketbra{w}{w}_W) &\le 3\dim(\h_S)\sqrt{\frac{2n}{N}} \ ,
\end{align}
where we have appended a reference system $W$ with orthogonal projectors. By the strong convexity of the trace distance, the bound also implies

\begin{equation}\label{eq:secondlaststep}
     \Delta(\sum_{w\in\mathcal{W}} p_w\rho_{R_1 \cdots R_n E|w}\otimes \ketbra{w}{w}_W \,,\, \sum_{w\in\mathcal{W}} p_w\rho_{R|w}^{\otimes n} \otimes \rho_{E|w}\otimes \ketbra{w}{w}_W) \le 3\dim(\h_S)\sqrt{\frac{2n}{N}} \ .
\end{equation}
Finally, we define $\bar{\rho}_{R_1 \cdots R_{N/2} EW}:=\sum_{w\in\mathcal{W}} p_w\rho_{R_1 \cdots R_{N/2} E|w}\otimes \ketbra{w}{w}_W$. This state is an extension of $\rho_{R_1 \cdots R_{N/2} E}$. To verify this, we trace out~$W$. Using the fact that~\eqref{eq:povm} is a POVM on the symmetric subspace, we find
\begin{multline}
   \bar{\rho}_{R_1 \cdots R_{N/2} E} =\frac{C_{N/2}}{|\mathcal{W}|}\sum_{w\in\mathcal{W}} \tr_{R'_1\cdots R'_{N/2}S_{N/2+1}\cdots S_NE'}\,(\psi^{\otimes N/2}_{S|w}\cdot \Psi_{S_1 \cdots S_NF}) 
=\rho_{R_1 \cdots R_{N/2} E} \ .
\end{multline}
Theorem~\ref{thm:definetti_state} now follows immediately from~\eqref{eq:secondlaststep}.

\bibliographystyle{unsrt}
\bibliography{bh}

\begin{thebibliography}{100}

\bibitem{hawking1975particle}
Stephen~W Hawking.
\newblock Particle creation by black holes.
\newblock {\em Communications in Mathematical Physics}, 43(3):199--220, 1975.

\bibitem{hawking1976breakdown}
Stephen~W Hawking.
\newblock Breakdown of predictability in gravitational collapse.
\newblock {\em Physical Review D}, 14(10):2460, 1976.

\bibitem{hayden2007black}
Patrick Hayden and John Preskill.
\newblock Black holes as mirrors: quantum information in random subsystems.
\newblock {\em Journal of High Energy Physics}, 2007(09):120, 2007.

\bibitem{page1976particle}
Don~N. Page.
\newblock Particle emission rates from a black hole. {II}. {M}assless particles
  from a rotating hole.
\newblock {\em Phys. Rev. D}, 14:3260--3273, 1976.

\bibitem{page2013time}
Don~N Page.
\newblock Time dependence of {H}awking radiation entropy.
\newblock {\em Journal of Cosmology and Astroparticle Physics}, 2013(09):028,
  2013.

\bibitem{almheiri2020entropy}
Ahmed Almheiri, Thomas Hartman, Juan Maldacena, Edgar Shaghoulian, and
  Amirhossein Tajdini.
\newblock The entropy of {H}awking radiation.
\newblock {\em Rev. Mod. Phys.}, 93:035002, 2021.

\bibitem{page1993average}
Don~N Page.
\newblock Average entropy of a subsystem.
\newblock {\em Physical Review Letters}, 71(9):1291, 1993.

\bibitem{page1993information}
Don~N. Page.
\newblock Information in black hole radiation.
\newblock {\em Phys. Rev. Lett.}, 71:3743--3746, 1993.

\bibitem{penington2020entanglement}
Geoffrey Penington.
\newblock Entanglement wedge reconstruction and the information paradox.
\newblock {\em Journal of High Energy Physics}, 2020(9):1--84, 2020.

\bibitem{almheiri2019entropy}
Ahmed Almheiri, Netta Engelhardt, Donald Marolf, and Henry Maxfield.
\newblock The entropy of bulk quantum fields and the entanglement wedge of an
  evaporating black hole.
\newblock {\em Journal of High Energy Physics}, 2019(12):1--47, 2019.

\bibitem{almheiri2020page}
Ahmed Almheiri, Raghu Mahajan, Juan Maldacena, and Ying Zhao.
\newblock The {P}age curve of {H}awking radiation from semiclassical geometry.
\newblock {\em Journal of High Energy Physics}, 2020(3):1--24, 2020.

\bibitem{penington2019replica}
Geoff Penington, Stephen~H Shenker, Douglas Stanford, and Zhenbin Yang.
\newblock Replica wormholes and the black hole interior.
\newblock {\em arXiv:1911.11977}, 2019.

\bibitem{almheiri2020replica}
Ahmed Almheiri, Thomas Hartman, Juan Maldacena, Edgar Shaghoulian, and
  Amirhossein Tajdini.
\newblock Replica wormholes and the entropy of {H}awking radiation.
\newblock {\em Journal of High Energy Physics}, 2020(5):1--42, 2020.

\bibitem{calabrese2004entanglement}
Pasquale Calabrese and John Cardy.
\newblock Entanglement entropy and quantum field theory.
\newblock {\em Journal of Statistical Mechanics: Theory and Experiment},
  2004(06):P06002, 2004.

\bibitem{calabrese2006entanglement}
Pasquale Calabrese and John Cardy.
\newblock Entanglement entropy and quantum field theory: a non-technical
  introduction.
\newblock {\em International Journal of Quantum Information}, 4(03):429--438,
  2006.

\bibitem{lewkowycz2013generalized}
Aitor Lewkowycz and Juan Maldacena.
\newblock Generalized gravitational entropy.
\newblock {\em Journal of High Energy Physics}, 2013(8):1--29, 2013.

\bibitem{faulkner2013quantum}
Thomas Faulkner, Aitor Lewkowycz, and Juan Maldacena.
\newblock Quantum corrections to holographic entanglement entropy.
\newblock {\em Journal of High Energy Physics}, 2013(11):1--18, 2013.

\bibitem{hartle1976path}
James~B Hartle and Stephen~W Hawking.
\newblock Path-integral derivation of black-hole radiance.
\newblock {\em Physical Review D}, 13(8):2188--2203, 1976.

\bibitem{gibbons1977action}
Garry~W Gibbons and Stephen~W Hawking.
\newblock Action integrals and partition functions in quantum gravity.
\newblock {\em Physical Review D}, 15(10):2752--2756, 1977.

\bibitem{marolf2020transcending}
Donald Marolf and Henry Maxfield.
\newblock Transcending the ensemble: baby universes, spacetime wormholes, and
  the order and disorder of black hole information.
\newblock {\em Journal of High Energy Physics}, 2020(8):1--72, 2020.

\bibitem{marolf2020observations}
Donald Marolf and Henry Maxfield.
\newblock Observations of {H}awking radiation: the {P}age curve and baby
  universes.
\newblock {\em arXiv:2010.06602}, 2020.

\bibitem{marolf2021page}
Donald Marolf and Henry Maxfield.
\newblock The {P}age curve and baby universes.
\newblock {\em arXiv:2105.12211}, 2021.

\bibitem{troyer2017}
Sonika Johri, Damian~S Steiger, and Matthias Troyer.
\newblock Entanglement spectroscopy on a quantum computer.
\newblock {\em Phys. Rev. B}, 96:195136, 2017.

\bibitem{bousso2020gravity}
Raphael Bousso and Elizabeth Wildenhain.
\newblock Gravity/{E}nsemble duality.
\newblock {\em Physical Review D}, 102(6):066005, 2020.

\bibitem{giddings2020wormhole}
Steven~B Giddings and Gustavo~J Turiaci.
\newblock Wormhole calculus, replicas, and entropies.
\newblock {\em Journal of High Energy Physics}, 2020(9):1--18, 2020.

\bibitem{engelhardt2021free}
Netta Engelhardt, Sebastian Fischetti, and Alexander Maloney.
\newblock Free energy from replica wormholes.
\newblock {\em Physical Review D}, 103(4):046021, 2021.

\bibitem{akers2021leading}
Chris Akers and Geoff Penington.
\newblock Leading order corrections to the quantum extremal surface
  prescription.
\newblock {\em Journal of High Energy Physics}, 2021(4):1--73, 2021.

\bibitem{wang2021refined}
Jinzhao Wang.
\newblock The refined quantum extremal surface prescription from the asymptotic
  equipartition property.
\newblock {\em arXiv:2105.05892}, 2021.

\bibitem{Bekenstein1973}
Jacob~D. Bekenstein.
\newblock Black holes and entropy.
\newblock {\em Phys. Rev. D}, 7:2333--2346, 1973.

\bibitem{ryu2006holographic}
Shinsei Ryu and Tadashi Takayanagi.
\newblock Holographic derivation of entanglement entropy from the anti--de
  sitter space/conformal field theory correspondence.
\newblock {\em Physical review letters}, 96(18):181602, 2006.

\bibitem{ryu2006aspects}
Shinsei Ryu and Tadashi Takayanagi.
\newblock Aspects of holographic entanglement entropy.
\newblock {\em Journal of High Energy Physics}, 2006(08):045, 2006.

\bibitem{engelhardt2015quantum}
Netta Engelhardt and Aron~C Wall.
\newblock Quantum extremal surfaces: holographic entanglement entropy beyond
  the classical regime.
\newblock {\em Journal of High Energy Physics}, 2015(1):1--27, 2015.

\bibitem{chen2021evaporating}
Hong~Zhe Chen, Zachary Fisher, Juan Hernandez, Robert~C Myers, and Shan-Ming
  Ruan.
\newblock Evaporating black holes coupled to a thermal bath.
\newblock {\em Journal of High Energy Physics}, 2021(1):1--71, 2021.

\bibitem{chen2020quantum1}
Hong~Zhe Chen, Robert~C Myers, Dominik Neuenfeld, Ignacio~A Reyes, and Joshua
  Sandor.
\newblock Quantum extremal islands made easy. {P}art {I}. {E}ntanglement on the
  brane.
\newblock {\em Journal of High Energy Physics}, 2020(10):1--69, 2020.

\bibitem{giddings1988loss}
Steven~B Giddings and Andrew Strominger.
\newblock Loss of incoherence and determination of coupling constants in
  quantum gravity.
\newblock {\em Nuclear Physics B}, 307(4):854--866, 1988.

\bibitem{coleman1988black}
Sidney Coleman.
\newblock Black holes as red herrings: topological fluctuations and the loss of
  quantum coherence.
\newblock {\em Nuclear Physics B}, 307(4):867--882, 1988.

\bibitem{polchinski1994possible}
Joseph Polchinski and Andrew Strominger.
\newblock Possible resolution of the black hole information puzzle.
\newblock {\em Physical Review D}, 50(12):7403, 1994.

\bibitem{maldacena2004wormholes}
Juan Maldacena and Liat Maoz.
\newblock Wormholes in {AdS}.
\newblock {\em Journal of High Energy Physics}, 2004(02):053, 2004.

\bibitem{arkani2007}
Nima Arkani-Hamed, Jacopo Orgera, and Joseph Polchinski.
\newblock Euclidean wormholes in string theory.
\newblock {\em Journal of High Energy Physics}, 2007(12):018, 2007.

\bibitem{hudson1976locally}
Robin~L Hudson and Graham~R Moody.
\newblock Locally normal symmetric states and an analogue of de {F}inetti's
  theorem.
\newblock {\em Zeitschrift f{\"u}r Wahrscheinlichkeitstheorie und verwandte
  Gebiete}, 33(4):343--351, 1976.

\bibitem{caves2002unknown}
Carlton~M Caves, Christopher~A Fuchs, and R{\"u}diger Schack.
\newblock Unknown quantum states: the quantum de {F}inetti representation.
\newblock {\em Journal of Mathematical Physics}, 43(9):4537--4559, 2002.

\bibitem{renner2007symmetry}
Renato Renner.
\newblock Symmetry of large physical systems implies independence of
  subsystems.
\newblock {\em Nature Physics}, 3(9):645--649, 2007.

\bibitem{renner2008security}
Renato Renner.
\newblock Security of quantum key distribution.
\newblock {\em International Journal of Quantum Information}, 6(01):1--127,
  2008.

\bibitem{christandl2007one}
Matthias Christandl, Robert K{\"o}nig, Graeme Mitchison, and Renato Renner.
\newblock One-and-a-half quantum de {F}inetti theorems.
\newblock {\em Communications in Mathematical Physics}, 273(2):473--498, 2007.

\bibitem{konig2005finetti}
Robert K{\"o}nig and Renato Renner.
\newblock A de {F}inetti representation for finite symmetric quantum states.
\newblock {\em Journal of Mathematical physics}, 46(12):122108, 2005.

\bibitem{koenig2009most}
Robert Koenig and Graeme Mitchison.
\newblock A most compendious and facile quantum de {F}inetti theorem.
\newblock {\em Journal of Mathematical Physics}, 50(1):012105, 2009.

\bibitem{fannes2006finite}
Mark Fannes and Caroline Vandenplas.
\newblock Finite size mean-field models.
\newblock {\em Journal of Physics A: Mathematical and General}, 39(45):13843,
  2006.

\bibitem{brandao2011faithful}
Fernando~GSL Brandao, Matthias Christandl, and Jon Yard.
\newblock Faithful squashed entanglement.
\newblock {\em Communications in Mathematical Physics}, 306(3):805--830, 2011.

\bibitem{christandl2012reliable}
Matthias Christandl and Renato Renner.
\newblock Reliable quantum state tomography.
\newblock {\em Physical Review Letters}, 109(12):120403, 2012.

\bibitem{brandao2017quantum}
Fernando~GSL Brandao and Aram~W Harrow.
\newblock Quantum de {F}inetti theorems under local measurements with
  applications.
\newblock {\em Communications in Mathematical Physics}, 353(2):469--506, 2017.

\bibitem{lancien2017flexible}
C{\'e}cilia Lancien and Andreas Winter.
\newblock Flexible constrained de {F}inetti reductions and applications.
\newblock {\em Journal of Mathematical Physics}, 58(9):092203, 2017.

\bibitem{harrow2013church}
Aram~W Harrow.
\newblock The church of the symmetric subspace.
\newblock {\em arXiv:1308.6595}, 2013.

\bibitem{arnon2015finetti}
Rotem Arnon-Friedman and Renato Renner.
\newblock de {F}inetti reductions for correlations.
\newblock {\em Journal of Mathematical Physics}, 56(5):052203, 2015.

\bibitem{arnon2016non}
Rotem Arnon-Friedman, Renato Renner, and Thomas Vidick.
\newblock Non-signaling parallel repetition using de {F}inetti reductions.
\newblock {\em IEEE Transactions on Information Theory}, 62(3):1440--1457,
  2016.

\bibitem{muller2013quantum}
Martin M{\"u}ller-Lennert, Fr{\'e}d{\'e}ric Dupuis, Oleg Szehr, Serge Fehr, and
  Marco Tomamichel.
\newblock On quantum {R}{\'e}nyi entropies: A new generalization and some
  properties.
\newblock {\em Journal of Mathematical Physics}, 54(12):122203, 2013.

\bibitem{wilde2014strong}
Mark~M Wilde, Andreas Winter, and Dong Yang.
\newblock Strong converse for the classical capacity of entanglement-breaking
  and hadamard channels via a sandwiched {R}{\'e}nyi relative entropy.
\newblock {\em Communications in Mathematical Physics}, 331(2):593--622, 2014.

\bibitem{boas2011entire}
Ralph~Philip Boas.
\newblock {\em Entire Functions}.
\newblock Academic Press, 2011.

\bibitem{tomamichel2009fully}
Marco Tomamichel, Roger Colbeck, and Renato Renner.
\newblock A fully quantum asymptotic equipartition property.
\newblock {\em IEEE Transactions on Information Theory}, 55(12):5840--5847,
  2009.

\bibitem{bousso2020unitarity}
Raphael Bousso and Marija Toma{\v{s}}evi{\'c}.
\newblock Unitarity from a smooth horizon?
\newblock {\em Physical Review D}, 102(10):106019, 2020.

\bibitem{pollack2020eigenstate}
Jason Pollack, Moshe Rozali, James Sully, and David Wakeham.
\newblock Eigenstate thermalization and disorder averaging in gravity.
\newblock {\em Physical Review Letters}, 125(2):021601, 2020.

\bibitem{liu2021entanglement}
Hong Liu and Shreya Vardhan.
\newblock Entanglement entropies of equilibrated pure states in quantum
  many-body systems and gravity.
\newblock {\em PRX Quantum}, 2(1):010344, 2021.

\bibitem{saad2018semiclassical}
Phil Saad, Stephen~H Shenker, and Douglas Stanford.
\newblock A semiclassical ramp in {SYK} and in gravity.
\newblock {\em arXiv:1806.06840}, 2018.

\bibitem{saad2019jt}
Phil Saad, Stephen~H Shenker, and Douglas Stanford.
\newblock {JT} gravity as a matrix integral.
\newblock {\em arXiv:1903.11115}, 2019.

\bibitem{saad2019late}
Phil Saad.
\newblock Late time correlation functions, baby universes, and {ETH} in {JT}
  gravity.
\newblock {\em arXiv:1910.10311}, 2019.

\bibitem{stanford2019jt}
Douglas Stanford and Edward Witten.
\newblock {JT} gravity and the ensembles of random matrix theory.
\newblock {\em arXiv:1907.03363}, 2019.

\bibitem{song2012}
H.~Francis Song, Stephan Rachel, Christian Flindt, Israel Klich, Nicolas
  Laflorencie, and Karyn Le~Hur.
\newblock Bipartite fluctuations as a probe of many-body entanglement.
\newblock {\em Phys. Rev. B}, 85:035409, 2012.

\bibitem{keyl2001estimating}
Michael Keyl and Reinhard~F\ Werner.
\newblock Estimating the spectrum of a density operator.
\newblock {\em Physical Review A}, 64(5):052311, 2001.

\bibitem{acharya2019measuring}
Jayadev Acharya, Ibrahim Issa, Nirmal~V Shende, and Aaron~B Wagner.
\newblock Measuring quantum entropy.
\newblock In {\em 2019 IEEE International Symposium on Information Theory
  (ISIT)}, pages 3012--3016. IEEE, 2019.

\bibitem{fannes1973continuity}
Mark Fannes.
\newblock A continuity property of the entropy density for spin lattice
  systems.
\newblock {\em Communications in Mathematical Physics}, 31(4):291--294, 1973.

\bibitem{audenaert2007sharp}
Koenraad~MR Audenaert.
\newblock A sharp continuity estimate for the von {N}eumann entropy.
\newblock {\em Journal of Physics A: Mathematical and Theoretical},
  40(28):8127, 2007.

\bibitem{van2010building}
Mark Van~Raamsdonk.
\newblock Building up spacetime with quantum entanglement.
\newblock {\em General Relativity and Gravitation}, 42(10):2323--2329, 2010.

\bibitem{maldacena2013cool}
Juan Maldacena and Leonard Susskind.
\newblock Cool horizons for entangled black holes.
\newblock {\em Fortschritte der Physik}, 61(9):781--811, 2013.

\bibitem{sekino2008fast}
Yasuhiro Sekino and Leonard Susskind.
\newblock Fast scramblers.
\newblock {\em Journal of High Energy Physics}, 2008(10):065, 2008.

\bibitem{shenker2014black}
Stephen~H Shenker and Douglas Stanford.
\newblock Black holes and the butterfly effect.
\newblock {\em Journal of High Energy Physics}, 2014(3):1--25, 2014.

\bibitem{maldacena2016bound}
Juan Maldacena, Stephen~H Shenker, and Douglas Stanford.
\newblock A bound on chaos.
\newblock {\em Journal of High Energy Physics}, 2016(8):1--17, 2016.

\bibitem{cotler2017black}
Jordan~S Cotler, Guy Gur-Ari, Masanori Hanada, Joseph Polchinski, Phil Saad,
  Stephen~H Shenker, Douglas Stanford, Alexandre Streicher, and Masaki Tezuka.
\newblock Black holes and random matrices.
\newblock {\em Journal of High Energy Physics}, 2017(5):1--54, 2017.

\bibitem{hayden2016holographic}
Patrick Hayden, Sepehr Nezami, Xiao-Liang Qi, Nathaniel Thomas, Michael Walter,
  and Zhao Yang.
\newblock Holographic duality from random tensor networks.
\newblock {\em Journal of High Energy Physics}, 2016(11):1--56, 2016.

\bibitem{belyansky2020minimal}
Ron Belyansky, Przemyslaw Bienias, Yaroslav~A Kharkov, Alexey~V Gorshkov, and
  Brian Swingle.
\newblock Minimal model for fast scrambling.
\newblock {\em Physical Review Letters}, 125(13):130601, 2020.

\bibitem{krishnan2021hints}
Chethan Krishnan and Vyshnav Mohan.
\newblock Hints of gravitational ergodicity: Berry’s ensemble and the
  universality of the semi-classical page curve.
\newblock {\em Journal of High Energy Physics}, 2021(5):1--28, 2021.

\bibitem{swingle2016measuring}
Brian Swingle, Gregory Bentsen, Monika Schleier-Smith, and Patrick Hayden.
\newblock Measuring the scrambling of quantum information.
\newblock {\em Physical Review A}, 94(4):040302, 2016.

\bibitem{landsman2019verified}
Kevin~A Landsman, Caroline Figgatt, Thomas Schuster, Norbert~M Linke, Beni
  Yoshida, Norman~Y Yao, and Christopher Monroe.
\newblock Verified quantum information scrambling.
\newblock {\em Nature}, 567(7746):61--65, 2019.

\bibitem{brown2019quantum}
Adam~R Brown, Hrant Gharibyan, Stefan Leichenauer, Henry~W Lin, Sepehr Nezami,
  Grant Salton, Leonard Susskind, Brian Swingle, and Michael Walter.
\newblock Quantum gravity in the lab: teleportation by size and traversable
  wormholes.
\newblock {\em arXiv:1911.06314}, 2019.

\bibitem{nezami2021quantum}
Sepehr Nezami, Henry~W Lin, Adam~R Brown, Hrant Gharibyan, Stefan Leichenauer,
  Grant Salton, Leonard Susskind, Brian Swingle, and Michael Walter.
\newblock Quantum gravity in the lab: teleportation by size and traversable
  wormholes, part {II}.
\newblock {\em arXiv:2102.01064}, 2021.

\bibitem{bartlett2007}
Stephen~D Bartlett, Terry Rudolph, and Robert~W Spekkens.
\newblock Reference frames, superselection rules, and quantum information.
\newblock {\em Rev. Mod. Phys.}, 79:555--609, 2007.

\bibitem{hawking1982unpredictability}
Stephen~W Hawking.
\newblock The unpredictability of quantum gravity.
\newblock {\em Communications in Mathematical Physics}, 87(3):395--415, 1982.

\bibitem{jamiolkowski1972linear}
Andrzej Jamio{\l}kowski.
\newblock Linear transformations which preserve trace and positive
  semidefiniteness of operators.
\newblock {\em Reports on Mathematical Physics}, 3(4):275--278, 1972.

\bibitem{choi1975completely}
Man-Duen Choi.
\newblock Completely positive linear maps on complex matrices.
\newblock {\em Linear algebra and its Applications}, 10(3):285--290, 1975.

\bibitem{Aharonov1967}
Yakir Aharonov and Leonard Susskind.
\newblock Charge superselection rule.
\newblock {\em Phys. Rev.}, 155:1428--1431, 1967.

\bibitem{kitaev2004}
Alexei Kitaev, Dominic Mayers, and John Preskill.
\newblock Superselection rules and quantum protocols.
\newblock {\em Phys. Rev. A}, 69:052326, 2004.

\bibitem{dupuis2010decoupling}
Fr{\'e}d{\'e}ric Dupuis.
\newblock The decoupling approach to quantum information theory.
\newblock {\em arXiv:1004.1641}, 2010.

\bibitem{harlow2020factorization}
Daniel Harlow and Daniel Jafferis.
\newblock The factorization problem in {J}ackiw-{T}eitelboim gravity.
\newblock {\em Journal of High Energy Physics}, 2020(2):1--32, 2020.

\bibitem{kourkoulou2017pure}
Ioanna Kourkoulou and Juan Maldacena.
\newblock Pure states in the {SYK} model and nearly-${AdS}_2$ gravity.
\newblock {\em arXiv:1707.02325}, 2017.

\bibitem{dupuis2014one}
Fr{\'e}d{\'e}ric Dupuis, Mario Berta, J{\"u}rg Wullschleger, and Renato Renner.
\newblock One-shot decoupling.
\newblock {\em Communications in Mathematical Physics}, 328(1):251--284, 2014.

\bibitem{perez2015no}
Alejandro Perez.
\newblock No firewalls in quantum gravity: the role of discreteness of quantum
  geometry in resolving the information loss paradox.
\newblock {\em Classical and Quantum Gravity}, 32(8):084001, 2015.

\bibitem{bianchi2018white}
Eugenio Bianchi, Marios Christodoulou, Fabio d’Ambrosio, Hal~M Haggard, and
  Carlo Rovelli.
\newblock White holes as remnants: a surprising scenario for the end of a black
  hole.
\newblock {\em Classical and Quantum Gravity}, 35(22):225003, 2018.

\bibitem{martin2019evaporating}
Pierre Martin-Dussaud and Carlo Rovelli.
\newblock Evaporating black-to-white hole.
\newblock {\em Classical and Quantum Gravity}, 36(24):245002, 2019.

\bibitem{rovelli2019subtle}
Carlo Rovelli.
\newblock The subtle unphysical hypothesis of the firewall theorem.
\newblock {\em Entropy}, 21(9):839, 2019.

\bibitem{rovelli2020end}
Fabio D'Ambrosio, Marios Christodoulou, Pierre Martin-Dussaud, Carlo Rovelli,
  and Farshid Soltani.
\newblock End of a black hole's evaporation.
\newblock {\em Phys. Rev. D}, 103:106014, 2021.

\bibitem{christodoulou2015big}
Marios Christodoulou and Carlo Rovelli.
\newblock How big is a black hole?
\newblock {\em Physical Review D}, 91(6):064046, 2015.

\bibitem{maldacena2003eternal}
Juan Maldacena.
\newblock Eternal black holes in anti-de {S}itter.
\newblock {\em Journal of High Energy Physics}, 2003(04):021, 2003.

\bibitem{almheiri2013black}
Ahmed Almheiri, Donald Marolf, Joseph Polchinski, and James Sully.
\newblock Black holes: complementarity or firewalls?
\newblock {\em Journal of High Energy Physics}, 2013(2):1--20, 2013.

\bibitem{marolf2013gauge}
Donald Marolf and Joseph Polchinski.
\newblock Gauge-gravity duality and the black hole interior.
\newblock {\em Physical Review Letters}, 111(17):171301, 2013.

\bibitem{mathur2005fuzzball}
Samir~D Mathur.
\newblock The fuzzball proposal for black holes: An elementary review.
\newblock {\em Fortschritte der Physik: Progress of Physics}, 53(7-8):793--827,
  2005.

\bibitem{mathur2019resolving}
Samir~D Mathur.
\newblock Resolving the black hole causality paradox.
\newblock {\em General Relativity and Gravitation}, 51(2):24, 2019.

\bibitem{ashtekar2018quantum}
Abhay Ashtekar, Javier Olmedo, and Parampreet Singh.
\newblock Quantum transfiguration of {K}ruskal black holes.
\newblock {\em Physical Review Letters}, 121(24):241301, 2018.

\bibitem{raggio1995properties}
Guido~A Raggio.
\newblock Properties of $q$-entropies.
\newblock {\em Journal of Mathematical Physics}, 36(9):4785--4791, 1995.

\bibitem{seymour1984averaging}
Paul~D Seymour and Thomas Zaslavsky.
\newblock Averaging sets: a generalization of mean values and spherical
  designs.
\newblock {\em Advances in Mathematics}, 52(3):213--240, 1984.

\bibitem{roy2009unitary}
Aidan Roy and Andrew~J Scott.
\newblock Unitary designs and codes.
\newblock {\em Designs, Codes and Cryptography}, 53(1):13--31, 2009.

\end{thebibliography}
\end{document}